\newcommand{\N}{\mathbb{N}}
\newcommand{\R}{\mathbb{R}}
\newcommand{\Rmnum}[1]{\expandafter\@slowromancap\romannumeral #1@}
\newtheorem{theorem}{Theorem}[section]
\newtheorem{proposition}[theorem]{Proposition}
\newtheorem{lemma}[theorem]{Lemma}
\newtheorem{definition}[theorem]{Definition}
\newtheorem{corollary}[theorem]{Corollary}
\newtheorem{assumption}{Assumption}
\newtheorem*{claim}{Claim}
\numberwithin{equation}{section}
\title{Robust Data-Driven Decisions Under Model Uncertainty\footnote{This is the accepted version at EC$^{\prime}$22. I am deeply indebted to my advisors, Peter Klibanoff and Marciano Siniscalchi for constant guidance and support throughout the completion of this paper. I benefited greatly from discussions with Eran Shmaya and Isaias N. Chaves. I thank Federico Bugni, Ivan Canay, Shaowei Ke, Yijun Liu, Edwin Munoz Rodriguez, and Chen Zhao for insightful discussions. All errors are my own.}}
\author{Xiaoyu Cheng\footnote{
		Department of Managerial Economics and Decision Sciences, Kellogg School of Management, Northwestern University, Evanston, IL, USA. E-mail: xiaoyu.cheng@kellogg.northwestern.edu}
}
\begin{document}
	\maketitle
	
	\begin{abstract}
		When sample data are governed by an unknown sequence of independent but possibly non-identical distributions, the data-generating process (DGP) in general cannot be perfectly identified from the data. For making decisions facing such uncertainty, this paper presents a novel approach by studying how the data can best be used to \textit{robustly improve} decisions. That is, no matter which DGP governs the uncertainty, one can make a better decision than without using the data. I show that common inference methods, e.g., maximum likelihood and Bayesian updating cannot achieve this goal. To address, I develop new updating rules that lead to robustly better decisions either asymptotically almost surely or in finite sample with a pre-specified probability. Especially, they are easy to implement as are given by simple extensions of the standard statistical procedures in the case where the possible DGPs are all independent and identically distributed. Finally, I show that the new updating rules also lead to more intuitive conclusions in existing economic models such as asset pricing under ambiguity.\\
		
		\textit{JEL: C12, C44, D81, D83}
		
		\textit{Keywords: statistical decision, robustness, model uncertainty, ambiguity, updating.}  
	\end{abstract}
	
	\section{Introduction}
	In data-driven decisions, the extrapolation from sample data is often based on observable similarities within a population. However in many settings, one also needs to take into account possible unobserved heterogeneity across individuals in the population. For instance, online platforms such as Netflix and Google rely on users' feedback to make content recommendations. While they can categorize the users according to their online activities and profiles, the users' preferences are usually also determined by their various offline activities that cannot be observed by those online platforms. Physicians use conclusions from clinical trials to guide their medication decisions. But effectiveness of the drugs can also depend on underlying health or genetic conditions that may not be documented in those trials. In these setting, because the heterogeneity is unobservable, the decision-maker (DM) can be uncertain about not only how the individuals might vary but also how different individuals are going to be sampled from the population. As a result, she might worry about the possibility that the sample data are more from one type of individuals, whereas future draws that determine the payoff of decisions may be more of a different type.\footnote{For example, a well-documented problem in clinical trials is the fact that the population of clinical trial participants is often different from the actual population of patients, because the former population responds to either medical or financial incentives \citep{Manski2013}. When it comes to decisions, an article in \textit{Nature} \citep{Schork2015} finds that many drugs that are prescribed based on traditional clinical trials help only between 1 in 25 and 1 in 4 of the people who take them and sometimes can even be harmful.} More precisely, she might believe that the sample data and future uncertainty can potentially be governed by different probability distributions. 
	
	To capture this concern, in this paper, I consider a decision environment where the underlying data-generating process (DGP) is a sequence of independent but possibly non-identical distributions.\footnote{The independent here is more precisely conditional independence. Most of the results in this paper do not hinge on this assumption and can be suitably generalized to allow for correlations.} Specifically, the DM observes sample data given by realizations of marginal distributions of a DGP and then makes a decision whose payoff depends only on future realizations of the same DGP. As the DGP can be a sequence of non-identical distributions, which captures the possible heterogeneity across individuals, the sample data and future uncertainty indeed can be governed by different probability distributions. In addition, because the heterogeneity is unobserved, the DM will also face \textit{model uncertainty} in the sense that she is uncertain about which probabilistic model\footnote{in the terminology of, e.g., \cite{Hansen2014} and \cite{Marinacci2015}.}, or DGP here, governs the underlying uncertainty. Suppose that the DM initially only knows there is a set of possible DGPs and cannot for any probabilistic assessment over them. In other words, the model uncertainty here is represented by the ambiguity over a set of DGPs. 
	
	Notice that, with non-identical distributions, the true DGP in general cannot be perfectly identified from the sample data. Nonetheless, learning from the data can potentially help the DM to refine her initial knowledge about the underlying uncertainty, which in turn may lead to a better decision under the true DGP. Motivated by this intuition, I investigate the following novel question: How can the DM use sample data to robustly improve her decisions? That is, no matter which possible DGP governs the uncertainty, she can do better than without using the data. To answer this question, I make two main contributions in this paper. First, I characterize a crucial property for inference rules that is necessary and sufficient to guarantee the decisions to be robustly better. Second, I develop novel and tractable inference (updating) rules that satisfy this crucial property and thus lead to robustly better decisions. Importantly, I also show that common inference rules, such as maximum likelihood and Bayesian updating, cannot achieve this goal. Instead, they can almost surely lead to strictly worse decisions than simply ignoring the data.\footnote{See Section \ref{example} for an illustrative example.} 
	
	To be more precise, as the DM faces ambiguity over a set of DGPs, motivated by making a robust decision, I suppose that the DM applies the maxmin expected-utility (MEU) criterion \citep*{GILBOA1989141} to cope with the ambiguity. In other words, she does not know which DGP is true and thus makes an optimal decision considering the worst possible one. In the presence of sample data, the DM can choose to ignore the data and make a \textit{data-free decision} based only on her initial knowledge, i.e., optimal under the worst DGP among all those she initially contemplates. On the other hand, she can also use the data to update the initial set of DGPs to a potentially different \textit{updated set} of DGPs. The \textit{data-driven decision} is then formally defined as the optimal decision under the worst-case DGP among only those in the updated set. 
			
	Given these two types of decisions, I study updating rules in terms of how to guarantee the data-driven decisions to be better than the data-free decisions according to \textit{objective payoff}, i.e., the expected utility under the \textit{true} DGP that governs the future uncertainty. In other words, while the DM makes decisions considering the worst-case, the quality of her decisions will be evaluated against the ground truth. When an updating rule can guarantee improvement for all possible DGPs in the initial set, the data-driven decisions are indeed robustly better than the data-free decisions. 
	
	In this paper, I formalize two achievable notions that the data-driven decisions robustly improve upon the data-free decisions across decision problems. Specifically, in the first notion, I restrict attention to \textit{basic decision problems}, i.e., binary choices between an uncertain and a constant prospect. For instance, the decisions of whether or not to recommend a piece of content, or to prescribe a drug are both basic decision problems. The first main result of this paper, Theorem \ref{basic} shows that given an updated set of DGPs\footnote{It is assumed to  always be a subset of the initial set of DGPs.}, the data-driven decisions are objectively better than the data-free decisions across \textit{all} basic decision problems if and only if the updated set of DGPs \textit{accommodates the true DGP}. By definition, a set accommodates a DGP if the convex closure of its marginals over future experiments \textit{contains} those of the DGP. This results thus emphasizes that accommodating the true DGP is the only crucial property for an inference rule to satisfy in order to robustly improve decisions over basic decision problems. 
	
	For general decision problems, i.e., choices among an arbitrary set of prospects, Theorem \ref{impossible} shows that it is impossible to guarantee the data-driven decisions to be always better when the updated set is a non-singleton set. This applies to the current setting as the true DGP may not be perfectly identified. However, when the updated set accommodates the true DGP, Theorem \ref{certaintyequivalent} shows that the data-driven decision can still guarantee an objective improvement upon the \textit{data-free certainty equivalent} across all decision problems. The certainty equivalent is defined as the expected utility under the worst DGP in the initial set. Thus, it is also the DM's initial subjective expectation from her data-free decision. Intuitively, this notion simply says that the DM would prefer to receive the data and make a data-driven decision rather than take the certainty equivalent for sure. This second notion of improvement, in addition, also characterizes accommodating the true DGP. Thus, this paper establishes a three-way equivalence: the data-driven decision objectively improves over data-free decision across all basic decision problems if and only if it objectively improves over data-free certainty equivalent across all decision problems, if and only if the updated set of DGPs accommodates the true DGP (Corollary \ref{three}). 
	
	Given this equivalence,  I then study updating rules in terms of accommodating the truth (true DGP). Through an illustrative example (Section \ref{example}), I first show that common updating rules such as maximum likelihood and Bayesian updating that perform well in the case of i.i.d., can in fact rule our the truth asymptotically almost surely in the presence of non-identical DGPs.\footnote{For Bayesian updating, this means that the posterior distribution asymptotically almost surely concentrates on DGPs other than the truth.} Thus by the previous result, they can lead to strictly worse decisions than without using the data. To address such an issue, I first propose a new updating rule, \textit{average-then-update}, and show that it guarantees to accommodate the truth asymptotically almost surely (Theorem \ref{asymptotic}) also with non-identical DGPs. 
	
	Furthermore, I study the case of finite sample by considering updating rules that guarantee to accommodate the truth with a pre-specified probability or confidence level. By definition, the updated sets are effectively consistent confidence regions of the true DGP. In the presence of non-identical DGPs, constructing confidence regions can be computationally challenging. To this end, the present paper proposes a tractable and easy-to-implement method. Specifically, given any sample data, the proposed rule first applies standard statistical techniques to construct a confidence region as if the DGPs are independent and identically distributed (i.i.d.). Then the updating rule retains DGPs in the initial set whose \textit{average of sample marginals},  i.e., the average mixture of marginal distributions over the sample data, equals the marginal of an i.i.d. distribution in the previous confidence region. Theorem \ref{confident} confirms that such an updating rule can guarantee the updated sets to accommodate the truth with at least the same confidence level as the i.i.d. distributions under any possible non-identical DGP in the initial set. 
	
	Finally, the decision framework studied in this paper is often applied to model economic problems such as dynamic portfolio choice, asset pricing, and social learning under model uncertainty and under ambiguity. For those problems, the existing literature obtains conclusions primarily by assuming that the DM applies full Bayesian updating.\footnote{Some may also refer to it as prior-by-prior updating.} However, the asymptotic result under full Bayesian updating is often hard to solve. In a commonly studied model of learning from Gaussian signals with ambiguous variances, I show that applying the average-then-update rule reduces to a simple and intuitive step. More importantly, applying the average-then-update rule also implies that learning is significantly more effective than full Bayesian updating (Proposition \ref{gaussian}). This learning outcome proves to be more intuitive in such a model. In addition, I provide a more concrete illustration of both proposed updating rules in Section \ref{Bernoulli} by studying a Bernoulli model with ambiguous nuisance parameters. There, I show that the updated sets have tractable expressions and intuitive interpretations. \\
	
	\textbf{Outline.} Section \ref{literature} reviews related literature. Section \ref{example} provides an illustrative example to demonstrate the issue with maximum likelihood and Bayesian updating. Section \ref{setup} introduces the formal decision environment. Section \ref{characterization} provides the characterization results that relate improving decisions to accommodating the true DGP. Section \ref{updating} formalizes the notions of accommodating the truth and provides the according updating rules. Section \ref{application} presents two applications of the proposed updating rules. Section \ref{conclusion} concludes. 
	
	\subsection{Related Literature}\label{literature}
	This paper formulates and studies a statistical decision problem where the sample data is arbitrarily drawn from a population of heterogeneous individuals. The independent but possibly non-identical assumption is meant to capture the fact that not only the individuals are heterogeneous but also the sampling procedure is arbitrary and unknown.\footnote{If the sampling is random, then the heterogeneity can be simply captured by an unobserved random variable with identical distribution across individuals.} This paper, therefore, contributes directly to the literature on statistical decision theory pioneered by \cite{wald1950statistical}. Especially, the present paper proposes a novel approach to this fundamental problem by considering whether or not data robustly improve decisions. Thus more specifically, it follows the literature on robust statistical decisions surveyed in \cite{Watson2016} and \cite{Hansen2016}. In addition, by developing inference methods for making statistical decisions, this paper is also closely related to papers in the econometrics literature. See \cite{manski2021econometrics} and \cite{Stoye2012} for a survey to that strand of work. 
	
	This paper borrows techniques from the literature on decisions under ambiguity to model this decision environment. Specifically, the environment studied in this paper is a direct generalization of that used in \cite{Epstein2007}. They assume the DM applies the maximum likelihood updating rule. The present paper highlights possible concerns with this approach. In a recent paper, \citet*{Epstein2016} develop robust confidence regions when the possible data-generating processes are belief functions. Belief functions impose restrictions on the possible marginal distributions. In contrast, the environment studied in the present paper allows for arbitrary marginal distributions. But the major conceptual difference from their paper, and other papers on asymptotic learning under ambiguity, such as \cite{Marinacci2002} and \cite{MARINACCI2019144}, is that the present paper emphasizes the implications of learning for decision-making. 
	
	This paper also contributes to the literature on dynamic decisions under ambiguity by proposing new updating rules for sets of distributions. See \citet*{gilboa_marinacci_2013} and \citet*{cheng2019relative} for recent developments. The essential departure of the present paper from papers in this literature is that it evaluates decisions according to an objective criterion, i.e., expected utility under the true DGP. This new approach allows me to establish a characterization of accommodating the true DGP property based on robustly improving decisions. This property is basically a generalized notion of statistical consistency. Thus, the present paper, for the first time in this literature, introduces statistical consistency to dynamic decisions under ambiguity. The objective approach also resonates with some recent papers that study misspecified learning according to objective criteria such as \citet*{he2020evolutionarily}, \citet*{Frick2021}, and references therein. 
	
	Finally, this paper develops useful techniques for making robust inferences in the presence of independent but non-identical distributions. The updated set of DGPs is evidently a set-valued identification objective, thus the idea here is related to the partial identification literature \citep*{Tamer2010, canay/shaikh:2017, MOLINARI2020355}. However in partial identification, the underlying DGPs are usually assumed to be only i.i.d.. Therefore, the inference results in this paper can potentially be useful in generalizing the results there to allow for non-identical distributions. In addition, this paper can also contribute to the literature  on making inferences from clustered data \citep*{HANSEN2019268} where the underlying distribution is often known to be non-identical. 
	
	\section{Illustrative Example}\label{example}
	In this section, I present an illustrative example to demonstrate how maximum likelihood and Bayesian updating can asymptotically almost surely lead to strictly worse decisions. Then I will briefly illustrate how using the average-then-update rule proposed in this paper can resolve this problem. 
	
	This example is given in the context of making content recommendation based on user feedback. Suppose Netflix has the option to recommend either a \textit{Standard} movie or a \textit{Personalized} movie to some users. Given the recommended movie, each user responds by either a \textit{like} or a \textit{dislike}, which gives Netflix a payoff of $1$ or $0$, respectively. For simplicity, also denote these outcomes by $\{1,0\}$. 
	
	Suppose Netflix identifies a group of users with similar online behaviors and profiles, but is uncertain about the outcomes of recommending either movie to those users. Thus for recommending a movie to those users, Netflix will face uncertainty over the state space given by the (possibly infinite) product of the two possible outcomes, denoted by $\{1,0\}^{\infty}$. 
	
	First, suppose there is no model uncertainty or ambiguity for recommending the Standard movie. It is known to generate the outcome like (or $1$) with the same probability of $1/2$ independently across all the users. Formally, the data-generating process (DGP) governing the uncertainty associated with recommending the Standard movie is thus known and given by the i.i.d. distribution with a marginal probability of observing a like equals $1/2$. I denote it by $(1/2)^{\infty}$. 
	
	On the other hand, suppose Netflix is ambiguous about recommending the Personalized movie and it only knows that there can be two different possibilities: 
	
	\begin{itemize}
		\item \textbf{Possibility \Rmnum{1}:} The Personalized content is a good match with the users so that it induces each user to like it independently with a probability of at least $0.6$. However, due to unobservable heterogeneity across the users, such as their offline activities, the specific probability of observing a like from different users may not be the same. Given that the heterogeneous users may interact with Netflix at any point in time, Netflix's initial knowledge is therefore represented by the following set of DGPs: 
		\begin{equation*}
			[0.6, 1]^{\infty} \equiv \{P \in \Delta_{indep}(\{1,0\}^{\infty}) : P_{i} \in [0.6, 1], \forall i\},
		\end{equation*}
		where $\Delta_{indep}(\{1,0\}^{\infty})$ denotes the set of all probability distributions over $\{1,0\}^{\infty}$ that are independent across users, and $P_{i}$ denotes the marginal probability of observing a like from the $i$-th user under $P$. It is important to notice that this set includes all independent but non-identical distributions whose marginal probabilities of $1$ can be different across users but are always between $0.6$ and $1$. 
		
		\item \textbf{Possibility \Rmnum{2}:} The Personalized movie is a mismatch with the users. For simplicity, suppose in this case it  generates the outcome like with the same probability of $1/3$ independently across all the users. The DGP governing the uncertainty here is thus denoted by $(1/3)^{\infty}$. 
		
	\end{itemize}
	
	Taking both possibilities into account, Netflix then considers that the union, $[0.6, 1]^{\infty}\cup\{(1/3)^{\infty}\}$, gives the initial set of DGPs that represents the uncertainty associated with recommending the Personalized movie. Notice that the minimum expected payoff from recommending the Personalized movie to a user is $1/3$, strictly less than the expected payoff from recommending the Standard movie ($1/2$). Thus, Netflix's data-free decision will be to recommend the Standard movie to the users. 
	
	Next, suppose Netflix observes historical outcomes from having recommended Personalized movie to the users within this group. With the sample data, a common inference method is to look at the likelihood of observing the data under different DGPs. Formally, applying the maximum likelihood updating \citep{GILBOA199333} in this example is to retain only those DGPs in the initial sets that maximizes the likelihood of observing the data. 
	
	Consider the case where the true DGP governing the Personalized movie is $(1/3)^{\infty}$, i.e. the Personalized movie is a mismatch with the users. Then by the law of large numbers, almost surely, Netflix will eventually observe some sample data with an empirical frequency of like close to $1/3$. The critical observation here is that the true DGP $(1/3)^{\infty}$ does not maximize the likelihood of observing such data among all the possible DGPs in the initial set. 
	
	To be more specific, consider any sequence of realized outcomes with an empirical frequency of like exactly equals $1/3$. Given such a sequence, one can always find the DGP in the initial set whose marginal probability of observing a like from a user is $1$ if the realized outcome from that user is a like. Similarly, its marginal probability of observing a like from a user is $0.6$ if the realized outcome is a dislike. Notice that such a DGP maximizes the likelihood of observing the realized sequence among all the possible DGPs in the set $	[0.6, 1]^{\infty} $.  In addition, one can verify the following inequality is true: 
	\begin{equation*}
		(1/3)^{1/3}\times (2/3)^{2/3}  <  (1)^{1/3} \times (0.4)^{2/3}.
	\end{equation*}
	This inequality thus implies that the likelihood of observing this data under the true DGP (the left-hand side) is strictly less than the likelihood under the DGP just described (the right-hand side). Therefore, for any sample data with an empirical frequency of likes close to $1/3$, maximum likelihood updating will rule out $(1/3)^{\infty}$ from the updated set. As a result, when the true DGP is $(1/3)^{\infty}$, the updated sets under maximum likelihood updating will rule out the true DGP asymptotically almost surely.\footnote{While the present paper is, to the best of my knowledge, the first to make such an observation, it is intrinsically related to the infamous incidental parameter problem with using maximum likelihood to make estimations discovered by \cite{Neyman1948}. See \cite{LANCASTER2000391} for a review.} 
	
	Furthermore, notice that because all the possible DGPs are independent, data-driven decisions are determined by data solely through the updated set of DGPs.\footnote{The conditional marginal distributions over the future realizations are always the same as the unconditional marginal distributions.} When $(1/3)^{\infty}$ is ruled out from the updated set, the minimum expected payoff from recommending the Personalized movie according to the updated set is $0.6$, strictly higher than recommending the Standard movie. Therefore, the data-driven decision under maximum likelihood updating will asymptotically almost surely be to recommend the Personalized movie when the true DGP is $(1/3)^{\infty}$. However, the responses from future users are still governed by the true DGP. In other words, objective payoff from the data-driven decision will be $1/3$ from each user, strictly lower than what Netflix can get from simply ignoring the data and taking the data-free decision ($1/2$). The same conclusion also applies more generally to updating rules that based on maximum likelihood, such as relative maximum likelihood \citep*{cheng2019relative} and likelihood-ratio updating \citep*{Epstein2007}. 
	
	More importantly, a similar observation also applies when Netflix applies Bayesian updating. In order to apply Bayesian updating, Netflix needs to form a prior distribution over the initial set of DGPs. Without loss of generality, when Netflix uses a uniform prior, under Bayesian updating, the posterior probability of a DGP will then be proportional to the likelihood of observing the realized data. Therefore, similar to the case of maximum likelihood, the posterior probability of the true DGP will also be asymptotically driven to zero.\footnote{See Appendix \ref{examplediscussion} for a detailed discussion on how this conclusion is obtained and also for a discussion about what if Netflix instead uses a Bayesian decision criterion.} Although Bayesian updating is known to eventually almost surely concentrate on the true DGP when the underlying DGPs are all i.i.d. \citep{miller2018detailed}, this observation implies that this nice property in general cannot be guaranteed when the underlying DGPs can be non-identical.\footnote{\cite{choi2008remarks} talk about Bayesian consistency in the presence of independent but non-identical distributions. Their necessary and sufficient condition does not hold in this example. }
	
	Finally, another commonly suggested updating rule for sets of distributions is full Bayesian updating \citep{pires2002rule}. By definition, under full Bayesian updating, the updated set will retain every possible DGP.\footnote{Different from Bayesian updating, it does not require a prior distribution over the DGPs, but each DGP is the ``prior'' that will be updated using Bayes' rule conditional on the data received. Here, because of independence, applying Bayes' rule does not change the marginal distributions over the future events.} As a result, the data-driven decision is always the same as the data-free decision. In other words, full Bayesian updating is equivalent to simply ignoring the data in this example. This is also true in the general decision environment. 
	
	So far, I have shown that all the existing updating rules have undesirable features in terms of using data to improve decisions. I will now illustrate how the proposed updating rule, \textit{average-then-update}, can asymptotically almost surely lead to strictly better decisions. The intuitive idea of the average-then-update rule is that it retains all the DGPs in the initial set whose average of sample marginals is close enough to the \textit{empirical distribution} of the realized outcomes.\footnote{The formal definition is that the sup norm distance is less than some pre-specified $\epsilon > 0$.} For example, when observe data with an empirical frequency of like equals $0.8$, applying the average-then-update rule will retain, for instance, the i.i.d. distribution $(0.8)^{\infty}$ and non-identical distribution  $(0.6\times 1)^{\infty}$ whose marginals are $0.6$ for the odd users and $1$ for the even users. 
	
	The key intuition for applying this updating rule in this example is the fact that, under any independent but non-identical distribution, the empirical distribution converges to this average almost surely. In turn, it ensures that the true DGP will be asymptotically almost surely preserved in the updated set. 
	
	In terms of decisions, notice that when the true DGP is $(1/3)^{\infty}$, the data-driven decision under the average-then-update rule will asymptotically almost surely be to recommend the Standard movie. Thus, the objective payoff under the data-driven and data-free decisions will be the same. On the other hand, if the true DGP belongs to $	[0.6, 1]^{\infty} $, then asymptotically almost surely $(1/3)^{\infty}$ will be ruled out from the updated set, and the data-driven decision will be to recommend the Personalized movie. Notice that in this case, objective payoff from the data-driven decision will be strictly higher than the data-free decision. 
	
	In summary, by applying the average-then-update rule, Netflix's objective payoff from data-driven decisions in this example will asymptotically almost surely be either the same or strictly higher compared to its data-free decision. In other words, data indeed robustly improve decisions. Next, this paper moves beyond this example by formalizing the general decision environment and shows how this observation can be generalized.
	
	\section{Decision Environment}\label{setup}
	\subsection{State Space, Data, and Decision Problems}
	The state space is given by a countably infinite sequence of \textbf{random experiments}. They are ordered and indexed by the set $\N = \{1,2,\cdots\}$. Each experiment yields an outcome in a finite set $S$ with generic outcome denoted by $s_{j}$. Let $S_{i}$ denote the set of possible outcomes for the $i$-th experiment, although $S_{i} = S$ for all $i$. The full state space is $\Omega = S^{\infty}$. Let $\Sigma$ denote the discrete sigma-algebra on $S$ and $\Sigma^{\infty}$ the corresponding product sigma-algebra on $\Omega$. 
	
	For any sample size $N \in \N$, let $S_{N} \equiv \prod_{i=1}^{N}S_{i}$ denote the \textbf{sample experiments}. The \textbf{sample data} $\omega_{N} \in S_{N}$ thus are simply given by realizations of the sample experiments. Similarly, let $S^{N} \equiv \prod_{i=N+1}^{\infty} S_{i} $ denote \textbf{future experiments} and they will only realize after the decision-maker (DM) makes a decision. Similarly, I use $\Sigma_{N}$ and $\Sigma^{N}$ to denote the corresponding product sigma algebras. 
	
	The DM makes a decision by choosing an \textbf{act}. An act, denoted by$f: S^{N} \rightarrow [0,1]$, is formally a $\Sigma^{N}$-measurable function that maps outcomes of future experiments to utilities represented by real numbers in $[0,1]$. Let $\mathcal{F}^{N}$ denote the space of all such acts endowed with the product topology. 
	
	More specifically, the DM will face a \textbf{decision problem} $D^{N}$, which is defined as a compact subset of the acts, i.e., $D^{N} \subseteq \mathcal{F}^{N}$. In other words, the DM's decision is more formally the choice from a given decision problem. Let $\mathcal{D}^{N}$ denote the collection of all decision problems. 
	
	In addition, a special type of decision problem that will be important for the analysis of this paper is the \textbf{basic decision problem}. It is defined as a decision problem consisting of only two acts and one of them is constant, i.e., pays the same utility in all states. Let $D^{N}_{b} = \{f,x\}$ denote a generic basic decision problem and let $x \in [0,1]$ denote the constant act. Similarly, let $\mathcal{D}_{b}^{N}$ denote the collection of all basic decision problems. 
	
	\subsection{Data-Generating Process}
	The uncertainty over the experiments is governed by a \textbf{data generating process (DGP)}, a sequence of independent but possibly non-identical distributions over $\Omega$. Let $\Delta_{indep}(\Omega)$ denote the set of all countably additive probability measures over $\Omega$ that are independent across experiments. Let $P_{i}$ denote its marginal distribution over the $i$-th experiment $S_{i}$. Furthermore, for any $P$, let $P_{N}$ and $P^{N}$ denote its marginals over sample and future experiments, i.e. over $S_{N}$ and $S^{N}$ respectively.\footnote{To make a distinction, whenever referring to the marginal distribution over the $N$-th experiment, I will use $P_{i = N}$.} Then, by definition, for any $P \in \Delta_{indep}(\Omega)$, for all $N$, $\omega_{N}$, and $E \in \Sigma^{N}$, if $P(\omega_{N}) > 0$ then $P(E|\omega_{N}) = P^{N}(E)$. Equip $\Delta_{indep}(\Omega)$ with the weak* topology. 
	
	The DM's initial knowledge about the underlying uncertainty is represented by an \textbf{initial set}, which is a compact subset of $\Delta_{indep}(\Omega)$ denoted by $\mathcal{P}$. Furthermore, assume that all the probability measures in the initial set have full support.\footnote{All the results can be adjusted accordingly without the full-support assumption. Furthermore, it is a standard assumption in updating sets of distributions because full Bayesian updating is rigorously defined and characterized only when it holds. } Let $\mathcal{P}_{i}$, $\mathcal{P}_{N}$ and $\mathcal{P}^{N}$ denote the corresponding sets of marginals. The following primitive assumption simply says that every possible DGP in the initial set can be the true one that governs the underlying uncertainty. 
	\begin{assumption}\label{assumption1}
		$\mathcal{P}$ is the set of all possible DGPs. 
	\end{assumption}
	This assumption is a counterpart of the ``grain of truth'' assumption in the Bayesian learning literature \citep{Kalai1993}. Although it is less demanding in the sense that it only requires the DM to include the possible DGPs in the initial set but does not have to form any probabilistic assessment. While every DGP in the initial set can be the true one governing the underlying uncertainty, I will use $P^{*}$ to denote the DGP that governs the data in context to distinguish it from other DGPs in the initial set. 
	
	\subsection{Decision Rules and Payoffs}
	Given a decision problem $D^{N}$, the DM can make a \textbf{data-free decision} based only on her initial knowledge about the underlying uncertainty. Let $c(D^{N})$ denote the DM's data-free decision, which represents her choice from $D^{N}$ according to the maxmin expected-utility (MEU) criterion according to the initial set of DGPs. Formally, 
	\begin{equation*}
		c(D^{N}) \equiv \arg\max\limits_{f \in D^{N}}\min\limits_{P \in \mathcal{P}} \int_{S^{N}} f(\omega)dP^{N}(\omega), 
	\end{equation*}
	where $c(D^{N})$ is always a singleton given by some arbitrary tie-breaking rule. 
	
	With sample data $\omega_{N}$, the DM can also use the data to update the initial set of DGPs to an \textbf{updated set} of DGPs. Let $\mathcal{P}(\omega_{N}) \subseteq \Delta_{indep}(\Omega)$ denote the updated set, which depends on only the initial set $\mathcal{P}$ and data $\omega_{N}$. Importantly, it cannot depend on the specific decision problem. This is the standard \textit{consequentialist} property of an updating rule.\footnote{For example, both maximum likelihood and Bayesian updating are consequentialist. For the exact definition and an extensive discussion, see \citet{hanany2007updating, hanany2009updating} and \citet{Siniscalchi2009-SINTOO-5}.} Moreover, suppose the updated set satisfies the following primitive assumption: 
	\begin{assumption}\label{assumption3}
		$\mathcal{P}(\omega_{N}) \subseteq \mathcal{P}$.
	\end{assumption}
	This assumption should be intuitive given Assumption \ref{assumption1}. If the initial set always contains the true DGP, then there is no reason to consider additional distributions in updating. 
	
	The \textbf{data-driven decision} is then defined as the MEU decision according to the updated set of DGPs. Formally, let $c(D^{N}, \omega_{N})$ represent the DM's choice from $D^{N}$ conditional on $\omega_{N}$: 
	\begin{equation*}
		c(D^{N}, \omega_{N}) \equiv \arg\max\limits_{f \in D^{N}}\min\limits_{P \in \mathcal{P}(\omega_{N})} \int_{S^{N}} f(\omega)dP^{N}(\omega),
	\end{equation*}
	where $c(D^{N}, \omega_{N})$ is also always a singleton given by any tie-breaking rule that satisfies the following consistency assumption: 
	
	\begin{assumption}\label{assumption4}
		If $c(D^{N})$ also maximizes the minimum expected-utility according to the updated set, then $ c(D^{N},\omega_{N}) = c(D^{N})$. Otherwise, the tie-breaking can be arbitrary. 
	\end{assumption}
	
	This consistency assumption has a convenient implication: If the DM's updated set coincides with the initial set, then her data-driven and data-free decisions will always be the same. This rules out the uninteresting complication that arises when the DM's decisions are different purely because of the tie-breaking rules. 
	
	The DM's \textbf{objective payoff} from her decision will be determined by the true DGP that governs the future experiments. To make the dependence of the true DGP more explicit, I use $W(f, P^{*})$ to denote objective payoff from act $f$ when the true DGP is $P^{*}$, i.e., 
	
	\begin{equation*}
		W(f, P^{*}) = \int_{S^{N}} f(\omega) dP^{*N}(\omega).
	\end{equation*}
	For any decision problem $D^{N}$, the DM's objective payoffs from her data-free and her data-driven decisions are then denoted by $W(c(D^{N}), P^{*}) $ and $W(c(D^{N},\omega_{N}), P^{*})$, respectively. 
	
	Because the decisions depend only on the future experiments and the DM applies the MEU criterion, any set of DGPs will lead to the same decisions as the closed and convex hull of the set of its marginals over future experiments. As a result, when talking about the relation between DGPs and sets of DGPs, I will generalize the notion of containing accordingly. For any set, let $co(\cdot)$ denote its closed and convex hull. Say that the updated set $\mathcal{P}(\omega_{N})$ \textbf{accommodates} the DGP $P^{*}$ if $P^{*N} \in co(\mathcal{P}(\omega_{N})^{N})$. Similarly, the updated set $\mathcal{P}(\omega_{N})$ \textbf{refines} the initial set $\mathcal{P}$ if $co(\mathcal{P}(\omega_{N})^{N}) \subsetneqq co(\mathcal{P}^{N})$. 
	
	Finally, an additional comment about the definitions of the decision rules is that they are defined as choices from $D^{N}$ but not from all the probability distributions over $D^{N}$. This definition is, in fact, more general because adding the probability distributions into $D^{N}$ would be another well-defined decision problem. Thus, the current setup allows for richer decision patterns. All the results in this paper can also be obtained using the other definition. Moreover, the current definition can also be interpreted by the phenomenon that the DM does not believe her own randomization can hedge against ambiguity as characterized in \cite{Saito2015} and \cite{Ke2020}. 
	
	\section{Characterization Results}\label{characterization}
	In this section, I study a ``static'' comparison of data-free and data-driven decisions by fixing some true DGP $P^{*}$ and sample data $\omega_{N}$. Specifically, given some updated set $\mathcal{P}(\omega_{N})$, the problem simply reduces to the comparison between two MEU decisions, one under $\mathcal{P}$ and one under $\mathcal{P}(\omega_{N})$,  according to the expected utility under some $P^{*} \in \mathcal{P}$. 
	
	\subsection{Basic Decision Problems}
	Recall in the illustrative example, Netflix faces a basic decision problem given by the choice between recommending a Standard movie (constant act) and a Personalized movie (ambiguous act). For this specific decision problem, the illustrative example has shown that if the DM's updated set rules out the true DGP (maximum likelihood updating), then the data-driven decision will be objectively worse than the data-free decision. In contrast, if the updated set contains the true DGP (average-then-update), then the data-driven decision will be objectively better. 
	
	This subsection first investigates whether this observation can be generalized to across all basic decision problems. To formalize this idea, consider the following definition. 
	
	\begin{definition}
		Given an updated set $\mathcal{P}(\omega_{N})$, the data-driven decision objectively dominates the data-free decision across \textbf{all basic decision problems} under the DGP $P^{*}$ if for all $D^{N}_{b} \in \mathcal{D}^{N}_{b}$, 
		\begin{equation*}
			W(c(D^{N}_{b}, \omega_{N}), P^{*}) \geq W(c(D^{N}_{b}), P^{*}).
		\end{equation*}
		The dominance is strict if there exists $D_{b}^{N}$ such that the strict inequality holds. 
	\end{definition}
	
	When this definition holds, then the DM will be certain that in any basic decision problem, her data-driven decision can always guarantee a higher objective payoff. Notice that while the DM makes decisions considering the worst-case distribution, what this definition ensures is the improvement under the \textit{true} DGP, which is not necessarily the worst-case. The key to such an improvement, as shown in the following theorem, is precisely that the updated set accommodates the true DGP. 
	
	\begin{theorem}\label{basic}
		Given an updated set $\mathcal{P}(\omega_{N})$, the data-driven decision objectively dominates the data-free decision across all basic decision problems under the DGP $P^{*}$ if and only if $\mathcal{P}(\omega_{N})$ accommodates $P^{*}$. Moreover, the dominance is strict if and only if the updated set $\mathcal{P}(\omega_{N})$ refines the initial set $\mathcal{P}$. 
	\end{theorem}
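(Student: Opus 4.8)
The plan is to push everything down to the geometry of two nested, compact, convex sets of probability measures on the future experiments and to read off both claims from the support-function characterization of the closed convex hull. Write $q^{*} = P^{*N}$, $\mathcal{Q} = co(\mathcal{P}^{N})$, and $\mathcal{Q}' = co(\mathcal{P}(\omega_{N})^{N})$, all viewed inside the set of probability measures on $S^{N}$. Since $S^{N}$ is a countable product of finite sets it is compact metrizable, so this measure space is weak* compact and $\mathcal{Q},\mathcal{Q}'$ are compact and convex, with $\mathcal{Q}' \subseteq \mathcal{Q}$ by Assumption~\ref{assumption3} and $q^{*} \in \mathcal{Q}$ because $P^{*} \in \mathcal{P}$ (Assumption~\ref{assumption1}). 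As the DM uses MEU and acts depend only on future experiments, the only relevant quantities attached to an act $f$ are $m(f) = \min_{Q\in\mathcal{Q}}\int f\,dQ$ and $m'(f) = \min_{Q\in\mathcal{Q}'}\int f\,dQ$, both attained by compactness, with $m(f)\le m'(f)$; and accommodation is exactly the statement $q^{*}\in\mathcal{Q}'$.

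First I would pin down, for a basic problem $\{f,x\}$, when the two rules can yield different objective payoffs. The data-free rule selects $f$ iff $m(f)\ge x$ and the data-driven rule selects $f$ iff $m'(f)\ge x$, up to ties. Using $m(f)\le m'(f)$ together with the consistency tie-break (Assumption~\ref{assumption4}), the configuration \emph{data-free picks $f$, data-driven picks $x$} is impossible: it would force $m(f)=m'(f)=x$, whereupon $f$ also maximizes the data-driven objective and Assumption~\ref{assumption4} makes the two decisions coincide. Hence the only payoff-relevant disagreement is \emph{data-free picks $x$, data-driven picks $f$}, and the same tie argument shows this requires $m(f)\le x< m'(f)$ (strict at the top), contributing the payoff difference $\int f\,dq^{*}-x$. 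The equivalence now follows. If $q^{*}\in\mathcal{Q}'$ then $\int f\,dq^{*}\ge m'(f)$ for every $f$, so in the only disagreement case $\int f\,dq^{*}\ge m'(f)>x$ and domination holds. Conversely I would argue contrapositively: if $q^{*}\notin\mathcal{Q}'$, strict separation of the point $q^{*}$ from the disjoint compact convex set $\mathcal{Q}'$ produces a continuous functional, rescaled affinely into an act $f$ valued in $[0,1]$, with $\int f\,dq^{*}< m'(f)$. The clean point is that $q^{*}\in\mathcal{Q}$ forces $m(f)\le\int f\,dq^{*}< m'(f)$, so one may pick a constant $x$ with $\int f\,dq^{*}< x< m'(f)$; then in $\{f,x\}$ the data-free rule picks $x$ and the data-driven rule picks $f$ with strictly smaller objective payoff, breaking domination.

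For the ``moreover'' claim I would keep accommodation as a standing hypothesis and characterize strict dominance by refinement $\mathcal{Q}'\subsetneqq\mathcal{Q}$. If some basic problem exhibits strict dominance, the disagreement analysis produces an act with $m(f)<m'(f)$, i.e. a point of $\mathcal{Q}$ attaining a value below $m'(f)$ and therefore lying outside $\mathcal{Q}'$, so $\mathcal{Q}'\subsetneqq\mathcal{Q}$. Conversely, given $\hat{Q}\in\mathcal{Q}\setminus\mathcal{Q}'$, separating $\hat{Q}$ from $\mathcal{Q}'$ yields an act with $m(f)\le\int f\,d\hat{Q}< m'(f)$; choosing $x\in(m(f),m'(f))$ and invoking $\int f\,dq^{*}\ge m'(f)>x$ (from accommodation) gives strict domination in $\{f,x\}$.

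The main obstacle, and the only step that is not bookkeeping, is the separation argument: one must confirm that the weak* topology on measures over $S^{N}$ is locally convex Hausdorff with dual $C(S^{N})$, that compactness makes the minima attainable and the separation strict, and, crucially, that the separating continuous functionals can be replaced by genuine acts (measurable maps into $[0,1]$) without losing the inequalities, which holds because an increasing affine rescaling preserves both the strict comparison and the location of minimizers. The remaining delicacy is the disciplined handling of ties via Assumption~\ref{assumption4}; the observation that $q^{*}\in\mathcal{Q}$ rules out the boundary case $m(f)=m'(f)$ is precisely what lets the converse direction close without an extra perturbation.
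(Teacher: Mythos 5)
Your proposal is correct and follows essentially the same route as the paper's proof: reduce to two nested closed convex sets of marginals, observe that the only payoff-relevant disagreement in a basic problem is data-free choosing the constant act while data-driven chooses the ambiguous act, derive the ``if'' direction from $\int f\,dP^{*}\ge\min_{P\in\mathcal{P}(\omega_{N})}\int f\,dP > x$, and prove the converse by strictly separating $P^{*N}$ from $co(\mathcal{P}(\omega_{N})^{N})$ to build a counterexample $\{f,x\}$. Your treatment of the tie-breaking via Assumption~\ref{assumption4} and your explicit separation argument for the ``moreover'' claim merely spell out steps the paper states more tersely.
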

	
	The proof of this theorem relies on the following observation. Whenever the data-free and data-driven decisions are different for a basic decision problem, it needs to be the case that the data-free decision chooses the constant act and the data-driven decision chooses the ambiguous act. This is further implied by Assumption \ref{assumption3}, i.e., $\mathcal{P}(\omega_{N})$ is always a subset of $\mathcal{P}$. Then the ``if'' direction is shown by noticing that objective payoff from the ambiguous act is weakly higher than the minimum expected payoff under the updated set, which by presumption is also weakly higher than the payoff from the constant act. 	
	
	Intuitively, for a basic decision problem, choosing the ambiguous act over the constant act reflects confidence in knowledge about the underlying uncertainty. As data necessarily enhance such confidence (for the updated set is always a subset), accommodating the true DGP requires the confidence to be enhanced towards a correct direction. In terms of decisions, this correct direction further implies that choosing the ambiguous act in this case is always objectively profitable. \\
	
	\textbf{Remark.} This characterization result crucially relies on the DM applies the maxmin expected-utility criterion. Appendix \ref{regret} presents an example where if the DM applies the minimax regret criterion\footnote{For the exact definition and reference, see \citet{STOYE20112226}.}, then even when the updated set accommodates the true DGP, objective payoff can still be strictly higher from the data-free decision. This observation, on the other hand, provides a new motivation for using the maxmin expected-utility criterion for making robust decisions. 
	
	\subsection{General Decision Problems}
	While Theorem \ref{basic} provides a characterization for basic decision problems, it does not directly provide any implication for non-basic decision problems. Given Theorem \ref{basic}, one might be tempted to come up with the following intuition: given the updated set is closer to the true DGP than the initial set in the sense that it refines the initial set and also accommodates the true DGP, it is intuitive that the maxmin decisions under the smaller set would be always better than under the larger set. In this subsection, I will show that, however, such an intuition cannot be generalized to across \textit{all} decision problem. 
	
	Similarly, consider the following definition: 
	\begin{definition}
		Given an updated set $\mathcal{P}(\omega_{N})$, the data-driven decision objectively dominates the data-free decision across \textbf{all decision problems} under the DGP $P^{*}$ if for all $D^{N} \in \mathcal{D}^{N}$, 
		\begin{equation*}
			W(c(D^{N}, \omega_{N}), P^{*}) \geq W(c(D^{N}), P^{*}).
		\end{equation*}
		The dominance is strict if there exists $D^{N}$ such that the strict inequality holds. 
	\end{definition}
	
	First notice that this definition is feasible when the true DGP is uniquely identified from data, i.e., when $\mathcal{P}(\omega_{N}) = \{P^{*}\}$. In this case, the data-driven decision becomes the exact optimal decision under $P^{*}$, and therefore objective dominance across all decision problems holds. 
	
	In the current environment, however, the true DGP may only be \textit{partially identified} from any possible data. Because when the possible DGPs are independent but non-identically distributed, observing only one realization from each experiment may not be sufficient to distinguish between different DGPs. More importantly, even in a rare case where the DM can perfectly observe the marginal distribution $P^{*}_{N}$, there may still be multiple possible DGPs in the initial set that have the same marginal over sample experiments but different marginals over the future.\footnote{In the illustrative example, this is true when $P^{*} \in [0.6, 1]^{\infty}$.} As a result, the updated set may at best be a non-singleton set, and thus the true DGP can only be partially identified. 
	
	Then in this case, objective dominance is meaningful only when it can hold under multiple DGPs, because any one of them may be the one that governs future experiments. However, the following theorem asserts that it is impossible if data ever changes decision (if data does not change the decision at all, then it is trivially true with always equalities). 
	
	\begin{theorem}\label{impossible}
		Given an updated set $\mathcal{P}(\omega_{N})$, if it refines the initial set $\mathcal{P}$, then objective dominance across all decision problems can hold under at most one DGP (in terms of marginal over future experiments). 
	\end{theorem}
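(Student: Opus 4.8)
The plan is to argue by contradiction: suppose objective dominance across all decision problems holds under two DGPs whose marginals over future experiments are distinct, say $P^{*N}\neq \tilde P^{N}$, and derive a contradiction with the refinement hypothesis. Write $Q:=co(\mathcal P^{N})$ and $Q':=co(\mathcal P(\omega_{N})^{N})$, so that by hypothesis $Q'\subsetneq Q$. The first step is a reduction that imports Theorem \ref{basic}: since basic decision problems are a special case, dominance across \emph{all} problems implies dominance across all \emph{basic} problems, so Theorem \ref{basic} forces both $P^{*N}$ and $\tilde P^{N}$ to lie in $Q'$. Thus I may assume two distinct points $\mu^{*}:=P^{*N}$ and $\tilde\mu:=\tilde P^{N}$ in $Q'$, while $Q'$ is a proper subset of $Q$. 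Because the two marginals differ, they already differ on some finite block of future experiments $S^{k}:=\prod_{i=N+1}^{N+k}S_{i}$, and since any $\mu_{0}\in Q\setminus Q'$ projects outside the projection of $Q'$ for $k$ large, the proper inclusion $Q'\subsetneq Q$ survives on that block. I therefore pass to the finite-dimensional projection on $S^{k}$ and work with acts and beliefs there, where all sets are compact convex subsets of a Euclidean space and separation is available.

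The heart of the proof is to exhibit a \emph{single} decision problem on which the data-driven choice is objectively worse than the data-free choice under at least one of $\mu^{*},\tilde\mu$; by symmetry of their roles this shows the two cannot simultaneously enjoy dominance, so at most one future marginal can. For any problem $D^{N}$ with data-free optimizer $f^{*}=c(D^{N})$ and data-driven optimizer $g^{*}=c(D^{N},\omega_{N})$, dominance under a marginal $\mu$ is exactly the requirement that the \emph{objective improvement} $\int_{S^{k}}(g^{*}-f^{*})\,d\mu\ge 0$. The goal is thus to construct a problem whose improvement direction $g^{*}-f^{*}$ integrates against $\mu^{*}$ and against $\tilde\mu$ with \emph{opposite} signs. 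The obstacle this must overcome is that the improvement directions coming from basic problems are inert: since $\mu^{*},\tilde\mu\in Q'$, a direction of the form $f_{0}-x$ (an act versus a constant $x$ below its updated value) integrates strictly positively against both points. This is precisely why Theorem \ref{basic} alone does not settle the matter and why genuinely non-basic menus are required.

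To build an informative improvement direction I would use \emph{convex} (equivalently, randomized) menus, which the setup explicitly permits, so that the binding worst-case beliefs are governed by minimax saddle points rather than by exposed points. Concretely, pick $\mu_{0}\in Q\setminus Q'$ and a separating act $f_{0}$ with $\int f_{0}\,d\mu_{0}<\min_{\mu\in Q'}\int f_{0}\,d\mu$; the proper inclusion guarantees such $\mu_{0}$ and $f_{0}$ exist. Consider the segment menu $D=\{(1-\lambda)a+\lambda b:\lambda\in[0,1]\}$ built from acts $a,b$ whose difference is a controlled tilt of $f_{0}$ in a direction separating $\mu^{*}$ from $\tilde\mu$. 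For such a menu the data-free value $\max_{\lambda}\min_{\mu\in Q}\int[(1-\lambda)a+\lambda b]\,d\mu$ is a concave–convex minimax whose saddle worst-case I can steer to $\mu_{0}$, while the data-driven worst-case, being the minimizer over the smaller set $Q'$ of the menu's support function, can be placed at an interior point of $Q'$. Computing the two best responses and taking the tilt small, the improvement direction $g^{*}-f^{*}$ is, to first order, the tilt direction itself; choosing its component along $\mu^{*}-\tilde\mu$ with the appropriate sign makes $\int(g^{*}-f^{*})\,d\tilde\mu<0$ while $\int(g^{*}-f^{*})\,d\mu^{*}\ge 0$, contradicting dominance under $\tilde P$.

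The step I expect to be the main obstacle is reconciling two competing demands on the tilt: it must keep $\mu_{0}\in Q\setminus Q'$ the \emph{binding} data-free worst-case (so that data genuinely changes the decision and the improvement is not vacuous), while simultaneously pointing partly along $\mu^{*}-\tilde\mu$ (so that the improvement separates the two candidate truths). The first demand confines the tilt to the cone of functionals separating $\mu_{0}$ from $Q'$; the second requires a prescribed component transverse to the basic-problem directions. The technical core is therefore a dimension-counting argument showing that, because $Q'$ is a \emph{proper} subset of $Q$, this separating cone is large enough to contain a tilt with the required transverse component, together with a careful directional-derivative computation of the concave lower envelopes $\min_{\mu\in Q}$ and $\min_{\mu\in Q'}$ to verify that the induced change in the optimal act has the claimed first-order direction and that $f^{*}$ and $g^{*}$ genuinely differ.
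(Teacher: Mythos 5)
Your reduction via Theorem \ref{basic} is sound, and your diagnosis that any violating menu must be genuinely non-basic (since improvement directions of the form $f_{0}-x$ integrate positively against every point of $co(\mathcal{P}(\omega_{N})^{N})$ once both candidate marginals lie there) is exactly right. The gap is that the construction you then outline is not carried out, and the step you defer is precisely the mathematical content of the theorem. For your segment menu the improvement is exactly $(\lambda'-\lambda)\int(b-a)\,d\mu$, so everything reduces to producing $a$ and $h=b-a$ such that (i) $\int h\,d\mu^{*}$ and $\int h\,d\tilde\mu$ have strictly opposite signs and (ii) the maximizers of $\lambda\mapsto\min_{\mu\in Q}\int(a+\lambda h)\,d\mu$ and $\lambda\mapsto\min_{\mu\in Q'}\int(a+\lambda h)\,d\mu$ genuinely differ (in a way not absorbed by the tie-breaking in Assumption \ref{assumption4}). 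Nothing in the proposal shows (i) and (ii) are compatible: the ``steering'' of the data-free worst case to $\mu_{0}$, the placement of the data-driven worst case at an interior point, and the ``dimension-counting'' argument are all asserted, and you flag them yourself as the main obstacle. The difficulty is real, because Lemma \ref{objectivedominancelemma} in the appendix shows that objective dominance under a single $P^{*}$ \emph{is} compatible with strict refinement, exactly when $co(\mathcal{P}(\omega_{N})^{N})=\alpha P^{*N}+(1-\alpha)co(\mathcal{P}^{N})$; any valid construction must therefore exploit the failure of this mixture structure for at least one of the two points. A first-order tilt argument that never engages with this geometry cannot, as written, rule out that the two optimal $\lambda$'s coincide for every admissible tilt --- for instance when the data-free optimum sits at a kink of $\lambda\mapsto\min_{\mu\in Q}\int(a+\lambda h)\,d\mu$ whose superdifferential is wide enough to absorb the perturbation $\alpha\int h\,d\mu^{*}$.

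For contrast, the paper does not argue by perturbation at all: it first proves the sharp characterization just mentioned (dominance across all decision problems under $P^{*}$ holds iff the updated set of future marginals equals $\alpha P^{*N}+(1-\alpha)co(\mathcal{P}^{N})$ for some $\alpha\in[0,1]$), whose ``only if'' direction contains the needed two-act menu $\{f,g\}$, built from a tangency point of $co(\mathcal{P}(\omega_{N})^{N})$ inside the largest such mixture set together with a separating act and a normalization. Theorem \ref{impossible} then follows in one line, since two distinct mixture centers force $\alpha=0$. To salvage your route you would need to prove a statement of comparable strength for your segment menus; as it stands the proposal is a plan rather than a proof.
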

	
	For an intuition of this theorem, I refer directly to the proof in Appendix \ref{impossible_proof}. This theorem implies that, if the true DGP is only partially identified, then the DM can never use the data in a way to guarantee her data-driven decisions to be objectively better across all decision problems under multiple possible DGPs. Moreover, it also suggests that, even when the updated set is closer to the true DGP than the initial set, there still exist decision problems in which the maxmin decision is objectively better under the larger set.\footnote{See Lemma \ref{objectivedominancelemma} in Appendix \ref{impossible_proof} for the precise statement and intuition.} The previous intuition, therefore, applies to only basic decision problems. As a result, for a non-basic decision problem, the data-driven and data-free decisions may be incomparable according to the possible objective payoffs. 	
	
	Although objectively incomparable, the DM can still evaluate whether or not she is \textit{subjectively} willing to take the data-driven decision instead of ignoring the data. Notice that by taking the data-free decision, the DM guarantees herself a \textit{data-free certainty equivalent} in any decision problem. Specifically, no matter which possible DGP governs the uncertainty, objective payoff from her data-free decision is always higher than the minimum expected utility across DGPs in the initial set. This minimum is also the DM's subjective expectation under the maxmin expected-utility criterion. Therefore, the DM would be subjectively willing to take a data-driven decision only if she knows that objective payoff from data-driven decisions would be greater than this certainty equivalent. 
	
	The following definition formalizes this idea: 
	\begin{definition}
		Given an updated set $\mathcal{P}(\omega_{N})$, the data-driven decision  \textbf{objectively improves upon the data-free certainty equivalent} across all decision problems under the DGP $P^{*}$ if for all $D^{N} \in \mathcal{D}^{N}$, 
		\begin{equation*}
			W(c(D^{N}, \omega_{N}), P^{*}) \geq \min\limits_{P \in \mathcal{P}} W(c(D^{N}), P).
		\end{equation*}
	\end{definition}
	
	Notice that if such an improvement holds, whenever the DM is offered a choice between receiving the data or getting her data-free certainty equivalent, she will always prefer to receive the data. In other words, data-driven decision is always acceptable. The next result shows that, in fact, this notion of improvement is also necessary and sufficient to that the updated set accommodates the true DGP. 
	
	\begin{theorem}\label{certaintyequivalent}
		Given an updated set $\mathcal{P}(\omega_{N})$, the data-driven decision objectively improves upon the data-free certainty equivalent across all decision problems under the DGP $P^{*}$ if and only if $\mathcal{P}(\omega_{N})$ accommodates $P^{*}$. 
	\end{theorem}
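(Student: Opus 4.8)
The plan is to reduce both implications to a single bookkeeping observation: the data-free certainty equivalent is exactly the data-free maxmin value. Writing $V(\mathcal{Q}, D^{N}) \equiv \max_{f \in D^{N}}\min_{P \in \mathcal{Q}}\int_{S^{N}} f\, dP^{N}$ for any set $\mathcal{Q}$, the defining property of $c(D^{N})$ gives $\min_{P \in \mathcal{P}} W(c(D^{N}), P) = V(\mathcal{P}, D^{N})$, so the right-hand side of the improvement inequality is just $V(\mathcal{P}, D^{N})$. The statement then becomes: $W(c(D^{N},\omega_{N}), P^{*}) \geq V(\mathcal{P}, D^{N})$ for every $D^{N}$ if and only if $P^{*N} \in co(\mathcal{P}(\omega_{N})^{N})$.

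For the ``if'' direction, I would assume $\mathcal{P}(\omega_{N})$ accommodates $P^{*}$, fix $D^{N}$, and set $g = c(D^{N},\omega_{N})$, then chain three inequalities. First, optimality of $g$ for the updated set gives $\min_{P \in \mathcal{P}(\omega_{N})}\int g\, dP^{N} = V(\mathcal{P}(\omega_{N}), D^{N})$. Second, using the already-established fact that passing to the closed convex hull of marginals leaves the worst-case expectation unchanged, together with $P^{*N} \in co(\mathcal{P}(\omega_{N})^{N})$, I get $W(g, P^{*}) = \int g\, dP^{*N} \geq \min_{Q \in co(\mathcal{P}(\omega_{N})^{N})}\int g\, dQ = \min_{P \in \mathcal{P}(\omega_{N})}\int g\, dP^{N}$. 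Third, since $\mathcal{P}(\omega_{N}) \subseteq \mathcal{P}$ by Assumption \ref{assumption3}, minimizing a fixed act's expectation over a smaller set is weakly larger, so $V(\mathcal{P}(\omega_{N}), D^{N}) \geq V(\mathcal{P}, D^{N})$. Concatenating yields $W(g, P^{*}) \geq V(\mathcal{P}, D^{N})$, the desired improvement.

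For the ``only if'' direction I would argue by contraposition and reuse Theorem \ref{basic} rather than redo a separation argument. If $\mathcal{P}(\omega_{N})$ does not accommodate $P^{*}$, Theorem \ref{basic} supplies a basic problem $D^{N}_{b} = \{f, x\}$ on which the data-driven decision is strictly objectively worse than the data-free decision. The two decisions therefore differ; but since $\mathcal{P}(\omega_{N}) \subseteq \mathcal{P}$ forces the updated worst-case value of $f$ to weakly exceed its data-free value, the data-free decision cannot be $f$, hence it is the constant $x$. A constant act has objective payoff $x$ under every DGP, so on this problem the data-free objective payoff, the certainty equivalent $V(\mathcal{P}, D^{N}_{b})$, and $x$ all coincide. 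Thus $W(c(D^{N}_{b}, \omega_{N}), P^{*}) = \int f\, dP^{*N} < x = V(\mathcal{P}, D^{N}_{b})$, exhibiting a decision problem on which improvement upon the certainty equivalent fails.

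The main obstacle — really the only nontrivial point — is the middle inequality of the ``if'' direction, where I pass from the membership $P^{*N} \in co(\mathcal{P}(\omega_{N})^{N})$ to the expectation bound for a merely $\Sigma^{N}$-measurable act $g$. This rests on the identity $\min_{P}\int g\, dP^{N} = \min_{Q \in co(\mathcal{P}^{N})}\int g\, dQ$, i.e., that convexification and weak* closure of the marginal set do not alter the MEU value; I would invoke the setup, where precisely this property justifies working with $co(\cdot)$ throughout. Everything else is routine: identifying the certainty equivalent with $V(\mathcal{P}, D^{N})$, and, for the converse, recognizing that on the basic problem delivered by Theorem \ref{basic} the data-free certainty equivalent is pinned to the constant payoff.
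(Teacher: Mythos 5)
Your proposal is correct and follows essentially the same route as the paper's own proof: the ``if'' direction is the identical three-inequality chain (accommodation gives $W(c(D^{N},\omega_{N}),P^{*})\geq \min_{P\in\mathcal{P}(\omega_{N})}W(c(D^{N},\omega_{N}),P)$, optimality under the updated set and $\mathcal{P}(\omega_{N})\subseteq\mathcal{P}$ give the rest), and the ``only if'' direction is the same contrapositive reduction to Theorem \ref{basic}, using that the data-free choice on the witnessing basic problem is the constant act so its objective payoff coincides with the certainty equivalent. No gaps; the only cosmetic difference is that you make explicit the (standard) fact that passing to the closed convex hull of marginals preserves the worst-case expectation, which the paper absorbs into its simplified formulation.
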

	
	To give a sketch of the proof, first observe that when the updated set accommodates the DGP $P^{*}$, objective payoff from the data-driven decision under $P^{*}$ is always higher than the minimum expected utility across all DGPs in the updated set. Then the conclusion is obtained by showing that this minimum always dominates the data-free certainty equivalent. 
	
	Importantly, combining both Theorem \ref{basic} and Theorem \ref{certaintyequivalent} gives the whole picture of the characterization result: 
	\begin{corollary}\label{three}
		Given an updated set $\mathcal{P}(\omega_{N})$ and DGP $P^{*}$, the followings are equivalent: 
		\begin{enumerate}[(i)]
			\item $\mathcal{P}(\omega_{N})$ accommodates $P^{*}$. 
			\item The data-driven decision objectively dominates the data-free decision across all basic decision problems under $P^{*}$.
			\item The data-driven decision objectively improves upon the data-free certainty equivalent across all decision problems under $P^{*}$.
		\end{enumerate}
	\end{corollary}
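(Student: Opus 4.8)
The plan is to recognize that Corollary \ref{three} is not an independent result but rather the logical synthesis of the two preceding theorems, each of which already characterizes the property that $\mathcal{P}(\omega_{N})$ accommodates $P^{*}$. The entire analytical content lives in Theorem \ref{basic} and Theorem \ref{certaintyequivalent}; the corollary merely chains their biconditionals through a common pivot. Accordingly, I would treat statement (i) as the hub and route both of the other equivalences through it.

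First I would invoke Theorem \ref{basic} directly. Its conclusion states that the data-driven decision objectively dominates the data-free decision across all basic decision problems under $P^{*}$ if and only if $\mathcal{P}(\omega_{N})$ accommodates $P^{*}$. This is precisely the biconditional (i) $\Leftrightarrow$ (ii), with no further work required. Next I would invoke Theorem \ref{certaintyequivalent}, whose conclusion states that the data-driven decision objectively improves upon the data-free certainty equivalent across all decision problems under $P^{*}$ if and only if $\mathcal{P}(\omega_{N})$ accommodates $P^{*}$. This is exactly (i) $\Leftrightarrow$ (iii). Since logical equivalence is transitive, composing these two biconditionals at the shared node (i) yields the full cycle (i) $\Leftrightarrow$ (ii) $\Leftrightarrow$ (iii), which is the assertion of the corollary.

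The only thing I would take care to verify is that the two theorems can legitimately be applied simultaneously to the \emph{same} pair $(\mathcal{P}(\omega_{N}), P^{*})$. This holds because both are stated under the identical standing hypotheses of Section \ref{setup}, namely Assumptions \ref{assumption1}--\ref{assumption4}, and in particular the tie-breaking rule of Assumption \ref{assumption4} is already folded into the definition of $c(D^{N}, \omega_{N})$, so there is no ambiguity about which data-driven decision each theorem refers to. I would also note that the phrase ``accommodates $P^{*}$'' denotes the single condition $P^{*N} \in co(\mathcal{P}(\omega_{N})^{N})$ defined once in the setup, so the pivot (i) is literally the same statement in both theorems, making the composition immediate.

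There is no substantive obstacle here: once Theorem \ref{basic} and Theorem \ref{certaintyequivalent} are granted, the corollary is a one-line logical composition. The genuine difficulty was already discharged in establishing those two characterizations separately, and the value of the corollary is expository—collecting the two independently proved equivalences into a single three-way statement that makes ``accommodating the truth'' the common necessary-and-sufficient condition underlying both notions of robust improvement.
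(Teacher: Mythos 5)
Your proposal is correct and matches the paper exactly: the corollary is stated immediately after Theorem \ref{certaintyequivalent} as the combination of that theorem with Theorem \ref{basic}, with no separate proof given, precisely because (i) $\Leftrightarrow$ (ii) and (i) $\Leftrightarrow$ (iii) are the respective theorem statements and transitivity does the rest. Your added check that both theorems apply to the same pair $(\mathcal{P}(\omega_{N}), P^{*})$ under the same standing assumptions is a reasonable bit of diligence but not something the paper spells out.
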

	
	This result highlights the equivalence between the two different notions of improvement across basic and across general decision problems: While the two notions are different both in the form and in the domain where they apply, this characterization result asserts that any one of them always implies the other. 
	
	In addition, it has been emphasized that when the updated set accommodates the true DGP, in a non-basic decision problem, it is still possible that objective payoff is higher from the data-free decision. However, when the data-free and data-driven decisions are different, it is impossible to have the data-free decision to be objectively better under \textit{every} DGP in the updated set.\footnote{Consider when the true DGP is the worst-case DGP in the updated set for the data-driven decision. The objective payoff under the data-driven decision is then strictly higher than the data-free decision, because by Assumption \ref{assumption4}, the data-free decision is not optimal given this worst-case DGP.} Thus, when the updated set accommodates the true DGP, there will never be a decision problem in which the DM finds data-free decision uniformly preferable, i.e.,  it is objectively better under all DGPs in the updated set. 
	
	In summary, knowing the updated set accommodates the true DGP, for basic decision problems, the data-driven decision is objectively better under all possible DGPs. When the decision problem is non-basic, the data-driven decision first guarantees at least the data-free certainty equivalent; furthermore, there always exists a possible DGP under which the data-driven decision is strictly better than the data-free decision. 
	
	\section{Updating Rules and Accommodating the Truth}\label{updating}
	The previous section shows that using data to robustly improve decisions is equivalent to accommodate the true data-generating process. This section turns to the study of updating rules in terms of how to accommodate the truth (true DGP) based on sample data. 
	
	An \textbf{updating rule} is formally a mapping from the initial set $\mathcal{P}$ and sample data $\omega_{N}$ to an updated set $\mathcal{P}(\omega_{N})$. When data is stochastically generated by some DGP, an updating rule determines \textit{how often} the updated sets, which become random sets, accommodate the DGP in probabilities. 
	
	One possible requirement is to let the updated sets accommodate the truth \textbf{everywhere}, i.e., under every possible DGP $P^{*}$, the updated sets accommodate it for every possible sample data $\omega_{N}$. A straightforward observation summarized by the following proposition is that it is true if and only if the updated sets never refine the initial set.
	
	\begin{proposition}
		The updated sets accommodate the truth everywhere if and only if $	co(\mathcal{P}(\omega_{N})^{N}) = co(\mathcal{P}^{N})$ for all $\omega_{N}$. 
	\end{proposition}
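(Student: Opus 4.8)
The plan is to prove the two implications separately, in each case by chasing inclusions among closed convex hulls of the future marginals; the entire statement follows from unpacking the definition of accommodation together with Assumptions \ref{assumption1} and \ref{assumption3}. A useful preliminary observation is that, because Assumption \ref{assumption3} already guarantees $\mathcal{P}(\omega_{N})^{N} \subseteq \mathcal{P}^{N}$ and hence $co(\mathcal{P}(\omega_{N})^{N}) \subseteq co(\mathcal{P}^{N})$, the condition ``never refine the initial set'' is equivalent to the equality $co(\mathcal{P}(\omega_{N})^{N}) = co(\mathcal{P}^{N})$ for all $\omega_{N}$; so it suffices to show that accommodating the truth everywhere is equivalent to this equality.

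For the ``if'' direction I would assume $co(\mathcal{P}(\omega_{N})^{N}) = co(\mathcal{P}^{N})$ for every $\omega_{N}$. Fix any possible DGP $P^{*} \in \mathcal{P}$ (by Assumption \ref{assumption1} these are exactly the DGPs that may govern the data) and any $\omega_{N}$. Then its future marginal satisfies $P^{*N} \in \mathcal{P}^{N} \subseteq co(\mathcal{P}^{N}) = co(\mathcal{P}(\omega_{N})^{N})$, which is precisely the statement that $\mathcal{P}(\omega_{N})$ accommodates $P^{*}$. Since $P^{*}$ and $\omega_{N}$ were arbitrary, the updated sets accommodate the truth everywhere.

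For the ``only if'' direction I would assume the updated sets accommodate the truth everywhere and fix $\omega_{N}$. One inclusion, $co(\mathcal{P}(\omega_{N})^{N}) \subseteq co(\mathcal{P}^{N})$, is the preliminary observation above. For the reverse inclusion, apply the accommodation hypothesis with $P^{*} = P$ for each $P \in \mathcal{P}$: this gives $P^{N} \in co(\mathcal{P}(\omega_{N})^{N})$, hence $\mathcal{P}^{N} \subseteq co(\mathcal{P}(\omega_{N})^{N})$. The two inclusions combine to give the desired equality for every $\omega_{N}$.

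There is no real obstacle here, but the one point that must be handled carefully is the reverse inclusion in the ``only if'' direction: from $\mathcal{P}^{N} \subseteq co(\mathcal{P}(\omega_{N})^{N})$ I want to conclude $co(\mathcal{P}^{N}) \subseteq co(\mathcal{P}(\omega_{N})^{N})$, and this step uses that $co(\mathcal{P}(\omega_{N})^{N})$ is itself closed and convex by construction. Since the closed convex hull of $\mathcal{P}^{N}$ is the smallest closed convex set containing $\mathcal{P}^{N}$, it is contained in any closed convex superset of $\mathcal{P}^{N}$, and in particular in $co(\mathcal{P}(\omega_{N})^{N})$. This closure-and-convexity bookkeeping is the only substantive content of the argument.
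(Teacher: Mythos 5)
Your proof is correct and is exactly the definition-unpacking argument the paper has in mind (the paper states this proposition as a ``straightforward observation'' and omits the proof entirely); the only substantive step, using Assumption \ref{assumption3} for one inclusion and the minimality of the closed convex hull for the other, is handled properly.
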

	
	This condition further implies the data-driven and data-free decisions are always the same. In other words, it is possible only when the DM ignores the data or applies the full Bayesian updating. Thus, accommodating the truth everywhere is too strong to be meaningful. 
	
	In the following, depending on whether or not the sample size can go to infinity, I consider two weaker notions of accommodating the truth and provide new updating rules that guarantee these notions. 
	
	\subsection{Infinite Sample}
	When the sample size can increase unboundedly to infinity, the first notion of accommodating the truth is given by an asymptotic sense. Consider the following definition. 
	
	\begin{definition}
		The updated sets \textbf{accommodate the truth asymptotically almost surely} if for all $P^{*} \in \mathcal{P}$ and for $P^{*}$-almost every $\omega \in \Omega$, there exists $\bar{N}(\omega)$ such that for all $N \geq \bar{N}(\omega)$, the updated set $\mathcal{P}(\omega_{N})$ accommodates the DGP $P^{*}$. 
	\end{definition}
	
	Intuitively, it says that under any possible DGP $P^{*}$ in the initial set, when the sample size is sufficiently large, the updated sets will almost surely accommodate the true DGP that generates the data. Given the characterization result, this definition also implies that the data-driven decisions will asymptotically almost surely be better than the data-free decisions. Recall from the illustrative example, maximum likelihood updating is shown to violate this criterion, and thus leads to strictly worse decisions almost surely. 
	
	In this paper, I propose a new updating rule, \textit{average-then-update}, that can be shown to satisfy this criterion in the current environment. Before providing the formal definition, some additional notations are needed. 
	
	For sample data $\omega_{N}$, let $\boldsymbol\Phi(\omega_{N}) \in \Delta(S)$ denote the \textbf{empirical distribution}, i.e., for any outcome $s_{j} \in S$, 
	\begin{equation*}
		\boldsymbol\Phi(\omega_{N})(s_{j})  \equiv  N^{-1}\sum\limits_{i=1}^{N} I\{\omega_{i} = s_{j}\}.
	\end{equation*}
	
	For any $P \in \Delta_{indep}(\Omega)$ and for any $N$, the \textbf{average of sample  marginals}, $\bar{P}_{N} \in \Delta(S)$,  is defined to be the distribution over $S$ given by the average mixture of marginal distributions over sample experiments, i.e.,  
	\begin{equation*}
		\bar{P}_{N} \equiv N^{-1}\sum_{i=1}^{N}P_{i},
	\end{equation*}
	where for each $i$, $P_{i} \in \Delta(S)$ is the marginal distribution over the $i$-th experiment. For any $p,q\in \Delta(S)$, let $\rho(p,q)$ denote the sup-norm distance.\footnote{As $S$ is finite, the choice of distance is not essential.} The average-then-update rule is then formally defined by the following. 
	
	\begin{definition}
		The updated sets are given by \textbf{average-then-update} if for some pre-specified $\epsilon > 0$ and for all $\omega_{N}$,
		\begin{equation*}
			\mathcal{P}(\omega_{N})  = \{P\in \mathcal{P}: \rho (\bar{P}_{N}, \boldsymbol\Phi(\omega_{N}) ) < \epsilon \}.
		\end{equation*}
	\end{definition}
	
	The average-then-update rule is based on the simple heuristic of retaining a DGP if its average of sample marginals is close enough to the empirical distribution. Such a heuristic is often used when the possible DGPs are all i.i.d. for it corresponds to maximum likelihood updating in that case. When the possible DGPs can be non-identical, while maximum likelihood updating turns out to be no longer useful, the following theorem shows that this heuristic remains valid. 
	
	\begin{theorem}\label{asymptotic}
		The updated sets given by average-then-update with any $\epsilon > 0$ accommodate the truth asymptotically almost surely. 
	\end{theorem}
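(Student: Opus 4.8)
The plan is to prove something marginally stronger than the statement: that under any $P^{*} \in \mathcal{P}$, the DGP $P^{*}$ itself is eventually retained in the updated set $P^{*}$-almost surely. This immediately yields accommodation, since if $P^{*} \in \mathcal{P}(\omega_{N})$ then its future marginal satisfies $P^{*N} \in \mathcal{P}(\omega_{N})^{N} \subseteq co(\mathcal{P}(\omega_{N})^{N})$, which is exactly the accommodation condition. So the entire content of Theorem \ref{asymptotic} reduces to a convergence statement about whether the membership test $\rho(\bar{P}^{*}_{N}, \boldsymbol\Phi(\omega_{N})) < \epsilon$ is eventually passed under $P^{*}$.

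First I would fix $P^{*} \in \mathcal{P}$ and, for each outcome $s_{j} \in S$, rewrite the gap between the empirical frequency and the average of sample marginals as
\begin{equation*}
	\boldsymbol\Phi(\omega_{N})(s_{j}) - \bar{P}^{*}_{N}(s_{j}) = N^{-1}\sum_{i=1}^{N}\bigl(I\{\omega_{i} = s_{j}\} - P^{*}_{i}(s_{j})\bigr).
\end{equation*}
The crucial structural point is that the centering is by $P^{*}_{i}(s_{j}) = \E[I\{\omega_{i} = s_{j}\}]$, so this is an average of independent, mean-zero random variables, independent precisely because $P^{*}$ is a product measure across experiments. This is also exactly where average-then-update departs from maximum likelihood: the empirical distribution converges not to any single marginal but to the running average $\bar{P}^{*}_{N}$, which is the object the rule compares against.

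Next I would invoke a strong law of large numbers for independent but non-identically distributed summands. Since each $I\{\omega_{i} = s_{j}\}$ takes values in $[0,1]$, its variance is bounded by $1/4$, so Kolmogorov's criterion $\sum_{i}\mathrm{Var}(I\{\omega_{i} = s_{j}\})/i^{2} < \infty$ holds automatically. Kolmogorov's SLLN then gives $N^{-1}\sum_{i=1}^{N}(I\{\omega_{i} = s_{j}\} - P^{*}_{i}(s_{j})) \to 0$ for $P^{*}$-almost every $\omega$. Because $S$ is finite, intersecting these finitely many probability-one events yields a single probability-one event on which $\rho(\bar{P}^{*}_{N}, \boldsymbol\Phi(\omega_{N})) = \max_{s_{j} \in S}|\boldsymbol\Phi(\omega_{N})(s_{j}) - \bar{P}^{*}_{N}(s_{j})| \to 0$.

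To finish, on that probability-one event there exists $\bar{N}(\omega)$ such that $\rho(\bar{P}^{*}_{N}, \boldsymbol\Phi(\omega_{N})) < \epsilon$ for all $N \geq \bar{N}(\omega)$, whence $P^{*} \in \mathcal{P}(\omega_{N})$ and so $\mathcal{P}(\omega_{N})$ accommodates $P^{*}$ for all such $N$. Since $P^{*}$ was arbitrary, this is precisely the definition of accommodating the truth asymptotically almost surely. I do not expect a genuine obstacle: the only substantive ingredient is the non-identical SLLN, and finiteness of $S$ makes boundedness (hence the variance condition) immediate. The one point to state carefully is that the SLLN must be applied coordinate-wise for each fixed $s_{j}$; within a single experiment $i$ the indicators $\{I\{\omega_{i} = s_{j}\}\}_{j}$ are dependent, but across $i$ the family $\{I\{\omega_{i} = s_{j}\}\}_{i}$ is independent for each fixed $s_{j}$, which is all that the argument requires.
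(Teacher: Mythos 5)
Your proposal is correct and follows essentially the same route as the paper's own proof: reduce accommodation to eventual membership of $P^{*}$ in the updated set, apply Kolmogorov's strong law of large numbers coordinate-wise to the independent indicators $I\{\omega_{i}=s_{j}\}$ (whose variances trivially satisfy the summability condition), and use finiteness of $S$ to pass to the sup-norm. The only cosmetic difference is that you make explicit the step ``membership implies accommodation'' and the within-experiment dependence caveat, both of which the paper leaves implicit.
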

	
	The key intuition of this theorem relies on Kolmogorov's strong law of large numbers, which further implies that under any sequence of independent but possibly non-identical distributions, the empirical distribution converges to the average of sample marginals in the sup-norm almost surely. This intuition also implies that the independence assumption is not crucial for this result. When the possible DGPs might have correlations between experiments, as long as there is a version of the strong law of large numbers that applies, the same conclusion will still hold. 
	
	Moreover, average-then-update rule can refine the initial set as shown in the illustrative example. Therefore, it can also lead to strictly better data-driven decisions asymptotically almost surely. 
	
	\subsection{Finite Sample}
	When the sample size is only finite, the relevant criterion for accommodating the truth is with some pre-specified probability or confidence level. 
	
	\begin{definition}
		The updated sets \textbf{accommodate the truth with an asymptotic level $1-\alpha$} if for all $P^{*} \in \mathcal{P}$, 
		\begin{equation*}
			\liminf_{N \rightarrow \infty} P^{*} (\{\omega_{N}: \mathcal{P}(\omega_{N}) \text{ accommodates } P^{*}\}) \geq 1-\alpha.
		\end{equation*}
	\end{definition}
	
	In terms of decisions, this definition implies that with any sample data, the data-driven decision will be objectively better than the data-free decision with at least the asymptotic probability $1-\alpha$. 
	
	Given this definition, the updated sets are effectively consistent \textit{confidence regions} of the true DGP, although the coverage is in a weaker sense of accommodating.\footnote{This difference is not important for the analysis that follows.} Thus finding an updating rule that guarantees this criterion would be equivalent to constructing the corresponding confidence regions. Constructing confidence regions is theoretically straightforward by exploiting the well-known duality between confidence regions and hypothesis tests.
	
	Specifically, for any possible DGP $P \in \mathcal{P}$, one can use the sample data to test the following null hypothesis 
	\begin{equation*}
		H_{P}: P^{*} = P
	\end{equation*}
	versus the unrestricted alternative hypothesis $P^{*} \neq P$. Formally, the test will be defined by the \textit{region of acceptance} denoted by $E_{N, \alpha}(P) \subseteq S_{N}$. Namely, when the sample data $\omega_{N}$ belongs to the region of acceptance, then the null hypothesis will be accepted or not rejected by the data. In addition, suppose the regions of acceptance satisfy the following condition:
	\begin{equation*}
		\liminf_{N \rightarrow \infty} P(E_{\alpha,N}(P)) \geq 1-\alpha.
	\end{equation*}
	It simply says that the probability of accepting the null hypothesis when it is true will be asymptotically greater than $1-\alpha$. 
	
	Given any data $\omega_{N}$, let the updated set be the set of DGPs that are accepted by the data under each one of the individual hypothesis tests: 
	\begin{equation*}
		\mathcal{P}(\omega_{N}) = \{P \in \mathcal{P}: \omega_{N} \in E_{\alpha, N}(P)\}.
	\end{equation*}
	Then it is straightforward to show that such an updated set contains the true DGP and thus accommodates it with the asymptotic level $1-\alpha$. 
	
	As a result, constructing confidence regions simply boils down to finding the regions of acceptance for every possible DGP and every sample size. It is a manageable task when the possible DGPs are all i.i.d. as it is the standard statistical procedure for constructing confidence regions. More fundamentally, it is so because there are two convenient features for i.i.d. DGPs. By the central limit theorem, the regions of acceptance can be constructed using the probability contours of the corresponding multivariate Gaussian distributions. In the case of i.i.d., conveniently, the mean vector and covariance matrix of the corresponding Gaussian distribution depend only on the marginal distribution thus do not change with the sample size. Furthermore, as each i.i.d. distribution is uniquely determined by its marginal distribution, the number of tests is also fixed regardless of the sample size.  	
	
	Both of the convenient features no longer hold when the possible DGPs can be non-identically distributed. First, in this case, both the mean vector and covariance matrix depend on all the marginal distributions over the sample experiments. As a result, for every different sample size, both need to be recalculated even for the same DGP. Moreover, non-identical DGP is determined by every one of its marginal distributions. Thus, every additional realization can potentially increase the number of possible DGPs that need to be tested by the data. Therefore, in the presence of non-identical distributions, directly finding the regions of acceptance for every DGP may be too complicated to be feasible. 
	
	To address this difficulty, I develop a novel and tractable method for constructing confidence regions when the possible DGPs can be non-identical. Especially, it brings back the convenient features in the case of i.i.d. and meanwhile guarantees the confidence regions to accommodate the non-identical DGPs with at least the same asymptotic level as for i.i.d. DGPs. 
	
	Formally, for any $p \in \Delta(S)$, let $p^{\infty}$ denote the i.i.d. distribution over $\Omega$ with its marginal equals $p$. Let $E_{N,\alpha} (p^{\infty})$ denote the region of acceptance for $p^{\infty}$ that satisfies the following condition:\footnote{More precisely, suppose they are constructed by the normal approximation.} 
	\begin{equation*}
		\liminf_{n \rightarrow \infty} p^{\infty} (E_{N,\alpha}(p^{\infty})) \geq 1-\alpha.
	\end{equation*}
	
	Recall that for any $P \in \Delta_{indep}(\Omega)$, $\bar{P}_{N} \in \Delta(S)$ denotes the average of sample marginals. Then let $E_{N, \alpha}((\bar{P}_{N})^{\infty})$ denote the region of acceptance that satisfies the previous condition under the i.i.d. distribution $(\bar{P}_{N})^{\infty}$. Then consider the following updating rule. 
	
	\begin{definition}
		The updated sets are given by the \textbf{robust i.i.d. statistical tests} with an asymptotic level $1-\alpha$ if for all $\omega_{N}$, 
		\begin{equation*}
			\mathcal{P}(\omega_{N}) = \{P \in \mathcal{P}: \omega_{N} \in E_{N,\alpha}((\bar{P}_{N})^{\infty})\}.
		\end{equation*}
	\end{definition}
	
	More explicitly, the robust i.i.d. statistical tests are given by the following two steps: 
	
	\begin{enumerate}
		\item For any sample data $\omega_{N}$, construct a confidence region as if the possible DGPs are all i.i.d.. 
		
		\item For every DGP in the initial set, retain it in the updated set if its average of sample marginals equals the marginal of an i.i.d. distribution in the previous confidence region. 
		
	\end{enumerate}
	
	Notice that the first step is simply the standard procedure of constructing confidence regions when the DGPs are all i.i.d.. The essential departure from the case of i.i.d. is the second step where it \textit{robustifies} the confidence region obtained under the additional i.i.d. assumption by also including the possible non-identical DGPs accordingly. Importantly, this robustification step is given by a simple comparison. Thus it does not add much computational difficulty. In other words, implementing the robust i.i.d. statistical tests is as tractable as conducting standard statistical inferences with i.i.d. DGPs. 
	
	The following theorem shows that under any possible non-identical DGP, the robust i.i.d. statistical tests can guarantee the updated sets to accommodate the truth with at least the same confidence level as for the i.i.d. DGPs. 
	
	\begin{theorem}\label{confident}
		The updated sets given by the robust i.i.d. statistical tests with an asymptotic level $1-\alpha$ accommodate the truth with the same asymptotic level. 
	\end{theorem}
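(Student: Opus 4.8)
The plan is to reduce the accommodation event to a containment event, then to a purely distributional coverage statement, after which the heart of the argument is a covariance comparison combined with Anderson's inequality. First I would note that if $P^{*} \in \mathcal{P}(\omega_{N})$ then trivially $P^{*N} \in co(\mathcal{P}(\omega_{N})^{N})$, so containment implies accommodation; and by the definition of the robust i.i.d. tests, $P^{*} \in \mathcal{P}(\omega_{N})$ exactly when $\omega_{N} \in E_{N,\alpha}((\bar{P}^{*}_{N})^{\infty})$. Hence it suffices to prove
\[
\liminf_{N \to \infty} P^{*}\bigl(\{\omega_{N}: \omega_{N} \in E_{N,\alpha}((\bar{P}^{*}_{N})^{\infty})\}\bigr) \geq 1-\alpha,
\]
where crucially the data $\omega_{N}$ are generated by the non-identical DGP $P^{*}$ while the region of acceptance is calibrated for the i.i.d. distribution $(\bar{P}^{*}_{N})^{\infty}$.

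Next I would record the exact form of the normal-approximation region and the relevant covariance identity. Writing the region as $E_{N,\alpha}(p^{\infty}) = \{\omega_{N}: \sqrt{N}(\boldsymbol\Phi(\omega_{N}) - p) \in C_{\alpha}(p)\}$, where $C_{\alpha}(p)$ is the $(1-\alpha)$ probability contour (an ellipsoid, hence convex and symmetric about the origin) of the multinomial Gaussian limit $N(0,\Sigma(p))$ with $\Sigma(p)_{jk} = p(s_{j})\delta_{jk} - p(s_{j})p(s_{k})$, I would compute the covariance of the centered empirical frequencies under the true $P^{*}$. A direct calculation gives that $\sqrt{N}(\boldsymbol\Phi(\omega_{N}) - \bar{P}^{*}_{N})$ has covariance
\[
\Sigma^{*}_{N} = \Sigma(\bar{P}^{*}_{N}) - V_{N}, \qquad V_{N} = N^{-1}\sum_{i=1}^{N}(P^{*}_{i} - \bar{P}^{*}_{N})(P^{*}_{i} - \bar{P}^{*}_{N})^{\top},
\]
where $V_{N} \succeq 0$ is the empirical dispersion of the marginals across experiments; in particular $\Sigma^{*}_{N} \preceq \Sigma(\bar{P}^{*}_{N})$. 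This identity is the quantitative statement that heterogeneity across experiments shrinks the fluctuations of the empirical distribution relative to the i.i.d. benchmark with the same average marginal. I would then invoke the Lindeberg--Feller CLT for the bounded, independent, mean-zero triangular array $\{\mathbf 1\{\omega_{i}=\cdot\} - P^{*}_{i}\}_{i\le N}$ (the Lindeberg condition is immediate since the summands are uniformly bounded and divided by $\sqrt{N}$) to obtain asymptotic normality with covariance $\Sigma^{*}_{N}$.

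Because $\bar{P}^{*}_{N}$ need not converge as $N\to\infty$, I would finish with a subsequence argument. Pick a subsequence along which $P^{*}(\omega_{N}\in E_{N,\alpha}((\bar{P}^{*}_{N})^{\infty}))$ tends to the $\liminf$, and, using compactness of $\Delta(S)$ and of the cone of bounded PSD matrices, pass to a further subsequence along which $\bar{P}^{*}_{N} \to p_{0}$ and $\Sigma^{*}_{N}\to \Sigma^{*}_{0} = \Sigma(p_{0}) - V_{0}$ with $V_{0}\succeq 0$. Along this subsequence the CLT and the continuity of $p \mapsto C_{\alpha}(p)$ give $P^{*}(\omega_{N}\in E_{N,\alpha}((\bar{P}^{*}_{N})^{\infty})) \to P(N(0,\Sigma^{*}_{0})\in C_{\alpha}(p_{0}))$, while by construction $C_{\alpha}(p)$ is the $(1-\alpha)$-contour of $N(0,\Sigma(p))$, so $P(N(0,\Sigma(p_{0}))\in C_{\alpha}(p_{0})) \geq 1-\alpha$. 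Since $\Sigma^{*}_{0}\preceq \Sigma(p_{0})$ and $C_{\alpha}(p_{0})$ is convex and symmetric about the origin, Anderson's inequality yields $P(N(0,\Sigma^{*}_{0})\in C_{\alpha}(p_{0})) \geq P(N(0,\Sigma(p_{0}))\in C_{\alpha}(p_{0})) \geq 1-\alpha$, and because the subsequence was chosen to realize the $\liminf$, the bound propagates to all of $N$.

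The main obstacle I anticipate is the transfer of convergence in distribution through a region $C_{\alpha}(\bar{P}^{*}_{N})$ that itself moves with $N$ and whose limiting Gaussian $N(0,\Sigma^{*}_{0})$ may be degenerate (if $V_{0}$ absorbs part of $\Sigma(p_{0})$, or if $p_{0}$ lies on the boundary of the simplex). Handling this requires sandwiching $C_{\alpha}(\bar{P}^{*}_{N})$ between inner and outer $\delta$-neighborhoods of $C_{\alpha}(p_{0})$, applying the Portmanteau theorem, and checking that the boundary of $C_{\alpha}(p_{0})$ is null for the limiting (possibly degenerate) Gaussian so that the two bounds coincide as $\delta\to 0$. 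The conceptual core, however, is contained entirely in the covariance identity $\Sigma^{*}_{N}=\Sigma(\bar{P}^{*}_{N})-V_{N}$ together with Anderson's monotonicity of the Gaussian measure of a symmetric convex set under the PSD order; everything else is routine.
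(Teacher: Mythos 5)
Your proposal is correct, and it rests on exactly the same quantitative core as the paper: the identity $\Sigma(\bar{P}^{*}_{N}) - \Sigma^{*}_{N} = N^{-1}\sum_{i}(P^{*}_{i}-\bar{P}^{*}_{N})(P^{*}_{i}-\bar{P}^{*}_{N})^{\top} \succeq 0$ is precisely the paper's Lemma B.3, where this difference is recognized as a Gram matrix. Where you diverge is in how that covariance dominance is converted into a coverage bound. The paper keeps everything at finite $N$ and at the level of sets: since both acceptance regions are ellipsoids centered at the same mean with the same $\chi^{2}_{d-1}(\alpha)$ threshold, $\hat{\boldsymbol\Sigma}_{N}\succeq\bar{\boldsymbol\Sigma}_{N}$ gives $\bar{\boldsymbol\Sigma}_{N}^{-1}\succeq\hat{\boldsymbol\Sigma}_{N}^{-1}$ and hence the deterministic inclusion $E_{N,\alpha}(P)\subseteq E_{N,\alpha}((\bar{P}_{N})^{\infty})$ for every $N$; the probability bound then transfers by monotonicity from the region calibrated for $P$ (whose asymptotic coverage under $P$ is $\geq 1-\alpha$ by the triangular-array CLT of Lemma B.2). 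You instead pass to the Gaussian limit under $P^{*}$ and compare the measures of a single convex symmetric set via Anderson's inequality, which forces you to confront the moving region $C_{\alpha}(\bar{P}^{*}_{N})$, non-convergent averages, possible degeneracy of the limit, and Portmanteau boundary issues --- complications you correctly flag but which the paper's finite-$N$ set inclusion avoids entirely. The compensation is that your route is more general: Anderson's inequality applies to any convex, origin-symmetric acceptance region, so your argument would cover, e.g., Bonferroni-type rectangular regions directly, whereas the paper's inclusion argument is tied to the ellipsoidal (equal-threshold) form and handles the Bonferroni case separately by citing the binomial result of Wang. Both are valid proofs; if you carry out the sandwiching and null-boundary checks you outline, your version stands on its own.
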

	
	The proof of this theorem relies on a key observation: For any independent but non-identical distribution $P$, let $E_{N, \alpha}(P)$ be constructed by the probability contour of the corresponding multivariate Gaussian distribution such that the following holds (provided it satisfies the central limit theorem), 
	\begin{equation*}
		\liminf_{N \rightarrow \infty} P(E_{N,\alpha}(P)) \geq 1-\alpha.
	\end{equation*}
	Moreover, let $E_{N,\alpha}((\bar{P}_{N})^{\infty})$ denote the probability contour of the multivariate Gaussian distribution corresponding to the i.i.d. distribution $(\bar{P}_{N})^{\infty}$. The key observation is that for all $P \in \Delta_{indep}(\Omega)$, for all $N$ and  $\alpha$, 
	\begin{equation*}
		E_{N, \alpha}(P) \subseteq E_{N,\alpha}((\bar{P}_{N})^{\infty}).
	\end{equation*}
	Notice the inclusion further implies that
	\begin{equation*}
		P(E_{N, \alpha}(P))  \leq P(E_{N,\alpha}((\bar{P}_{N})^{\infty})).
	\end{equation*}
	It therefore leads to the conclusion that for testing the null hypothesis $H_{P}: P^{*} = P$, if the region of acceptance is given by $E_{N,\alpha}((\bar{P}_{N})^{\infty})$, then the probability of accepting $P$ when it is true will be at least greater than the probability when using $E_{N,\alpha}(P)$. The latter is furthermore asymptotically greater than $1-\alpha$ by construction. Therefore, the confidence regions constructed by the robust i.i.d. statistical tests can guarantee at least the same coverage probability for all non-identical DGPs. Moreover, this lower bound is also tight as i.i.d. DGPs can be present. 
	
	The key relation, $E_{N, \alpha}(P) \subseteq E_{N,\alpha}((\bar{P}_{N})^{\infty})$, is shown by deriving a result relating the average covariance matrices of the two distributions, $P$ and $(\bar{P}_{N})^{\infty}$. I show that for any $N$, subtracting the average covariance matrix of $P_{N}$ from the average covariance matrix of $(\bar{P}_{N})^{\infty}_{N}$ will always result in a positive semidefinite matrix. This result generalizes a well-known result for binomial distributions to the case of multinomial distributions. This well-known result is that the average variance of an i.i.d. binomial distribution is always weakly greater than the average variance of a non-identical binomial distribution whose average mean equals the mean of the i.i.d. distribution \citep{wang1993}. This generalization, to the best of my knowledge, is firstly discovered in the present paper. \\
	
	Some additional remarks: 
	
	\textbf{Remark 1.} Notice that the updated sets given by the robust i.i.d. statistical tests are relatively more conservative than those constructed directly. Specifically, it is more likely to accept $H_{P}$ when $P^{*} \neq P$. In other words, the probability of type \Rmnum{2} error for each individual test is more significant. On the one hand, such a drawback may not be an essential issue when the true DGP can only be partially identified, and in particular, when robustness is a major concern. On the other hand, the gain from using only the i.i.d. statistical tests is quite obvious. 
	
	\textbf{Remark 2.} It might also be a concern that finding the probability contours of the multivariate Gaussian distributions when $|S|$ is large can be difficult. An alternative option is to use Bonferroni's type of confidence region given by constructing \textit{confidence intervals} for the probability of every outcome $s_{j}$ with confidence level $1-\alpha/(d-1)$. Then the intersection of all such confidence intervals will be a confidence region with the confidence level $1-\alpha$. However, such a confidence region is much more conservative than the one constructed using the multivariate distribution. In this case, showing that Bonferroni's type of confidence region constructed by i.i.d. distributions guarantees at least the same coverage probability for non-identical distributions can be directly implied by invoking the corresponding result in \cite{wang1993} for binomial distributions. 
	
	\section{Applications}\label{application}
	The decision framework studied in this paper where a decision-maker observes sample data and makes decisions under model uncertainty is often used to model economic problems such as dynamic portfolio choice \citep{Epstein2007}, asset pricing \citep*{Epstein2008, Illeditsch2011}, and social learning \citep*{DEFILIPPIS2021105188, Chen2019}, among many others. 
	
	In those models, the existing papers often obtain conclusions by assuming that the DM applies either maximum likelihood updating or full Bayesian updating. For maximum likelihood updating, the present paper uncovers that it can lead to unreliable inferences and decisions. Thus the conclusions there may be misguided by this issue. Full Bayesian updating, on the other hand, might be too conservative. Especially, it often implies that learning may be ineffective under ambiguity. In this section, I show that this conclusion crucially hinges on the assumption of full Bayesian updating. 
	
	In the following, I will first study a specific but commonly studied model to show that learning can be, in fact, very effective under ambiguity if applying the average-then-update rule. Second, I will use a stylized example of Bernoulli models with ambiguous nuisance parameters to provide a concrete illustration of both updating rules proposed in this paper. Importantly, both models also highlight the tractability of the updating rules. 
	
	\subsection{Gaussian Signals with Ambiguous Variances}\label{Gaussian}
	
	Asymptotic results under full Bayesian updating is often very hard to solve and thus only known for some specific models. The section considers one of those models. Specifically, it is a generalization of the setting in \citet*{reshidi2020information}, which, in turn, generalizes the model of asset pricing under ambiguity studied in \cite{Epstein2008}. 
	
	A DM aims to learn the state of the world $\theta \in \Theta \equiv \R$ by observing  a countably infinite sequence of signals denoted by $\{x_{i}\}_{i=1}^{\infty}$. Each $x_{i}$ is a Gaussian random variable with mean $\theta$ and variance $\sigma_{i}^{2}$, and let $g_{i}(\theta, \sigma_{i})$ denote its probability density function. The sequence of random variables is mutually independent but every $\sigma_{i}$ is only known to belong to the interval $[\underline{\sigma}, \overline{\sigma}]$. In other words, the signals are given by a sequence of independent but possibly heterogeneous Gaussian random variables. 
	
	For each state $\theta$, let $\mathcal{P}_{\theta}$ denote the set of all possible data-generating processes over the sequence of signals, i.e., 
	\begin{equation*}
		\mathcal{P}_{\theta} = \left\{\prod_{i=1}^{\infty} g_{i}(\theta, \sigma_{i}) : \sigma_{i} \in [\underline{\sigma}, \overline{\sigma}] , \forall i \right\}.
	\end{equation*}
	The initial set of DGPs is thus given by 
	\begin{equation*}
		\mathcal{P} = \cup_{\theta \in \Theta} \mathcal{P}_{\theta}.
	\end{equation*}
	For every $N \in \N$, let $\hat{x}^{N} \equiv (\hat{x}_{1}, \hat{x}_{2}, \cdots, \hat{x}_{N})$ denote a sequence of signal realizations and let $\mathcal{P}(\hat{x}^{N} )$ denote the updated set of DGPs. Because the DM's goal in this model is to learn the state of the world, let $\Theta(\hat{x}^{N} )$ denote the set of states that are compatible with the updated set of DGPs: 
	\begin{equation*}
		\Theta(\hat{x}^{N} ) \equiv \{\theta \in \Theta: \mathcal{P}_{\theta} \cap \mathcal{P}(\hat{x}^{N} ) \neq \emptyset \}.
	\end{equation*}
	In other words, as long as there exists a DGP that is compatible with the state $\theta$ in the updated set, then the state $\theta$ will be considered possible. 
	
	First, notice that applying full Bayesian updating here is to simply retain all the possible DGPs. It thus implies $\mathcal{P}(\hat{x}^{N} ) \equiv \mathcal{P}$ and hence $\Theta(\hat{x}^{N} ) \equiv \Theta$ for all $\hat{x}^{N}$. In other words, directly applying full Bayesian updating is equivalent to completely ignoring the signals in this model. 
	
	In \citet*{reshidi2020information}, they apply full Bayesian updating differently by further assigning a prior distribution $\mu$ over the possible states of the world.\footnote{While they treat the prior distribution as an additional assumption, it can also be considered as a part of the updating rule.} Then applying full Bayesian updating is to apply Bayes' rule to update this prior distribution considering every possible DGP. Specifically, for each $\theta$ let $P_{\theta} \in \mathcal{P}_{\theta}$ denote a specific DGP, the posterior probability $\mu(\theta | \hat{x}^{N})$ is then given by 	
	\begin{equation*}
		\mu(\theta | \hat{x}^{N}) = \frac{\mu(\theta) P_{\theta} (\hat{x}^{N})}{\int_{\Theta}\mu(\theta') P_{\theta'} (\hat{x}^{N})d\mu(\theta')}.
	\end{equation*}	
	
	Full Bayesian updating thus leads to a set of posteriors over the states of the world. Their main result (Theorem 1) shows that under any state $\theta$ and possible DGP, the set of posteriors converges almost surely to a set of degenerate distributions over a non-singleton set of states denoted by $[\underline{m}(\theta), \overline{m}(\theta)]$. In other words, applying full Bayesian updating by assigning a prior distribution over the states still asymptotically almost surely leads to a fully ambiguous belief over a non-vanishing set of possible states. In other words, ambiguity does not vanish. 
	
	For applying the average-then-update rule, because the goal in this model is to learn the state of the world $\theta$, instead of considering the empirical distribution of the realized signals, it is sufficient to consider only their sample mean. In this case, the average-then-update rule can be applied in the following way: 
	\begin{definition}
		The updated sets of states are given by applying the average-then update rule if for some pre-specified $\epsilon > 0$ and for all $\hat{x}^{N}$, 
		\begin{equation*}
			\Theta(\hat{x}^{N} ) = \left\{\theta \in \Theta: \left|N^{-1}\sum\limits_{i=1}^{N}\hat{x}_{i} - \theta \right| < \epsilon \right\}.
		\end{equation*}
	\end{definition}
	
	By definition, the updated set of states retains a state if it is close enough to the sample mean. Notice that the ``average'' is given by considering the average of the mean of the marginal distributions. In this simple case, the mean of the marginal distributions are all equal to $\theta$. The following proposition shows that applying average-then-update here can guarantee the updated sets of states to contain the true state asymptotically almost surely under any possible DGP. 
	
	\begin{proposition}\label{gaussian}
		For any $\theta^{*} \in \Theta$ and any DGP $P^{*} \in \mathcal{P}_{\theta^{*}}$, the updated sets of states given by applying the average-then-update rule with any $\epsilon > 0$ contain the true state asymptotically almost surely, i.e., for any $\epsilon > 0$, 
		\begin{equation*}
			\lim\limits_{N \rightarrow \infty} P^{*}(\theta^{*} \in \Theta(\hat{x}^{N})) = 1.
		\end{equation*}
	\end{proposition}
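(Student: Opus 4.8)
The plan is to reduce the statement to a weak law of large numbers for the sample mean under the true DGP. By the definition of the average-then-update rule in this model, the event $\{\theta^* \in \Theta(\hat x^{N})\}$ is precisely the event $\{|N^{-1}\sum_{i=1}^{N}\hat x_{i} - \theta^*| < \epsilon\}$, so it suffices to show that the sample mean $\bar X_{N} \equiv N^{-1}\sum_{i=1}^{N} x_{i}$ converges to $\theta^*$ in $P^*$-probability.

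First I would fix $\theta^* \in \Theta$ and $P^* \in \mathcal{P}_{\theta^*}$, and record the two moment facts that drive everything. Under $P^*$ the signals $x_{1}, x_{2}, \dots$ are mutually independent and Gaussian with $\E_{P^*}[x_{i}] = \theta^*$ for every $i$, and with variances $\sigma_{i}^{*2}$ satisfying the uniform bound $\sigma_{i}^{*2} \le \overline{\sigma}^{2}$ since each $\sigma_{i}^* \in [\underline{\sigma}, \overline{\sigma}]$. Consequently $\E_{P^*}[\bar X_{N}] = \theta^*$ and, by independence,
\[
\mathrm{Var}_{P^*}(\bar X_{N}) = N^{-2}\sum_{i=1}^{N}\sigma_{i}^{*2} \le \frac{\overline{\sigma}^{2}}{N}.
\]

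The final step is Chebyshev's inequality: for any $\epsilon > 0$,
\[
P^*\!\left(\left|\bar X_{N} - \theta^*\right| \ge \epsilon\right) \le \frac{\mathrm{Var}_{P^*}(\bar X_{N})}{\epsilon^{2}} \le \frac{\overline{\sigma}^{2}}{N\epsilon^{2}},
\]
which tends to $0$ as $N \to \infty$. Since the complementary event is exactly $\{\theta^* \in \Theta(\hat x^{N})\}$, this yields $\lim_{N\to\infty} P^*(\theta^* \in \Theta(\hat x^{N})) = 1$, as claimed.

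I do not expect a genuine obstacle here; the only point requiring care is the heterogeneity of the variances $\sigma_{i}^{*2}$, but this is neutralized by the uniform bound $\overline{\sigma}^{2}$, which makes $\mathrm{Var}_{P^*}(\bar X_{N})$ vanish at rate $1/N$ regardless of which non-identical DGP in $\mathcal{P}_{\theta^*}$ is true. Note also that the stated conclusion is an in-probability statement, so Chebyshev suffices and the full strength of Kolmogorov's strong law (invoked for Theorem \ref{asymptotic}) is not needed; however, the same uniform bound gives $\sum_{i} \sigma_{i}^{*2}/i^{2} < \infty$, so one could alternatively obtain almost sure convergence of $\bar X_{N}$ to $\theta^*$ and strengthen the claim if desired.
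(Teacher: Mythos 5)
Your proof is correct, but it takes a more elementary route than the paper. The paper proves Proposition \ref{gaussian} by verifying the hypothesis of Kolmogorov's strong law of large numbers (the same $\sum_{i}\sigma_{i}^{2}/i^{2}<\infty$ check used for Theorem \ref{asymptotic}, which holds here because $\sigma_{i}^{*2}\le\overline{\sigma}^{2}$), concluding that the sample mean converges to $\theta^{*}$ \emph{almost surely}; you instead bound $\mathrm{Var}_{P^{*}}(\bar X_{N})\le\overline{\sigma}^{2}/N$ and apply Chebyshev, which yields convergence in probability. The two approaches prove slightly different things: the displayed equation in the proposition is literally an in-probability statement, which your argument establishes exactly and with minimal machinery; but the verbal claim ``asymptotically almost surely,'' read against the paper's formal definition of accommodating the truth asymptotically almost surely (a pathwise, eventual-containment statement for $P^{*}$-almost every $\omega$), is strictly stronger and is what the SLLN route delivers. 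You correctly identify this mismatch and note that the same uniform variance bound verifies the Kolmogorov condition, so the almost-sure strengthening is available to you at no extra cost; to fully match the paper's intended conclusion you should invoke that strengthening rather than leave it as an aside.
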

	
	\begin{proof}[Proof of Proposition \ref{gaussian}]
		This proposition is proved by verifying Kolmogorov's strong law of large numbers holds, which then implies the sample mean converges almost surely to the average mean, which is $\theta^{*}$ for any possible DGP under the state $\theta^{*}$. See Appendix \ref{SLLN} for the exact definition and condition. 
	\end{proof}
	
	Because the conclusion holds for any $\epsilon > 0$, one can then let the $\epsilon$ to be arbitrarily small so that the updated set of states is arbitrarily precise.  As a result, even under a sequence of Gaussian signals with unknown and possibly heterogeneous variances, the true state can still be asymptotically almost surely learned from the realized signals. This conclusion stands in clear contrast to those obtained under full Bayesian updating. Therefore, learning can indeed be very effective even under ambiguity. 
	
	More importantly, the use of average-then-update in this model is justified in terms of improving decisions by the characterization results provided in this paper. A common motivation for applying full Bayesian updating is that the DM may concern about robustness and thus update all the possible DGPs. The main finding in this paper shows that one does not need to be that conservative and can still guarantee a robust improvement in decisions. 
	
	In an asset pricing model, consider that $\theta$ represents the dividend of an asset, and the Gaussian signal $x_{i}$ represents intangible information such as news reports. When there is only one signal realization, and the representative agent applies full Bayesian updating, \cite{Epstein2008} show that the equilibrium prices when the variance of $x_{i}$ is ambiguous are different from those when the variance is precisely known. They further conclude that poor information quality, i.e., ambiguity about variances of the signals, has an additional effect on the equilibrium prices for assets in a financial market.
	
	Proposition \ref{gaussian} implies that such an effect will eventually vanish when the number of signals goes to infinity and the representative agent applies the average-then update rule.\footnote{Instead, if the agent still applies full Bayesian updating, their result implies that such an effect will persist. See \cite{guidolin2013ambiguity} for a summary.} In this case, the agent will asymptotically learn the true dividend. This is the same as when the agent precisely knows the variances of the Gaussian signals. In other words, the prior ambiguity over the variances of the signals will be eventually ``swamped'' by a sufficiently large number of observations. Therefore the conclusion here implies that information quality per se does not have a persistent impact on the equilibrium prices of the assets. 
	
	It is true that all the Gaussian signals have the same mean is a crucial driving force for the asymptotic identification of the true state. Indeed, when there is also ambiguity about the mean of the signals, it is straightforward to show that, under the average-then-update rule, the DM can still asymptotically be ambiguous about a non-vanishing set of states. However, the essential takeaway here is that, average-then-update provides a tractable tool for studying the asymptotic learning under such ambiguity. The asymptotic result under full Bayesian updating in this case, to the best of my knowledge, is unknown in the literature. 
	
	Finally, the conclusion of Proposition \ref{gaussian} can also be obtained by applying the average-then-update rule directly using the empirical distribution. Notice that because the random variable $x_{i}$ takes value in an infinite set $\R$, the conclusion cannot be directly implied by Theorem \ref{asymptotic}. Instead, it can be shown by invoking a generalized Glivenko-Cantelli theorem for independent but non-identically distributed random variables derived by \cite{WELLNER1981309}. 
	
	\subsection{Bernoulli Models with Ambiguous Nuisance Parameters}\label{Bernoulli}
	To provide a more concrete illustration of both updating rules proposed in this paper, I consider another stylized model proposed in \cite{Walley1991-WALSRW}. A similar model is used to model dynamic portfolio choice under ambiguity in \cite{Epstein2007}. 
	
	Let $\Omega = \{1, 0\}^{\infty}$. It is convenient to consider a set of \textit{structural parameters} denoted by $\Theta = [0,1]$. Each structural parameter $\theta$ can be thought of as a rough estimate about the probability of getting the outcome $1$ from an experiment. However, the exact probability for each experiment will be also determined by a \textit{nuisance parameter} $\psi_{i} \in [0,1]$ that can be different across experiments. Specifically, for any structural parameter $\theta$, the probability of observing outcome $1$ in the $i$-th experiment is given by
	\begin{equation*}
		(1-\delta)\theta + \delta \psi_{i}
	\end{equation*}
	for some fixed $\delta \in [0,1]$. Hereafter, I use the probability of outcome $1$ to denote a probability distribution over $\{1,0\}$. 
	
	Because each $\psi_{i}$ is only known to be contained in the interval $[0,1]$, each structural parameter $\theta$ is thus corresponding to a set of possible DGPs given by the following
	\begin{equation*}
		\mathcal{P}_{\theta} = \{P \in \Delta_{indep}(\Omega): P_{i} \in [(1-\delta)\theta, (1-\delta)\theta + \delta], \forall i\}.
	\end{equation*}
	The initial set of DGPs is therefore
	\begin{equation*}
		\mathcal{P} = \cup_{\theta \in \Theta} \mathcal{P}_{\theta}.
	\end{equation*}
	
	The learning goal in this model is to predict the realizations of future experiments. For any DGP $P$, I use $P_{i = N+1}$ to denote the marginal distribution over the $(N+1)$-th experiment. The same notation is also used for sets of DGPs. Then the initial prediction about the $(N+1)$-th experiment is simply,
	\begin{equation*}
		\mathcal{P}_{i = N+1} = [0,1].
	\end{equation*}
	
	First, consider the DM's asymptotic prediction using the average-then-update rule. For simplicity, I will ignore the pre-specified $\epsilon$ by taking it to be arbitrarily small. Then the updated set will be given by the following: 
	\begin{equation*}
		\mathcal{P}(\omega_{N}) = \left\{P \in \mathcal{P}: N^{-1} \sum\limits_{i=1}^{N}P_{i} = \Phi(\omega_{N}) \right\}.
	\end{equation*}
	For any $\theta \in \Theta$ there exists $P \in \mathcal{P}_{\theta}$ that satisfies the above equation if and only if 
	\begin{equation*}
		\Phi(\omega_{N}) \in [(1-\delta)\theta, (1-\delta)\theta + \delta].
	\end{equation*}
	As a result, the updated set of structural parameters, defined by 
	\begin{equation*}
		\Theta(\omega_{N}) \equiv \{\theta \in \Theta: \mathcal{P}_{\theta} \cap \mathcal{P}(\omega_{N}) \neq \emptyset \}
	\end{equation*}
	is then given by 
	\begin{equation*}
		\Theta(\omega_{N})  = \left[ \max \left\{ \frac{\Phi(\omega_{N}) - \delta}{1-\delta} , 0 \right\} , \min \left\{ \frac{\Phi(\omega_{N}) }{1-\delta}, 1\right\} \right].
	\end{equation*}
	Notice the updated prediction is also completely determined by the updated set of structural parameters and given by 
	\begin{equation*}
		\mathcal{P}(\omega_{N})_{i = N+1} = \left[ \max \left\{ \Phi(\omega_{N}) - \delta , 0 \right\} , \min \left\{ \Phi(\omega_{N}) + \delta, 1\right\} \right].
	\end{equation*}
	Therefore, the asymptotic prediction for the future experiment under the average-then-update rule is simply the $\delta$ ``fattening'' of the observed empirical frequency.
	
	For a comparison, consider the asymptotic prediction under the maximum likelihood updating which can be shown to be 
	\begin{equation*}
		\mathcal{P}^{ML}(\omega_{N})_{i = N+1} = \left[ \max \left\{ \Phi(\omega_{N}) - (1-\Phi(\omega_{N}))\delta , 0 \right\} , \min \left\{ \Phi(\omega_{N}) + \Phi(\omega_{N})\delta , 1\right\} \right].
	\end{equation*}
	Notice it is always a subinterval of the prediction under average-then-update. As a result, the true marginal distribution governing the future experiment may not be covered by the asymptotic prediction given by maximum likelihood updating. For example, when
	\begin{equation*}
		\theta = \frac{\Phi(\omega_{N}) - \delta}{1-\delta} > 0
	\end{equation*}
	and $\psi_{i} = 1$ for all $i$ from $1$ to $N$. The empirical frequency also converges to $\Phi(\omega_{N})$. However, if $\psi_{N+1} = 0$, then the true marginal distribution over the $(N+1)$-th experiment is $(\Phi(\omega_{N}) - \delta )$ which is not covered by the prediction under the maximum likelihood updating. Therefore, the resulting decisions can be strictly worse than simply ignoring the data, also confirming the observation in the illustrative example. \\
	
	Next, consider the finite-sample prediction with an asymptotic level $1-\alpha$ given by applying the robust i.i.d. statistical tests. For any sample data $\omega_{N}$, the first step is to construct the confidence interval as if the underlying DGPs are all i.i.d. binomial distributions. Specifically, the corresponding confidence interval here is the \textit{Wilson Interval}\footnote{Different from the commonly used Wald Interval which uses the sample variance, Wilson Interval is constructed by directly inverting the statistical tests, thus using the null variance. Wilson Interval has considerably better asymptotic performance than the Wald Interval. See \citet*{Brown2001} for an extensive discussion.}. Let $z_{\alpha/2}$ denote the upper $100(\alpha/2) \%$ quantile of the standard normal distribution. Let $[\underline{W}(\omega_{N}), \overline{W}(\omega_{N})]$ denote the Wilson Interval which has the following closed-form expression: 
	\begin{align*}
		& \overline{W}(\omega_{N}) = \frac{N \Phi(\omega_{N}) + z_{\alpha/2}^{2}/2}{N + z_{\alpha/2}^{2}} + \frac{z_{\alpha/2} N^{1/2}}{N + z_{\alpha/2}^{2}} \left( \Phi(\omega_{N})(1- \Phi(\omega_{N})) + z_{\alpha/2}^{2}/(4N) \right)^{1/2}, \\
		& \underline{W}(\omega_{N}) = \frac{N \Phi(\omega_{N}) + z_{\alpha/2}^{2}/2}{N + z_{\alpha/2}^{2}} - \frac{z_{\alpha/2} N^{1/2}}{N + z_{\alpha/2}^{2}} \left( \Phi(\omega_{N})(1- \Phi(\omega_{N})) + z_{\alpha/2}^{2}/(4N) \right)^{1/2}.
	\end{align*}
	
	Given the i.i.d. confidence interval, the second step is to consider non-identical DGPs whose average of sample marginals falls into this confidence interval. It can be shown that a structural parameter $\theta$ is retained in the updated set if and only if 
	\begin{equation*}
		[(1-\delta)\theta, (1-\delta)\theta + \delta] \cap [\underline{W}(\omega_{N}), \overline{W}(\omega_{N})] \neq \emptyset.
	\end{equation*}
	Thus the updated set is then given by
	\begin{equation*}
		\Theta(\omega_{N})  = \left[ \max \left\{ \frac{\underline{W}(\omega_{N}) - \delta}{1-\delta} , 0 \right\} , \min \left\{ \frac{\overline{W}(\omega_{N})}{1-\delta}, 1\right\} \right].
	\end{equation*}
	Similarly, the updated prediction in this case is 
	\begin{equation*}
		\mathcal{P}(\omega_{N})_{i = N+1} = \left[ \max \left\{ \underline{W}(\omega_{N}) - \delta , 0 \right\} , \min \left\{\overline{W}(\omega_{N}) + \delta, 1\right\} \right].
	\end{equation*}
	Notice that it is also simply the $\delta$ ``fattening'' of the Wilson Interval. Both the asymptotic and finite-sample predictions under the proposed updating rules in this model have tractable expressions and also intuitive interpretations. 
	
	\section{Concluding Remarks}\label{conclusion}
	This paper emphasizes the role of statistical inference in making decisions facing a set of possible distributions. It borrows and combines methodologies from two strands of literature, statistical decision theory and dynamic decisions under ambiguity, and shows that sample data can robustly and objectively improve such decisions if and only if the inference from data accommodates the true DGP. 
	
	When there is limited knowledge about how sample data is drawn from a population of heterogeneous individuals, the decision-maker might contemplate a set of independent but possibly non-identical distributions. In this case, this paper uncovers that common inference methods, such as maximum likelihood and Bayesian updating often fail to accommodate the true DGP. 
	
	To address this problem, this paper develops two novel and tractable updating rules, average-then-update and robust i.i.d. statistical tests. They are shown to guarantee the updated sets to accommodate the true DGP and thus robustly improve decisions either asymptotically almost surely or in finite sample with a pre-specified confidence level. 
	
	Finally, this paper also studies a general decision framework that can be easily adapted for applications involving model uncertainty and ambiguity. The proposed updating rules prove to be tractable tools for studying those problems. More importantly, applying the proposed updating rules often leads to different conclusions from the existing models. Thus, it creates an avenue for new research questions and findings. This paper explores two examples along this direction. Formal development is left for future research. 
	
	\newpage
	\begin{appendices}
		\counterwithin{table}{section}
		
		\section{Additional Discussions}
		
		\subsection{The Illustrative Example}\label{examplediscussion}
		In this subsection, I will show that applying Bayesian updating in the illustrative example also asymptotically almost surely leads to rule out the true DGP, i.e., assigns a vanishingly small posterior probability to the true DGP. 
		
		For simplicity, suppose the DM's prior distribution over the possible DGPs are supported on the following set of DGPs: $\{0.6, 1\}^{\infty} \cup \{(1/3)^{\infty}\}$ where 
		\begin{equation*}
			\{0.6, 1\}^{\infty}  \equiv \{P \in \Delta_{indep}(\{1,0\}): P_{i} \in \{0.6, 1\}, \forall i\},
		\end{equation*}
		Then for each $N$, let $\{0.6, 1\}^{N} \cup \{(1/3)^{N}\}$ denote the set of possible marginals over the sample experiments given the initial set. Notice that such a set is finite with a cardinality of $(2^{N}+1)$ and hence also changes with the sample size. Thus, for each sample data $\omega_{N}$, in order to apply Bayesian updating, one needs to specify a prior distribution $\mu_{N}$ over the set of such marginals. Then applying Bayesian updating conditioning on the data will lead to a posterior over such marginals. 
		
		Consider when the prior distribution $\mu_{N}$ is a uniform distribution, i.e. $\mu_{N}(P_{N}) \equiv 1/(2^{N}+1)$ for all $P$. Then the posterior probability of each marginal over sample experiments will be given by
		\begin{equation*}
			\mu_{N}(P_{N}| \omega_{N}) = \frac{P_{N}(\omega_{N})\mu_{N}(P_{N})}{\sum\limits_{P'_{N}}  P'_{N}(\omega_{N})\mu_{N}(P'_{N})} = \frac{P_{N}(\omega_{N})}{\sum\limits_{P'_{N}}  P'_{N}(\omega_{N})}.
		\end{equation*}
		Notice it is proportional to the likelihood of observing the sample data. Therefore when the true DGP is $(1/3)^{\infty}$ and as $N$ goes to infinity, the posterior probability of the true marginal will be arbitrarily small compared to some other possible marginal in the set $\{0.6, 1\}^{N}$. As a result, the posterior probability of the true DGP under Bayesian updating will asymptotically almost surely goes to zero. 
		
		One might notice that while the posterior asymptoticly almost surely rules out the true DGP, the prior distribution also assigns a vanishingly small probability to the true DGP. This is not a problem if the DM applies the maxmin expected-utility criterion for making the data-free decision. However, if the DM chooses to apply the \textit{expected-utility} criterion using the prior distribution, then data-free and data-driven decisions are going to coincide in this example. 
		
		Consider the following example which further shows that under Bayesian updating, data can still asymptotically almost surely lead to strictly worse decisions even under the expected-utility criterion. \\
		
		\textbf{Example.} Let $\Omega = \{1,0\}^{\infty}$ and consider the following initial set of DGPs: 
		\begin{equation*}
			\{0.4, 0.5\}^{\infty}  \cup \{0.6, 1\}^{\infty}.
		\end{equation*}
		To facilitate discussion, let
		\begin{equation*}
			\mathcal{P}_{1} = \{0.4, 0.5\}^{\infty}  \text{ and } \mathcal{P}_{2}= \{0.6, 1\}^{\infty}.
		\end{equation*}
		
		Suppose initially the DM considers a uniform distribution over all the possible DGPs. Thus for each $N$, the prior distribution $\mu_{N}$ over the marginals over sample experiments is thus given by the following: 
		\begin{equation*}
			\mu_{N}(P_{N}) = 1/2^{N+1}.
		\end{equation*}
		Consider a decision problem $D^{N} = \{f,x\}$ where $f$ depends only on the $(N+1)$-th experiment and pays 1 and 0 in state 1 and 0 respectively. $x$ pays a constant payoff of 0.55. Notice that the expected payoff from $f$ under the prior distribution is 
		\begin{equation*}
			\frac{1}{2} \times  \left(\frac{1}{2} \times 0.4 + \frac{1}{2}\times 0.5 \right) + \frac{1}{2} \times \left(\frac{1}{2} \times 0.6 + \frac{1}{2}\times 1 \right) = 0.625 > 0.55.
		\end{equation*}
		The data-free decision according to the expected-utility criterion is thus to choose $f$ over $x$. 
		
		Suppose the true DGP is $(0.6)^{\infty}$, thus asymptotically the DM almost surely observes sample data with an empirical frequency of $1$ close to $0.6$. The observation here is that, the posterior distribution given such data will eventually concentrate on the marginals in $\mathcal{P}_{1}$. Notice that the average likelihood of observing any data under $\mathcal{P}_{1}$ is the same as the likelihood under the i.i.d. distribution $(0.45)^{\infty}$. The same also applies to $\mathcal{P}_{2}$ and it is the same as $(0.8)^{\infty}$. Because $0.45$ is closer to $0.6$ in terms of the Kullback-Leibler divergence, the well-known result by \cite{Berk1966} further implies the posterior distribution will eventually concentrate on  $(0.45)^{\infty}$, i.e., the marginals in $\mathcal{P}_{1}$. 
		
		According to such a posterior distribution, notice that the expected payoff of the act $f$ will be close to $0.45$, strictly lower than the payoff from $x$. Thus the data-driven decision will be choosing $x$, although it is strictly worse than the DM's objective payoff from her data-free decision of choosing $f$ (0.6). 
		
		\subsection{Minimax Regret Criterion}\label{regret}
		This subsection presents an example where the DM applies the minimax regret criterion. 
		
		Let $\Omega = \{1, 0\}$ and I use the probability of $1$ to denote any probability distribution over $\Omega$. Let the initial set be $\mathcal{P} = [0,1]$, i.e. the set of all possible probability distributions over $\Omega$. 
		
		Consider a basic decision problem $D_{b} = \{f, x\}$ where $f$ pays $1$ in state $1$ and pays $0$ in state $0$ and $x$ always pays $2/3$. According to the initial set, the maximum regret from taking these two actions are given by: 
		\begin{equation*}
			R(f) = \max\limits_{P \in \mathcal{P}} [P(1) (1-1) + P(0) (2/3 - 0)] = 2/3,
		\end{equation*}
		
		\begin{equation*}
			R(x) = \max\limits_{P \in \mathcal{P}} [P(1) (1-2/3) + P(0) (2/3 - 2/3)] = 1/3.
		\end{equation*}
		
		According to the minimax regret criterion, the data-free decision is thus to choose $x$. Next, suppose the DM's updated set is given by $\mathcal{P}' = [3/5,1]$. Then for the same basic decision problem, one has
		\begin{equation*}
			R'(f) = \max\limits_{P \in \mathcal{P}'} [P(1) (1-1) + P(0) (2/3 - 0)] = 4/15,
		\end{equation*}
		
		\begin{equation*}
			R'(x) = \max\limits_{P \in \mathcal{P}'} [P(1) (1-2/3) + P(0) (2/3 - 2/3)] = 1/3.
		\end{equation*}
		Therefore, the data-driven decision becomes to choose $f$. However, when $P^{*} \in [3/5, 2/3]$ which is contained in the updated set, objective payoff is higher from the data-free decision. 
		
		\newpage
		\section{Proofs of Results in Section \ref{characterization}}
		As this section studies the static problem of comparing two MEU decisions, to simplify notations, consider the following equivalent formulation of the problems in this section. 
		
		Let $\Omega$ be a state space and $\Delta(\Omega)$ be the set of all the countably additive probability measures over $\Omega$ endowed with the weak* topology. Let $\mathcal{P}$ be a closed and convex subset of $\Delta(\Omega)$ and $\mathcal{P}'$ a closed and convex subset of $\mathcal{P}$. Fix any $P^{*} \in \mathcal{P}$. 
		
		Let $f: \Omega \rightarrow [0,1]$ denote an act and $\mathcal{F}$ the set of all acts. A compact set $D \subseteq \mathcal{F}$ is a decision problem. Let $\mathcal{D}$ denote the set of all decision problems. Let $c(D)$ and $c'(D)$ denote the MEU decisions based on $\mathcal{P}$ and $\mathcal{P}'$. Let $D_{b}$ denote a basic decision problem and let $\mathcal{D}_{b}$ be defined analogously. 
		
		To see this is equivalent, one can let $S^{N} = \Omega$. The closed and convex hull of the marginals over $S^{N}$ for the initial and updated sets are given by $\mathcal{P}$ and $\mathcal{P}'$ respectively. Because $\mathcal{P}$ and $\mathcal{P}'$ are arbitrary, the additional independence requirement can always be satisfied. 
		
		\subsection{Proof of Theorem \ref{basic}}
		Under the simplified formulation, this theorem is equivalent to the following claim: 
		
		\begin{claim}
			$W(c'(D_{b}), P^{*}) \geq W(c(D_{b}), P^{*})$ for all $D_{b} \in \mathcal{D}_{b}$ if and only if $P^{*} \in \mathcal{P}'$. 	Moreover, the strict inequality holds for some $D_{b}$ if and only if $\mathcal{P}' \subsetneqq \mathcal{P}$. 
		\end{claim}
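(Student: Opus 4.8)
The plan is to reduce everything to a comparison of the two scalars $v(f)=\min_{P\in\mathcal{P}}\int f\,dP$ and $v'(f)=\min_{P\in\mathcal{P}'}\int f\,dP$ attached to the ambiguous act $f$ of a basic problem $D_b=\{f,x\}$, and to exploit three elementary facts: (i) since $\mathcal{P}'\subseteq\mathcal{P}$ one always has $v(f)\le v'(f)$; (ii) in a basic problem the MEU rule picks the ambiguous act iff its value weakly exceeds the constant $x$; and (iii) by the consistency tie-breaking (Assumption \ref{assumption4}) the data-driven and data-free choices can differ only in one direction, namely $c(D_b)=x$ and $c'(D_b)=f$. The first step I would carry out is to establish this ``decisions differ only one way'' claim: if $c(D_b)=f$ then $v(f)\ge x$, hence $v'(f)\ge v(f)\ge x$, so $f$ is still optimal under $\mathcal{P}'$ and Assumption \ref{assumption4} forces $c'(D_b)=f$; thus a genuine disagreement requires $c(D_b)=x$ together with $v'(f)>x$, which then yields $c'(D_b)=f$. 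When the decisions agree the two objective payoffs coincide, so the whole claim concerns only this disagreement region.

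For the ``if'' direction, assuming $P^{*}\in\mathcal{P}'$, I would fix an arbitrary $D_b$ and note that on the agreement region the desired inequality holds with equality, while on the disagreement region $W(c'(D_b),P^{*})-W(c(D_b),P^{*})=\int f\,dP^{*}-x\ge v'(f)-x>0$, where the first inequality uses $P^{*}\in\mathcal{P}'$ so that $\int f\,dP^{*}\ge v'(f)$. This delivers weak dominance across all basic problems.

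For the ``only if'' direction I would argue by contraposition and separation. If $P^{*}\notin\mathcal{P}'$, then because $\mathcal{P}'$ is closed and convex in $\Delta(\Omega)$ under the weak* topology, the separating hyperplane theorem yields a bounded measurable function strictly separating $P^{*}$ from $\mathcal{P}'$; after replacing it by a suitable increasing affine image into $[0,1]$ (flipping its sign if necessary) this produces an act $h$ with $\int h\,dP^{*}<v'(h)$. The key and entirely free observation is that $P^{*}\in\mathcal{P}$ forces $v(h)\le\int h\,dP^{*}<v'(h)$, so the disagreement region is automatically nonempty: choosing any constant $x$ with $\int h\,dP^{*}<x<v'(h)$ gives $x>v(h)$ (so $c(D_b)=x$) and $x<v'(h)$ (so $c'(D_b)=h$), whence $W(c'(D_b),P^{*})=\int h\,dP^{*}<x=W(c(D_b),P^{*})$ and dominance fails.

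For the strict (``moreover'') part, still under $P^{*}\in\mathcal{P}'$, I would observe that strict dominance in some $D_b$ forces the decisions to differ there, hence $v(h)<v'(h)$ for that act, which is possible only if $\mathcal{P}'\subsetneqq\mathcal{P}$; conversely, if $\mathcal{P}'\subsetneqq\mathcal{P}$ I would pick $Q\in\mathcal{P}\setminus\mathcal{P}'$, separate it from $\mathcal{P}'$ to obtain $h$ with $\int h\,dQ<v'(h)$, use $Q\in\mathcal{P}$ to get $v(h)\le\int h\,dQ<v'(h)$, and choose $x\in(v(h),v'(h))$, so that decisions differ and $W(c'(D_b),P^{*})=\int h\,dP^{*}\ge v'(h)>x=W(c(D_b),P^{*})$ is strict. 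The main obstacle is purely the separation step: one must ensure the separating continuous linear functional on the weak*-topologized space of measures is represented by an affinely rescalable act and that the relevant minima are attained, which is where compactness of $\mathcal{P},\mathcal{P}'$ and the duality between $\Delta(\Omega)$ and bounded measurable functions do the work; everything else is the bookkeeping of the three elementary facts above.
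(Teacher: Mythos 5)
Your proposal is correct and follows essentially the same route as the paper: the one-directional disagreement observation (data-free picks $x$, data-driven picks $f$) combined with $\int f\,dP^{*}\ge\min_{P\in\mathcal{P}'}\int f\,dP>x$ for the ``if'' direction, and a strict separating hyperplane argument producing the act-plus-constant pair for the ``only if'' direction. The only difference is that you spell out the ``moreover'' part in detail where the paper dismisses it as routine, and you flag the duality/representability issue in the separation step more explicitly than the paper does.
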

		
		To prove the claim, first consider the ``if'' direction.\\
		
		\textbf{IF.} Consider any basic decision problem $D_{b}  = \{f,x\}$. Notice that if $c(D_{b} ) = c'(D_{b} )$, then the conclusion holds trivially. Thus it suffices to consider $D_{b}$ where the two decisions differ. 
		
		Fix any such $D_{b}$, because $\mathcal{P}' \subseteq \mathcal{P}$, it has to be the case that $c(D_{b} ) = x$ and $c'(D_{b} ) = f$. Notice that when $f$ and $x$ are indifferent under $c(\cdot)$, because of the tie-breaking assumption (Assumption \ref{assumption4}), if $c(D_{b} ) = f$, then $c'(D_{b} ) = f$; if $c(D_{b} ) = x$, then $c'(D_{b} )$ can be either $x$ or $f$. 
		
		Given $P^{*} \in \mathcal{P}'$, for any act $f$ in such $D_{b}$ one has, 
		\begin{equation*}
			\int_{\Omega} f(\omega) dP^{*}(\omega) \geq \min\limits_{P \in\mathcal{P}'} \int_{\Omega} f(\omega)dP(\omega) > x,
		\end{equation*}
		where the second strict inequality is because $c'(D_{b} ) = f$ and $c(D_{b} ) = x$. (If indifferent, then $c'(D_{b} ) = x$.) 
		
		It thus implies 
		\begin{equation*}
			W(c'(D_{b} ), P^{*}) > x = W(c(D_{b} ), P^{*})
		\end{equation*}
		for all such $D_{b}  \in \mathcal{D}_{b}$. \\
		
		\textbf{ONLY IF.} Consider the contrapositive statement. If $P^{*} \notin \mathcal{P}'$, applying the strict separating hyperplane theorem implies the existence of an act $f$ and a constant act $x$ such that 
		\begin{equation*}
			\min\limits_{P \in \mathcal{P}'} \int_{\Omega} f(\omega) dP > x > \int_{\Omega} f(\omega) dP^{*} \geq \min\limits_{P \in \mathcal{P} } \int_{\Omega} f(\omega) dP(\omega).
		\end{equation*}
		
		Then for the basic decision problem $D_{b} = \{f,x\}$, one has $c(D_{b}) = x$ and $c'(D_{b}) = f$. Moreover, 
		\begin{equation*}
			W(c'(D_{b}), P^{*}) < x = W(c(D_{b}) ,P^{*}).
		\end{equation*}
		Thus, the conclusion. \\
		
		Finally, for the proof of the second statement of the claim, it can be routinely shown by applying the strict separating hyperplane theorem. $\square$
		
		\subsection{Proof of Theorem \ref{impossible}}\label{impossible_proof}
		This theorem is an implication of the following slightly more general theorem. 
		\begin{theorem}
			For any two DGPs $P^{*}$ and $\hat{P}^{*}$ with $P^{*N} \neq \hat{P}^{*N}$, objective dominance across all decision problems holds simultaneously under both DGPs if and only if $co(\mathcal{P}(\omega_{N})^{N}) = co(\mathcal{P}^{N})$. 
		\end{theorem}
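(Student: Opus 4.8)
\emph{Reduction and the ``if'' direction.} I would run the argument in the simplified static formulation of this appendix, writing $\mathcal{P}$ and $\mathcal{P}'=\mathcal{P}(\omega_{N})$ for the closed convex hulls of the initial and updated future--marginals, so that the displayed condition reads $\mathcal{P}'=\mathcal{P}$ and $W(f,P^{*})=\int f\,dP^{*}$. For each act set $V(f)=\min_{P\in\mathcal{P}}\int f\,dP$ and $V'(f)=\min_{P\in\mathcal{P}'}\int f\,dP$, the two MEU functionals. Since the minimum of a linear functional over a set equals its minimum over the closed convex hull, $\mathcal{P}'=\mathcal{P}$ forces $V'\equiv V$; the two rules then induce identical rankings on every menu, so by the tie-breaking rule of Assumption \ref{assumption4} we get $c'(D)=c(D)$ for all $D$, and dominance holds with equality under \emph{every} DGP, in particular under both $P^{*}$ and $\hat{P}^{*}$. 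The whole content is therefore the converse.

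\emph{The ``only if'' direction (contrapositive).} Assume $\mathcal{P}'\subsetneq\mathcal{P}$ and, toward a contradiction, that objective dominance across all decision problems holds simultaneously under $P^{*}$ and $\hat{P}^{*}$ with $P^{*}\neq\hat{P}^{*}$ (in future marginals). Basic decision problems form a subset of all decision problems, so Theorem \ref{basic} applies: dominance under $P^{*}$ forces $\mathcal{P}'$ to accommodate $P^{*}$, i.e.\ $P^{*}\in\mathcal{P}'$, and likewise $\hat{P}^{*}\in\mathcal{P}'$. Because $P^{*}\neq\hat{P}^{*}$, the convex set $\mathcal{P}'$ contains two distinct points and is in particular \emph{not} a singleton. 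The plan is to show that these two features --- proper refinement together with a non-singleton $\mathcal{P}'$ containing the target DGP --- already produce a two-act menu on which the data-free decision is objectively strictly better under $P^{*}$, contradicting dominance under $P^{*}$.

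\emph{The key construction.} Proper inclusion of closed convex sets yields, by strict separation of a point of $\mathcal{P}\setminus\mathcal{P}'$ from $\mathcal{P}'$, an act $e$ with $m:=\min_{\mathcal{P}}\int e<\min_{\mathcal{P}'}\int e=:m'$, whose $\mathcal{P}$-minimizer $P_{0}$ lies in $\mathcal{P}\setminus\mathcal{P}'$; thus the worst-case value of $e$ \emph{jumps up} when one refines from $\mathcal{P}$ to $\mathcal{P}'$. Non-singletonness supplies a point $R\in\mathcal{P}'$ with $R\neq P^{*}$, and separating $P^{*}$ from $R$ gives a direction along which $P^{*}$ strictly dominates $R$. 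I would take $g$ to be an affine normalization of $e$ --- very poor at $P_{0}$ but good over all of $\mathcal{P}'$ --- and engineer $f$ to be objectively strong at $P^{*}$ while having its $\mathcal{P}'$-worst case pinned at $R$ below that of $g$, and its $\mathcal{P}$-worst case kept above that of $g$. Tuning the two additive levels so that
\[
V(f)>V(g),\qquad V'(g)>V'(f),\qquad \textstyle\int f\,dP^{*}>\int g\,dP^{*}
\]
makes the data-free rule pick $f$, the data-driven rule reverse to $g$ (its weak spot $P_{0}$ having been removed), and yet leaves $g$ objectively worse than $f$ at $P^{*}$ --- the desired violation. Since $\mathcal{P}'=\{P^{*}\}=\{\hat{P}^{*}\}$ is impossible, dominance cannot hold under both, which completes the contrapositive.

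\emph{Main obstacle.} The delicate step is the simultaneous satisfiability of the three displayed inequalities. A naive axis-aligned choice --- taking $f$ and $g$ to point purely along the two separating directions --- over-constrains the problem: the amount $\int e\,dP^{*}-m'$ by which $P^{*}$ sits above the refined worst case in direction $e$ can swamp the slack available for $f$, so the reversal fails to be objectively harmful. Resolving this requires letting $f$ genuinely \emph{mix} the two directions, so that its worst case over the larger set $\mathcal{P}$ is no longer pinned at $P_{0}$, and then invoking a continuity argument in the additive levels to land in the (nonempty) admissible window. This is exactly where partial identification bites: the construction works precisely because the weak spot $R$ of $f$ inside $\mathcal{P}'$ is a point \emph{distinct from} $P^{*}$, which is why the hypothesis $P^{*}\neq\hat{P}^{*}$ --- pushing $\mathcal{P}'$ off a singleton via Theorem \ref{basic} --- is indispensable.
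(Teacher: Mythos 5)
There is a genuine gap, and it sits at the center of your ``only if'' direction. You reduce the problem to showing that proper refinement ($\mathcal{P}'\subsetneq\mathcal{P}$) together with a non-singleton $\mathcal{P}'$ containing $P^{*}$ already yields a menu on which the data-free decision is objectively strictly better under $P^{*}$. That intermediate claim is false. Take $\mathcal{P}$ to be any non-singleton closed convex set, $P^{*}\in\mathcal{P}$, and $\mathcal{P}'=\alpha P^{*}+(1-\alpha)\mathcal{P}$ for some $\alpha\in(0,1)$. This $\mathcal{P}'$ is a proper, non-singleton refinement containing $P^{*}$, and yet objective dominance across \emph{all} decision problems holds under $P^{*}$: whenever $c(D)=f$ and $c'(D)=f'$ differ, optimality of $f'$ for $\mathcal{P}'$ gives $\alpha\int f'\,dP^{*}+(1-\alpha)\min_{\mathcal{P}}\int f'\geq\alpha\int f\,dP^{*}+(1-\alpha)\min_{\mathcal{P}}\int f$, and optimality of $f$ for $\mathcal{P}$ then forces $\int f'\,dP^{*}\geq\int f\,dP^{*}$. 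So the ``admissible window'' of additive levels you hope to land in by continuity is empty for such $\mathcal{P}'$, and no amount of mixing the two separating directions will open it. The hypothesis $P^{*N}\neq\hat{P}^{*N}$ has to do far more work than merely ruling out a singleton $\mathcal{P}'$.

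The paper's proof instead establishes an exact characterization (its Lemma \ref{objectivedominancelemma}): objective dominance across all decision problems under a single $P^{*}$ holds if and only if $\mathcal{P}'=\alpha P^{*}+(1-\alpha)\mathcal{P}$ for some $\alpha\in[0,1]$, i.e., $\mathcal{P}'$ is a contraction of $\mathcal{P}$ toward $P^{*}$. (The ``only if'' half of that lemma is where the separating-hyperplane constructions you are reaching for actually live, but they are deployed against sets that fail to be such contractions, not against arbitrary proper non-singleton refinements.) The two-DGP statement then falls out immediately: a set cannot simultaneously be a nontrivial contraction of $\mathcal{P}$ toward two distinct points, so the only $\mathcal{P}'$ satisfying the lemma for both $P^{*}$ and $\hat{P}^{*}$ is the one with $\alpha=0$, namely $\mathcal{P}$ itself. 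To repair your argument you would need to prove this structural lemma (or an equivalent), rather than argue from non-singletonness; your ``if'' direction and the reduction via Theorem \ref{basic} to $P^{*},\hat{P}^{*}\in\mathcal{P}'$ are fine as far as they go.
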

		
		Under the simplified formulation, this theorem is equivalent to the following claim: 
		
		\begin{claim}
			Given $P^{*} \neq \hat{P}^{*}$, $W(c'(D), P^{*}) \geq W(c(D), P^{*})$ and $W(c'(D), \hat{P}^{*}) \geq W(c(D), \hat{P}^{*})$ for all $D \in \mathcal{D}$ if and only if $\mathcal{P} = \mathcal{P}'$. 
		\end{claim}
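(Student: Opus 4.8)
The plan is to reduce objective dominance across all decision problems to a pointwise condition on pairs of acts and then exploit that $\mathcal{P}' \subsetneq \mathcal{P}$ forces the two worst-case functionals to disagree. Write $v(f) = \min_{P \in \mathcal{P}} \int_\Omega f\,dP$ and $v'(f) = \min_{P \in \mathcal{P}'} \int_\Omega f\,dP$, so that $c(D)$ maximizes $v$ over $D$, $c'(D)$ maximizes $v'$ over $D$, and $v' \geq v$ since $\mathcal{P}' \subseteq \mathcal{P}$. In a two-act problem $D=\{a,b\}$ the data-free rule selects the $v$-larger act and the data-driven rule the $v'$-larger one, and a general $D$ only ever pits its $v$-maximizer against its $v'$-maximizer, so dominance under $P^{*}$ across all of $\mathcal{D}$ is equivalent to the pairwise statement: for all acts $a,b$, if $v(a)\geq v(b)$ and $v'(a)\leq v'(b)$ then $\int_\Omega a\,dP^{*}\leq\int_\Omega b\,dP^{*}$ (the tie-breaking in Assumption \ref{assumption4} is what lets me restrict to this clean comparison). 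The ``if'' direction of the claim is then immediate: if $\mathcal{P}'=\mathcal{P}$ then $v=v'$, hence $c'(D)=c(D)$ by Assumption \ref{assumption4}, and equality holds under every DGP, in particular under both $P^{*}$ and $\hat P^{*}$.

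For the ``only if'' direction I would first record two reductions. By Theorem \ref{basic} applied to basic problems, dominance under $P^{*}$ already forces $P^{*}\in\mathcal{P}'$, and likewise $\hat P^{*}\in\mathcal{P}'$, so both candidate DGPs live in $\mathcal{P}'$. Second, for each fixed $D$ the map $P\mapsto\int_\Omega(c'(D)-c(D))\,dP$ is linear, so the set $\mathcal{G}$ of DGPs under which dominance holds is an intersection of half-spaces, hence convex. Consequently, if dominance held under two distinct $P^{*},\hat P^{*}$ it would hold along the whole segment $[P^{*},\hat P^{*}]\subseteq\mathcal{P}'$, and I may pass to the interior point $\bar P=\tfrac12(P^{*}+\hat P^{*})$, which has points of $\mathcal{P}'$ on either side of it along the direction $e=P^{*}-\hat P^{*}\neq 0$. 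The goal becomes to contradict $\bar P\in\mathcal{G}$ whenever $\mathcal{P}'\subsetneq\mathcal{P}$.

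The contradiction comes from a witness act. Since $\mathcal{P}'\subsetneq\mathcal{P}$, pick $Q\in\mathcal{P}\setminus\mathcal{P}'$ and apply the strict separating hyperplane theorem to separate $Q$ from the compact convex set $\mathcal{P}'$; after an affine normalization into $[0,1]$ this yields an act $w$ with $v(w)\leq\int_\Omega w\,dQ<\min_{P\in\mathcal{P}'}\int_\Omega w\,dP=v'(w)$, so the face $M(w)=\arg\min_{P\in\mathcal{P}}\int_\Omega w\,dP$ is disjoint from $\mathcal{P}'$ and the active $\mathcal{P}'$-minimizer $R_0$ realizing $v'(w)$ lies strictly inside $\mathcal{P}$. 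Taking $b=w$, I would build $a$ as a finite perturbation $w+su$ in a direction $u$ assembled from $e$ and the separating functional, aiming for: $v(a)\geq v(w)$ (so the data-free rule still prefers $a$, using that the bar $v(w)$ is low), $v'(a)\leq v'(w)$ (arranged by choosing $u$ with $\int_\Omega u\,dR_0\leq 0$, so the $\mathcal{P}'$-worst-case does not rise), and $\int_\Omega a\,d\bar P>\int_\Omega w\,d\bar P$ (i.e. $\int_\Omega u\,d\bar P>0$). Such an $a$ witnesses a violation of the pairwise criterion at $\bar P$, contradicting $\bar P\in\mathcal{G}$ and forcing $\mathcal{P}'=\mathcal{P}$.

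The main obstacle is the non-linearity of the worst-case operators $v$ and $v'$: perturbing $w$ moves the minimizing faces $M(w)$ and $M'(w)=\arg\min_{P\in\mathcal{P}'}\int_\Omega w\,dP$, so a purely first-order analysis is insufficient and a genuinely finite perturbation is needed, which in turn requires simultaneous control of (i) keeping the $\mathcal{P}$-worst-case of $a$ at least $v(w)$, (ii) keeping the $\mathcal{P}'$-worst-case of $a$ at most $v'(w)$, and (iii) strictly raising the objective value at $\bar P$. The delicate point is that (iii) is compatible with (ii) precisely because $\bar P$ is interior to $\mathcal{P}'$ along $e$: there is a point of $\mathcal{P}'$ whose $a$-value can be depressed without lifting the value at $\bar P$, which is exactly where both $\mathcal{P}'\subsetneq\mathcal{P}$ and $P^{*}\neq\hat P^{*}$ enter. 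Keeping the constructed $a$ valued in $[0,1]$ is handled by scaling $s$ and adding a constant, which shifts $v$, $v'$, and $\int_\Omega\cdot\,d\bar P$ uniformly and therefore preserves all three inequalities.
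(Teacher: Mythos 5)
Your ``if'' direction and the two preliminary reductions --- restricting violations to two-act problems, deducing $P^{*},\hat P^{*}\in\mathcal{P}'$ from Theorem \ref{basic}, and observing that the set $\mathcal{G}$ of DGPs under which dominance holds is an intersection of half-spaces and hence convex --- are all sound, and the convexity observation is a genuinely nice framing. But the crux of the ``only if'' direction, the witness act $a=w+su$, is never constructed: you list the three constraints (i)--(iii) and name the obstacle without resolving it. Worse, the construction cannot exist in the generality you invoke, because the only facts about $\bar P$ it uses are that $\bar P\in\mathcal{P}'$, that $\mathcal{P}'$ has points on both sides of $\bar P$ along $e$, and that $\mathcal{P}'\subsetneqq\mathcal{P}$. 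Take $\Omega=\{1,0\}$, identify each distribution with $p=P(1)$, and let $\mathcal{P}=[0,1]$ and $\mathcal{P}'=[1/4,3/4]=\tfrac12\cdot\tfrac12+\tfrac12\,\mathcal{P}$. Since $W(f,\cdot)$ is affine in $p$, every act satisfies $v'(f)=\tfrac12 W(f,1/2)+\tfrac12 v(f)$, so $v(a)\geq v(b)$ and $v'(b)\geq v'(a)$ together force $W(b,1/2)\geq W(a,1/2)$: objective dominance across all decision problems holds at $\bar P=1/2$, even though $1/2$ is the midpoint of $1/4,3/4\in\mathcal{P}'$ and $\mathcal{P}'\subsetneqq\mathcal{P}$. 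No witness pair exists at this $\bar P$, so your strategy proves too much and must fail.

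What is missing is exactly the content of the paper's Lemma \ref{objectivedominancelemma}: dominance under $P^{*}$ across all decision problems holds if and only if $\mathcal{P}'=\alpha P^{*}+(1-\alpha)\mathcal{P}$ for some $\alpha\in[0,1]$, so membership in $\mathcal{G}$ is vastly more restrictive than membership in $\mathcal{P}'$. Its ``only if'' direction is the hard step, which the paper handles by a tangency argument (taking the extremal $\alpha$ with $\mathcal{P}'\subseteq\alpha P^{*}+(1-\alpha)\mathcal{P}$) followed by two separating-hyperplane constructions; the claim then follows because a compact convex proper subset cannot equal $\alpha P^{*}+(1-\alpha)\mathcal{P}$ with $\alpha>0$ for two distinct $P^{*}$. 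Your separation of a single $Q\in\mathcal{P}\setminus\mathcal{P}'$ and the direction $e$ never exploit that $P^{*}$ and $\hat P^{*}$ themselves lie in $\mathcal{G}$ rather than merely in $\mathcal{P}'$, and that extra information is precisely what is needed to exclude the mixture case $\alpha>0$ illustrated above.
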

		
		The proof of this claim relies on the following lemma:
		
		\begin{lemma}\label{objectivedominancelemma}
			$W(c'(D), P^{*}) \geq W(c(D), P^{*})$ for all $D \in \mathcal{D}$ if and only if there exists $\alpha \in [0,1]$ such that $\mathcal{P}' = \alpha P^{*} + (1-\alpha) \mathcal{P}$. 
		\end{lemma}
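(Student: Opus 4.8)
The plan is to rephrase the target equality $\mathcal{P}' = \alpha P^{*} + (1-\alpha)\mathcal{P}$ in terms of lower support functions and then prove the two implications separately. For a closed convex set $\mathcal{Q}$ of measures write $\ell_{\mathcal{Q}}(f) = \min_{P \in \mathcal{Q}} \int_{\Omega} f\,dP$; this is exactly the quantity the MEU criterion evaluates. Since $\min$ passes through the constant term, $\alpha P^{*} + (1-\alpha)\mathcal{P}$ has lower support function $f \mapsto \alpha W(f,P^{*}) + (1-\alpha)\ell_{\mathcal{P}}(f)$, and because a closed convex set of measures is determined by its support function (two distinct such sets are separated by some act, exactly as in the proof of Theorem \ref{basic}), the set identity is equivalent to the functional identity
\begin{equation*}
\ell_{\mathcal{P}'}(f) = \alpha W(f,P^{*}) + (1-\alpha)\ell_{\mathcal{P}}(f) \quad \text{for all acts } f. \tag{$\dagger$}
\end{equation*}
So the whole lemma reduces to showing that objective dominance across all $D$ holds if and only if $(\dagger)$ holds for a single $\alpha \in [0,1]$.

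For the ``if'' direction, assume $(\dagger)$ and write $f^{*} = c(D)$, $f^{**} = c'(D)$. By $(\dagger)$ the act $f^{**}$ maximizes $\alpha W(\cdot,P^{*}) + (1-\alpha)\ell_{\mathcal{P}}(\cdot)$ over $D$ while $f^{*}$ maximizes $\ell_{\mathcal{P}}(\cdot)$; combining the two optimality inequalities gives $\alpha W(f^{**},P^{*}) \geq \alpha W(f^{*},P^{*}) + (1-\alpha)\bigl(\ell_{\mathcal{P}}(f^{*}) - \ell_{\mathcal{P}}(f^{**})\bigr)$, where the last difference is nonnegative. If $\alpha > 0$ this yields $W(f^{**},P^{*}) \geq W(f^{*},P^{*})$; if $\alpha = 0$ then $\mathcal{P}' = \mathcal{P}$ and Assumption \ref{assumption4} forces $c'(D) = c(D)$. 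Either way objective dominance holds.

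The substance is the ``only if'' direction. First, since basic problems are a special case, Theorem \ref{basic} gives $P^{*} \in \mathcal{P}'$, hence $\ell_{\mathcal{P}}(f) \leq \ell_{\mathcal{P}'}(f) \leq W(f,P^{*})$ for every $f$. Setting $u(f) = \ell_{\mathcal{P}'}(f) - \ell_{\mathcal{P}}(f)$ and $v(f) = W(f,P^{*}) - \ell_{\mathcal{P}}(f)$ gives $0 \leq u \leq v$, and $(\dagger)$ is precisely the assertion that the ratio $\lambda(f) = u(f)/v(f)$ equals one constant $\alpha$ for every $f$ with $v(f) > 0$ (when $v(f) = 0$ the inequalities force $u(f) = 0$, so $(\dagger)$ is trivial there). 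To prove $\lambda$ constant I would take any $f,g$ with positive $v$ and suppose $\lambda(f) < \lambda(g)$, i.e. $v(f)u(g) > v(g)u(f)$. Exploiting that $\ell_{\mathcal{P}}$, $\ell_{\mathcal{P}'}$ and $W(\cdot,P^{*})$ are positively homogeneous and constant-additive in the act, I would build comparison acts $p = v(f)\,g + \beta_{p}$ and $q = v(g)\,f + \beta_{q}$, choosing the constants so that $\ell_{\mathcal{P}}(p) = \ell_{\mathcal{P}}(q)$; a direct computation then yields $W(p,P^{*}) = W(q,P^{*})$ as well, while $\ell_{\mathcal{P}'}(p) - \ell_{\mathcal{P}'}(q) = v(f)u(g) - v(g)u(f) > 0$. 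Adding a small constant $\delta > 0$ to $q$ makes the data-free choice strictly prefer $q+\delta$ and strictly raise its objective value by $\delta$, while for $\delta$ below the above positive gap the data-driven choice still strictly prefers $p$; the decision problem $\{p,\,q+\delta\}$ then violates objective dominance, a contradiction. By symmetry the reverse strict inequality is also impossible, so $\lambda(f) = \lambda(g)$, giving $\lambda \equiv \alpha$ with $\alpha \in [0,1]$ because $0 \leq u \leq v$, and $(\dagger)$ follows.

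The main obstacle is this alignment-and-perturbation construction: scaling $g$ by $v(f)$ and $f$ by $v(g)$ is exactly what equalizes the $\ell_{\mathcal{P}}$ and $W(\cdot,P^{*})$ coordinates and isolates the discrepancy in the $\ell_{\mathcal{P}'}$ coordinate, and this must be found rather than guessed. A secondary technical point is that the constructed acts must lie in $[0,1]$; since a decision problem's chosen acts and the direction of objective comparison are invariant under applying one common positive-affine map to all its acts, I would first rescale $p$ and $q+\delta$ by a common small factor and recenter, shrinking their ranges into $[0,1]$ while preserving every strict inequality used above. The final inversion from $(\dagger)$ back to the set identity is then routine given the separating-hyperplane argument already invoked for Theorem \ref{basic}.
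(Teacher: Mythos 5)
Your proof is correct, and while your ``if'' direction coincides with the paper's (decompose $\min_{\mathcal{P}'}$ via the premise, combine the two optimality inequalities, handle $\alpha=0$ by Assumption \ref{assumption4}), your ``only if'' direction takes a genuinely different route. The paper argues geometrically: it takes the largest $\alpha$ with $\mathcal{P}' \subseteq \alpha P^{*} + (1-\alpha)\mathcal{P}$, uses the tangency of $\mathcal{P}'$ to that set at some $P'$ to produce an act $f$ minimized at $P'$ under both sets, then separates $\mathcal{P}'$ strictly from $\alpha P^{*}+(1-\alpha)\mathcal{P}$ to get a second act $g$, and normalizes both by mixing with constants so that $\{f,g\}$ violates dominance. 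You instead translate the set identity into the support-function identity $u(f)=\alpha v(f)$ with $u = \ell_{\mathcal{P}'}-\ell_{\mathcal{P}}$ and $v = W(\cdot,P^{*})-\ell_{\mathcal{P}}$, and show the ratio $u/v$ must be constant by an alignment argument: given $v(f)u(g) > v(g)u(f)$, the positively homogeneous, constant-additive functionals let you rescale and recenter $f$ and $g$ so that their $\ell_{\mathcal{P}}$ and $W(\cdot,P^{*})$ values agree while their $\ell_{\mathcal{P}'}$ values differ by exactly $v(f)u(g)-v(g)u(f)>0$; a $\delta$-perturbation of the ``worse'' act then yields a two-act problem where the data-free choice strictly beats the data-driven choice objectively. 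I checked the alignment computation and it goes through, the tie-breaking is unambiguous because all the relevant preferences are strict, and your remark about shrinking the constructed acts back into $[0,1]$ by a common positive-affine map is the right fix for range issues. What each approach buys: the paper's tangency construction directly exhibits the extremal $\alpha$ and makes the failure of the mixture representation visually transparent, but it leans on an informally justified tangency/supporting-act step; your version replaces that geometry with an explicit algebraic identity between support functionals, which is more self-contained and pins down $\alpha \in [0,1]$ automatically from $0 \le u \le v$, at the cost of deferring the final passage from the functional identity back to the set identity to a routine separation argument.
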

		
		Given this lemma, notice that if $P^{*} \neq \hat{P}^{*}$, the only way for $\mathcal{P}'$ to satisfy the required relation for both of them is when $\alpha = 0$, i.e. $\mathcal{P}'  = \mathcal{P}$. Therefore, this proves the claim. $\square$
		
		\begin{proof}[Proof of Lemma \ref{objectivedominancelemma}]
			\textbf{IF.} The inequality is always true if $c(D) = c'(D)$. Thus fix any decision problem $D$ where the two decisions are different, let $f$ and $f'$ denote $c(D)$ and $c'(D)$ respectively. Given the tie-breaking assumption (Assumption \ref{assumption4}), this is possible only when $\alpha \neq 0$, thus hereafter $\alpha > 0$. 
			
			By definition, one has 
			\begin{equation}\label{objectivedominanceinequality1}
				\min\limits_{P\in \mathcal{P}} \int_{\Omega} f(\omega)dP(\omega) \geq \min\limits_{P\in \mathcal{P}} \int_{\Omega} f'(\omega)dP(\omega),
			\end{equation}
			and 
			\begin{equation}\label{objectivedominanceinequality2}
				\min\limits_{P\in \mathcal{P'}} \int_{\Omega} f'(\omega)dP(\omega) \geq \min\limits_{P\in \mathcal{P'}} \int_{\Omega} f(\omega)dP(\omega).
			\end{equation}
			
			Plugging the premise condition to the inequality \eqref{objectivedominanceinequality2} to get, 
			\begin{equation*}
				\alpha \int_{\Omega} f'(\omega) dP^{*}(\omega) + (1-\alpha) \min\limits_{P\in \mathcal{P}} \int_{\Omega} f'(\omega) dP(\omega) \geq \alpha \int_{\Omega} f(\omega) dP^{*}(\omega) + (1-\alpha) \min\limits_{P\in \mathcal{P}} \int_{\Omega} f(\omega) dP(\omega).
			\end{equation*}
			
			Rearranging terms to get, 
			\begin{equation*}
				\alpha \left[\int_{\Omega} f'(\omega) dP^{*}(\omega) - \int_{\Omega} f(\omega) dP^{*}(\omega)\right] \geq (1-\alpha) \left[\min\limits_{P\in \mathcal{P}} \int_{\Omega} f(\omega)dP(\omega) -\min\limits_{P\in \mathcal{P}} \int_{\Omega} f'(\omega)dP(\omega) \right] \geq 0,
			\end{equation*}
			where the last inequality follows from inequality \eqref{objectivedominanceinequality1}. As $\alpha > 0$, one has, 
			\begin{equation*}
				W(c'(D),P^{*}) - W(c(D),P^{*}) \geq 0.
			\end{equation*}
			\vspace{1mm}
			
			\textbf{ONLY IF.} Consider the contrapositive statement: If there does not exist any $\alpha \in [0,1]$ such that the equation holds, then there exists $D$ with $W(c'(D), P^{*}) < W(c(D), P^{*})$. 
			
			If $P^{*} \notin \mathcal{P}'$, then the conclusion can be implied by Theorem \ref{basic}. Consider the case where $P^{*} \in \mathcal{P}'$. By assumption, there exists $\alpha \in [0,1]$ such that 
			\begin{equation*}
				\mathcal{P}' \subsetneqq \alpha P^{*} + (1-\alpha) \mathcal{P},
			\end{equation*}
			and also some $P' \in \mathcal{P}' $ such that for all $\beta > \alpha$, 
			\begin{equation*}
				P' \notin \beta P^{*} + (1-\beta) \mathcal{P}.
			\end{equation*}
			
			In other words, $\mathcal{P}'$ is tangent to $\alpha P^{*} + (1-\alpha) \mathcal{P}$ at $P'$. Thus, there exists an act $f$ whose minimum expectation among both $\mathcal{P}'$ and $\alpha P^{*} + (1-\alpha) \mathcal{P}$ are all achieved at $P'$. It therefore implies that, 
			\begin{equation*}
				\min\limits_{P \in \mathcal{P}'} W(f, P)= \min\limits_{P \in \alpha P^{*} + (1-\alpha)\mathcal{P}} W(f,P) = \alpha W(f, P^{*}) + (1-\alpha) \min\limits_{P \in \mathcal{P}} W(f,P).
			\end{equation*}
			
			Next, because $\mathcal{P}' \subsetneqq \alpha P^{*} + (1-\alpha) \mathcal{P}$, by the strict separating hyperplane theorem, there also must exist an act $g$ such that for some $\beta > \alpha$, 
			\begin{equation*}
				\alpha W(g, P^{*}) + (1-\alpha) \min\limits_{P \in \mathcal{P}} W(g, P) < \min\limits_{P \in \mathcal{P}'} W(g, P)
				= \beta  W(g, P^{*}) + (1-\beta) \min\limits_{P \in \mathcal{P}} W(g, P)
			\end{equation*}
			where $\beta > \alpha$ follows by the fact that for such $g$, $W(g, P^{*}) > \min\limits_{P \in \mathcal{P}} W(g,P)$. 
			
			Normalize $f$ and $g$ by taking mixtures with constant acts to get for some $\epsilon > 0$ with  $2\epsilon < \beta-\alpha$, 
			\begin{align*}
				\min\limits_{P \in \mathcal{P}} W(f,P) = 1/2 > 1/2-\epsilon  = \min\limits_{P \in \mathcal{P}}  W(g, P) \\
				W(f, P^{*}) =1 > 1-\epsilon =  W(g, P^{*}). 
			\end{align*}
			Then for the decision problem $D = \{f, g\}$. It is the case that $c(D) = f$ and $c'(D) = g$. The second claim follows from
			\begin{align*}
				\min\limits_{P \in \mathcal{P}'} W(f, P) &  = \alpha W(f, P^{*}) + (1-\alpha) \min\limits_{P \in \mathcal{P}}W(f, P)  \\
				& = \alpha + (1-\alpha)/2\\
				&< \beta (1-\epsilon) + (1-\beta)(1/2-\epsilon) \\
				& = \beta W(g, P^{*}) + (1-\beta) \min\limits_{P \in \mathcal{P}}W(g, P_{2}) \\
				&= \min\limits_{P \in \mathcal{P}'} W(g, P) 
			\end{align*}
			where the strict inequality is equivalent to $2\epsilon < \beta - \alpha$. But $W(f, P^{*}) > W(g, P^{*})$, i.e. the conclusion. 
			
		\end{proof}
		
		\subsection{Proof of Theorem \ref{certaintyequivalent}}
		Similarly, this theorem is equivalent to the following claim: 
		\begin{claim}
			$W(c'(D), P^{*}) \geq \min\limits_{P \in \mathcal{P}} W(c(D), P)$ for all $D \in \mathcal{D}$ if and only of  $P^{*} \in \mathcal{P}'$. 
		\end{claim}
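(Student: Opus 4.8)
The plan is to first rewrite the right-hand side as the maxmin \emph{value} of the data-free problem. By the definition of $c(D)$ as the maximizer of the inner minimum over $\mathcal{P}$, one has $\min_{P\in\mathcal{P}} W(c(D),P) = \max_{f\in D}\min_{P\in\mathcal{P}}\int_{\Omega} f(\omega)\,dP(\omega)$; call this value $V(D)$, and similarly write $V'(D) = \min_{P\in\mathcal{P}'}\int_{\Omega} c'(D)(\omega)\,dP(\omega)$ for the updated problem. The entire claim then becomes the comparison $W(c'(D),P^{*}) \geq V(D)$ for all $D$, and recognizing this reduction is the only conceptual step; the rest is bookkeeping.

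For the \textbf{if} direction, assume $P^{*}\in\mathcal{P}'$ and chain two elementary facts. First, since $P^{*}\in\mathcal{P}'$, evaluating the chosen act at $P^{*}$ is at least its worst case over $\mathcal{P}'$, so $W(c'(D),P^{*}) \geq V'(D)$. Second, since $\mathcal{P}'\subseteq\mathcal{P}$, shrinking the prior set can only raise the maxmin value: for each fixed act the inner minimum over $\mathcal{P}'$ dominates that over $\mathcal{P}$, and taking the outer maximum preserves the inequality, giving $V'(D)\geq V(D)$. Composing yields $W(c'(D),P^{*}) \geq V'(D) \geq V(D)$ for every $D\in\mathcal{D}$.

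For the \textbf{only if} direction I would argue by contrapositive. If $P^{*}\notin\mathcal{P}'$, then because $\mathcal{P}'$ is closed and convex the strict separating hyperplane theorem, applied exactly as in the proof of Theorem \ref{basic}, furnishes an act $f$ and a constant act $x$ with $\min_{P\in\mathcal{P}'}\int_{\Omega} f\,dP > x > \int_{\Omega} f\,dP^{*}$. Take the basic decision problem $D=\{f,x\}$. The first inequality forces $c'(D)=f$, whence $W(c'(D),P^{*}) = \int_{\Omega} f\,dP^{*} < x$; meanwhile the constant act $x$ is available in $D$, so $V(D)\geq x$. Therefore $W(c'(D),P^{*}) < x \leq V(D)$, which violates the desired inequality and completes the contrapositive.

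I expect no genuine obstacle here, as the argument rests only on monotonicity of the maxmin operator in the prior set together with the trivial floor that a constant act guarantees its own payoff. The one point requiring care is in the separation step: the weak*-continuous functional produced by the separating hyperplane theorem must be normalized, via an affine rescaling, to a legitimate act valued in $[0,1]$ while preserving the strict inequalities, precisely as is done in Theorem \ref{basic}. It is also worth noting that the witnessing $D$ is basic, so the same separation underlying Theorem \ref{basic} is reused, which is what ultimately produces the three-way equivalence in Corollary \ref{three}.
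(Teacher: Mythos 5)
Your proposal is correct and follows essentially the same route as the paper: the ``if'' direction is the identical three-step chain (evaluate at $P^{*}\in\mathcal{P}'$, use optimality of $c'(D)$ under $\mathcal{P}'$, then monotonicity of the minimum in the set), and the ``only if'' direction reuses the same separating-hyperplane witness $D=\{f,x\}$ from Theorem \ref{basic}. The only cosmetic difference is that you close the contrapositive by observing $V(D)\geq x$ because the constant act is available, whereas the paper invokes $c(D_{b})=x$ so that $\min_{P\in\mathcal{P}}W(c(D_{b}),P)=x$; these are the same fact.
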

		First consider the ``if'' direction. \\
		
		\textbf{IF.} If $P^{*} \in \mathcal{P}' \subseteq \mathcal{P}$, then for any $D$, 
		\begin{align*}
			W(c'(D), P^{*}) &\geq \min\limits_{P \in \mathcal{P}'} W(c'(D), P) \\
			& \geq \min\limits_{P \in \mathcal{P}'} W(c(D), P)\\
			& \geq \min\limits_{P \in \mathcal{P}} W(c(D), P)
		\end{align*}
		where the first inequality follows from $P^{*} \in \mathcal{P}'$, the second inequality follows from the fact that $c'(D)$ is optimal with respect to $\mathcal{P}'$, the third inequality follows from $\mathcal{P}' \subseteq \mathcal{P}$. \\
		
		\textbf{ONLY IF.} Consider the contrapositive statement and notice that by Theorem \ref{basic}, it further implies the existence of a basic decision problem $D_{b}$ with 
		\begin{equation*}
			W(c'(D_{b}), P^{*}) < 	W(c(D_{b}), P^{*}) = \min\limits_{P \in \mathcal{P}} W(c(D_{b}), P),
		\end{equation*}
		where the equality follows from the fact that $c(D_{b}) = x$ in this case. Thus the conclusion.  $\square$

		\newpage
		\section{Proofs of Results in Section \ref{updating}}
		
		\subsection{Proof of Theorem \ref{asymptotic}}\label{SLLN}
		The main argument is to show the possible DGPs satisfy the condition of Kolmogorov's strong law of large numbers. For reference, I provide the exact definition here. 
		
		\textbf{Kolmogorov's SLLN} Let $\{X_{i}\}$ be a sequence of independent random variables. Define $Y_{N} = N^{-1}\sum_{i=1}^{N}X_{i}$ and $\bar{\mu}_{N} = N^{-1}\sum_{i=1}^{N} \mu_{i}$. If $E[X_{i}] = \mu_{i}$ and $var(X_{i}) = \sigma_{i}^{2}$, 
		\begin{equation*}
			\lim_{N \rightarrow \infty} \sum\limits_{i=1}^{N} \frac{\sigma_{i}^{2}}{i^{2}} < \infty
		\end{equation*}
		then 
		\begin{equation*}
			Y_{N} - \bar{\mu}_{N} \rightarrow 0 \quad a.s.
		\end{equation*}
		as $N \rightarrow \infty$. \\
		
		Fix any possible DGP $P \in \Delta_{indep}(\Omega)$. For any $s_{j} \in S$, let $X_{i} = I\{\omega_{i} = s_{j}\}$. Then $X_{i}$'s are independent random variables with $E[X_{i}] = P_{i}(s_{j})$ and $var(X_{i}) = P_{i}(s_{j})(1-P_{i}(s_{j}))$. To check the condition for the SLLN holds, 
		\begin{align*}
			\lim_{N \rightarrow \infty} \sum\limits_{i=1}^{N} \frac{\sigma_{i}^{2}}{i^{2}} & =  \lim_{N \rightarrow \infty} \sum\limits_{i=1}^{N} \frac{P_{i}(s_{j})(1-P_{i}(s_{j}))}{i^{2}} \\
			& \leq \lim_{N \rightarrow \infty} \sum\limits_{i=1}^{N} \frac{1}{i^{2}} =  \frac{\pi^{2}}{6}< \infty.
		\end{align*}
		Therefore, it implies that $\Phi(\omega_{N})(s_{j})- \bar{P}_{N}(s_{j}) \rightarrow 0$ almost surely for any $s_{j} \in S$. \\
		
		SLLN implies that for $P$-almost every $\omega$ and for any $\epsilon > 0$, there exists $\bar{N}(\omega, \epsilon, s_{j})$ such that for all $N \geq \bar{N}(\omega, \epsilon, s_{j})$, $|\Phi(\omega_{N})(s_{j})- \bar{P}_{N}(s_{j})| < \epsilon $.
		
		By definition, for any $\epsilon > 0$, the updated set given by the average-then-update rule can be written as 
		\begin{equation*}
			\mathcal{P}(\omega_{N}) = \{P \in \mathcal{P}: \cap_{s_{j} \in S} |\bar{P}_{N}(s_{j}) -  \Phi(\omega_{N})(s_{j})| < \epsilon \}.
		\end{equation*}
		
		Let $\bar{N}(\omega, \epsilon) = \max\limits \{\bar{N}(\omega,\epsilon,s_{j}\}$ (it is finite as $|S|$ is finite), and then SLLN implies that for all $N \geq \bar{N}(\omega, \epsilon)$, $P \in \mathcal{P}(\omega_{N})$. Thus, for any $\epsilon > 0$, $\mathcal{P}(\omega_{N})$ contains the truth, hence accommodates the truth asymptotically. $\square$

		\subsection{Proof of Theorem \ref{confident}} 
		The proof of this theorem is established by the following two lemmas which will be proved later. 
		
		\begin{lemma}\label{lem1}
			For any $P \in \mathcal{P}$, a central limit theorem holds so that the regions of acceptance can be constructed by the probability contours of the corresponding multivariate Gaussian distributions. 
		\end{lemma}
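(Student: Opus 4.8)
The plan is to realize the empirical distribution as a normalized sum of independent indicator vectors and to apply the multivariate Lindeberg--Feller central limit theorem for triangular arrays. Fix $P \in \mathcal{P}$ and write $d = |S|$. For each experiment $i$ define the random vector $X_i = (I\{\omega_i = s_1\}, \dots, I\{\omega_i = s_d\})$, a single--draw multinomial that under $P$ has mean $P_i$ and covariance $V_i$ with $V_i(j,j) = P_i(s_j)(1 - P_i(s_j))$ and $V_i(j,k) = -P_i(s_j)P_i(s_k)$ for $j \neq k$. Because $P$ is independent across experiments, the $X_i$ are independent; and since $\Phi(\omega_N) = N^{-1}\sum_{i=1}^N X_i$ with $\E[\Phi(\omega_N)] = \bar P_N$, the deviation of interest is $\sqrt{N}(\Phi(\omega_N) - \bar P_N) = N^{-1/2}\sum_{i=1}^N (X_i - P_i)$, whose covariance is exactly the average covariance matrix $\bar V_N = N^{-1}\sum_{i=1}^N V_i$.

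First I would remove the built--in degeneracy coming from $\sum_j X_i(s_j) = 1$ by projecting onto the first $d-1$ coordinates, obtaining vectors $\tilde X_i$ with covariances $\tilde V_i$; the full--support assumption guarantees $P_i(s_j) > 0$ for all $j$, so each $\tilde V_i$ is positive definite and hence $\tilde V_N = N^{-1}\sum_{i=1}^N \tilde V_i$ is invertible for every $N$. Since the $\tilde X_i$ are non--identically distributed and $\tilde V_N$ itself varies with $N$, the correct tool is the triangular--array form of the CLT: setting $B_N = \sum_{i=1}^N \tilde V_i$ and $Y_{N,i} = B_N^{-1/2}(\tilde X_i - \E \tilde X_i)$, I would verify the Lindeberg condition $\sum_{i=1}^N \E\big[|Y_{N,i}|^2\, I\{|Y_{N,i}| > \epsilon\}\big] \to 0$ for every $\epsilon > 0$. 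The indicators are uniformly bounded, so $|\tilde X_i - \E\tilde X_i| \leq 2$ and hence $|Y_{N,i}| \leq 2\,\lambda_{\min}(B_N)^{-1/2}$; once the smallest eigenvalue of $B_N$ diverges, the truncation indicators vanish for all large $N$ and the Lindeberg condition holds trivially. Concluding $B_N^{-1/2}\sum_{i=1}^N(\tilde X_i - \E\tilde X_i) \Rightarrow \mathcal{N}(0, I_{d-1})$ is then equivalent to the statement that the Pearson--type quadratic form $N\,(\Phi(\omega_N) - \bar P_N)^\top \bar V_N^{+} (\Phi(\omega_N) - \bar P_N) \Rightarrow \chi^2_{d-1}$, where $\bar V_N^{+}$ is the Moore--Penrose pseudoinverse, so the ellipsoidal contour at level $1-\alpha$ delivers a region of acceptance $E_{N,\alpha}(P)$ with $\liminf_N P(E_{N,\alpha}(P)) \geq 1-\alpha$, which is precisely the assertion.

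The main obstacle is the divergence of $\lambda_{\min}(B_N)$, i.e.\ that variance genuinely accumulates in every nondegenerate direction; individual positive--definiteness comes for free from full support, but the divergence of the \emph{sum's} smallest eigenvalue is the real issue. This can fail for pathological non--identical sequences (for instance if $P_i(s_j) \to 0$ fast enough that $\sum_i P_i(s_j)(1 - P_i(s_j)) < \infty$), so I would handle it by decomposing the state space into the subspace spanned by the directions in which $\sum_i \tilde V_i$ is unbounded, where the genuine Gaussian limit applies, and the complementary directions, where the empirical frequency is asymptotically deterministic and the contour degenerates harmlessly. Verifying that this reduction is legitimate, and that the resulting contours still control coverage uniformly enough to feed into Theorem~\ref{confident}, is the delicate point; the remaining algebra---the explicit multinomial covariance and the $\chi^2$ limit---is routine.
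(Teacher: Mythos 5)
Your overall route is the same as the paper's: center the projected $(d-1)$-dimensional indicator vectors, treat them as a triangular array, and reduce the claim to a CLT whose only nontrivial hypothesis is nondegeneracy of the average covariance matrix. The paper verifies Liapounov's condition and then applies the Cramer--Wold device where you invoke the multivariate Lindeberg--Feller theorem directly, but for uniformly bounded summands these are interchangeable, so that difference is cosmetic.

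The genuine gap is exactly the point you flag as ``delicate'' and then leave open. The subspace decomposition you propose is both unexecuted and unnecessary: the pathology you worry about ($P_{i}(s_{j}) \to 0$ so fast that variance fails to accumulate in some direction) is ruled out by the paper's standing assumptions rather than handled by a reduction. Since every $P \in \mathcal{P}$ has full support and $\mathcal{P}$ is compact, the paper argues that $\min_{i,j} P_{i}(s_{j})$ is bounded below by a single positive constant; it then uses the fact that the smallest eigenvalue of each single-draw multinomial covariance matrix is bounded below by $\min_{j} P_{i}(s_{j})$ (citing Watson's spectral result for multinomial covariances, together with the eigenvalue-sum inequality for sums of symmetric positive definite matrices), so the smallest eigenvalue of $B_{N}$ grows at least linearly in $N$. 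At that point your own observation that $|Y_{N,i}| \leq 2\lambda_{\min}(B_{N})^{-1/2}$ closes the Lindeberg condition immediately. In other words, you should finish by invoking the full-support and compactness hypotheses to get a uniform lower bound on the marginal probabilities, not by splitting $\R^{d-1}$ into degenerate and nondegenerate directions; as written, your proof is incomplete precisely where the paper's is complete, and the machinery you sketch would, if carried out, also complicate the uniform coverage statement that Theorem \ref{confident} needs downstream.
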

		
		\begin{lemma}\label{lem2}
			For any $P \in \mathcal{P}$ and any $N$, let $E_{N, \alpha} (P)$ and $E_{N, \alpha} ((\bar{P}_{N})^{\infty})$ be the two probability contours of the corresponding Gaussian distributions such that (this is guaranteed by Lemma \ref{lem1}), 
			\begin{align*}
				&\liminf_{n \rightarrow \infty} P(E_{N,\alpha}(P)) \geq 1-\alpha\\
				&\liminf_{n \rightarrow \infty} (\bar{P}_{N})^{\infty} (E_{N,\alpha}((\bar{P}_{N})^{\infty})) \geq 1-\alpha
			\end{align*}
			Then, 
			\begin{equation*}
				E_{N,\alpha}(P) \subseteq E_{N,\alpha}((\bar{P}_{N})^{\infty}).
			\end{equation*}
		\end{lemma}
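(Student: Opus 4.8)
The plan is to reduce the statement to a comparison of two Gaussian ellipsoids that share a common center and a common radius, so that the inclusion follows from a positive-semidefinite (PSD) ordering of their covariance matrices. Write $d = |S|$ and identify each outcome with a standard basis vector in $\R^{d}$, so that the relevant sufficient statistic is the empirical distribution $\boldsymbol\Phi(\omega_{N})$, i.e. the sample average of the indicator vectors; in particular each region of acceptance depends on $\omega_{N}$ only through $\boldsymbol\Phi(\omega_{N})$, and containment of the associated ellipsoids in $\Delta(S)$ transfers to containment of the regions in $S_{N}$. By Lemma \ref{lem1}, under $P$ the scaled statistic $\sqrt{N}(\boldsymbol\Phi(\omega_{N}) - \bar{P}_{N})$ is asymptotically Gaussian with mean zero and covariance $\bar{\Sigma}_{N}(P) \equiv N^{-1}\sum_{i=1}^{N}\Sigma(P_{i})$, where $\Sigma(p) \equiv \mathrm{diag}(p) - pp^{\top}$ is the multinomial covariance; under $(\bar{P}_{N})^{\infty}$ the same statistic is asymptotically Gaussian with covariance $\Sigma(\bar{P}_{N})$. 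The decisive coincidence is that \emph{both} Gaussians are centered at the same point $\bar{P}_{N}$, since the average of the marginals of $P$ equals the marginal of $(\bar{P}_{N})^{\infty}$ by definition of $\bar{P}_{N}$. Constructing each contour at level $1-\alpha$ yields two ellipsoids $\{\hat p : N(\hat p - \bar{P}_{N})^{\top}\Sigma^{-1}(\hat p - \bar{P}_{N}) \leq q_{\alpha}\}$ with the \emph{same} radius $q_{\alpha}$, the upper-$\alpha$ quantile of the $\chi^{2}_{d-1}$ distribution, because both have $d-1$ degrees of freedom and the same confidence level.

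The heart of the proof is the covariance comparison. First I would compute
\begin{equation*}
\Sigma(\bar{P}_{N}) - \bar{\Sigma}_{N}(P) = \Big(\mathrm{diag}(\bar{P}_{N}) - \bar{P}_{N}\bar{P}_{N}^{\top}\Big) - N^{-1}\sum_{i=1}^{N}\Big(\mathrm{diag}(P_{i}) - P_{i}P_{i}^{\top}\Big).
\end{equation*}
The two $\mathrm{diag}$ terms cancel because $N^{-1}\sum_{i}P_{i} = \bar{P}_{N}$, leaving
\begin{equation*}
\Sigma(\bar{P}_{N}) - \bar{\Sigma}_{N}(P) = N^{-1}\sum_{i=1}^{N}P_{i}P_{i}^{\top} - \bar{P}_{N}\bar{P}_{N}^{\top} = N^{-1}\sum_{i=1}^{N}(P_{i} - \bar{P}_{N})(P_{i} - \bar{P}_{N})^{\top},
\end{equation*}
which is a sum of outer products and therefore positive semidefinite. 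This is exactly the multinomial generalization of the binomial variance inequality of \citet{wang1993} advertised in the text, and it states that the i.i.d. covariance dominates the non-identical covariance in the PSD order: $\Sigma(\bar{P}_{N}) \succeq \bar{\Sigma}_{N}(P)$.

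From here the inclusion is immediate. Restricting to the $(d-1)$-dimensional affine hull of $\Delta(S)$, where the full-support assumption renders both covariance matrices strictly positive definite, the ordering $\Sigma(\bar{P}_{N}) \succeq \bar{\Sigma}_{N}(P)$ inverts to $\Sigma(\bar{P}_{N})^{-1} \preceq \bar{\Sigma}_{N}(P)^{-1}$. Hence for every $\hat p$,
\begin{equation*}
N(\hat p - \bar{P}_{N})^{\top}\Sigma(\bar{P}_{N})^{-1}(\hat p - \bar{P}_{N}) \leq N(\hat p - \bar{P}_{N})^{\top}\bar{\Sigma}_{N}(P)^{-1}(\hat p - \bar{P}_{N}),
\end{equation*}
so any point whose quadratic form under $\bar{\Sigma}_{N}(P)^{-1}$ is at most $q_{\alpha}$, i.e. any point of $E_{N,\alpha}(P)$, also has quadratic form under $\Sigma(\bar{P}_{N})^{-1}$ at most $q_{\alpha}$ and thus lies in $E_{N,\alpha}((\bar{P}_{N})^{\infty})$. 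This yields the desired $E_{N,\alpha}(P) \subseteq E_{N,\alpha}((\bar{P}_{N})^{\infty})$.

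The main obstacle is bookkeeping rather than substance: because the multinomial covariances are singular (the coordinates sum to one), I must pass to the $(d-1)$-dimensional subspace, use the common range of the two matrices before inverting, and confirm that the contours are calibrated with the common $\chi^{2}_{d-1}$ radius rather than shape-dependent radii. Once the two ellipsoids are recognized to share a center and a radius and to differ only through a PSD-ordered shape matrix, the containment is forced, and the only genuinely new ingredient is the one-line outer-product identity above.
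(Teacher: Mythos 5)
Your proposal is correct and follows essentially the same route as the paper: both reduce the inclusion to the positive semidefiniteness of $\Sigma(\bar{P}_{N}) - \bar{\Sigma}_{N}(P)$, establish it via the identity $N^{-1}\sum_{i}(P_{i}-\bar{P}_{N})(P_{i}-\bar{P}_{N})^{\top}$ (the paper phrases this as recognizing a Gram matrix entrywise), and then invert the PSD ordering to compare the two ellipsoids, which share a center and a $\chi^{2}_{d-1}$ radius. The only cosmetic difference is that the paper works in $\R^{d-1}$ from the outset by dropping the last coordinate, whereas you work in $\R^{d}$ and restrict to the affine hull at the end.
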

		
		~~~
		
		\textit{Proof of the Theorem.} Lemma \ref{lem2} implies that for any $P \in \mathcal{P}$ and any $N$, 
		\begin{equation*}
			P( E_{N,\alpha}(P)) \leq P(E_{N,\alpha}((\bar{P}_{N})^{\infty})).
		\end{equation*}
		By construction, the following is true, 
		\begin{equation*}
			\liminf_{n \rightarrow \infty} P(E_{N,\alpha}(P)) \geq 1-\alpha.
		\end{equation*}
		Therefore, one can further derive,
		\begin{equation*}
			\liminf_{n \rightarrow \infty}  P(E_{N,\alpha}((\bar{P}_{N})^{\infty})) \geq \liminf_{n \rightarrow \infty} P(E_{N,\alpha}(P)) \geq 1-\alpha.
		\end{equation*}
		
		Finally, notice that 
		\begin{align*}
			\liminf_{n \rightarrow \infty}P(\{\omega_{N}: P_{N} \in co(\mathcal{P}(\omega_{n})_{N}) \}) & \geq \liminf_{n \rightarrow \infty} P(\{ \omega_{N}: P \in \mathcal{P}(\omega_{n}) \}) \\
			&= \liminf_{n \rightarrow \infty} P(E_{N,\alpha}((\bar{P}_{N})^{\infty})) \\
			&\geq 1-\alpha
		\end{align*}
		
		Thus the conclusion. $\square$\\
		
		Before proving the two lemmas, some additional notations are needed. 
		
		Recall that $S$ is finite. Let $S = \{s_{1}, \cdots, s_{d}\}$ and for $s = s_{1}, \cdots, s_{d-1}$, let $e_{s}$ denote the unit vector in the corresponding Euclidean space $\R^{d-1}$. 
		
		I use $\mathbf{X_{i}}: \Omega \rightarrow \R^{d-1}$ to denote a $(d-1)$-dimensional random vector such that  $\mathbf{X_{i}} = e_{\omega_{i}}$ if $\omega_{i} = s_{1},\cdots, s_{d-1}$ and $\mathbf{X_{i}} = 0$ if $\omega_{i} = s_{d}$. For each $P$, let $\mathbf{P_{i}}$ denote the $(d-1)$-dimensional vector such that the corresponding coordinates are given by $P_{i}(s)$ for $s = s_{1},\cdots,s_{d-1}$. 
		
		If $P$ is the underlying probability distribution, then the mean vector of $\mathbf{X_{i}}$ is given by $\boldsymbol\mu_{i} = \mathbf{P_{i}}$. Moreover, the covariance matrix $\boldsymbol\Sigma_{\mathbf{i}}$ of the random vector $\mathbf{X_{i}}$ is given by 
		
		\begin{equation*}
			[\boldsymbol\Sigma_{\mathbf{i}}]_{kl} = 
			\begin{cases}
				P_{i}(s_{k})(1-P_{i}(s_{k})) \quad &\text{ if } k=l\\
				-P_{i}(s_{k})P_{i}(s_{l}) \quad &\text{ if } k \neq l
			\end{cases}
		\end{equation*}
		
		Notice that by the full-support assumption, the covariance matrix is always positive definite. Let $\mathbf{\bar{X}_{N}} \equiv N^{-1}\sum_{i=1}^{N} \mathbf{X_{i}}$ be the random vector given by the average of the sequence of random vectors. Let $\boldsymbol{\bar{\mu}_{N}} \equiv N^{-1}\sum_{i=1}^{N} \boldsymbol{\mu_{\mathbf{i}}}$ and $\boldsymbol{\bar{\Sigma}_{N}} \equiv N^{-1} \sum_{i=1}^{N}\boldsymbol\Sigma_{\mathbf{i}}$. \\
		
		\textit{Proof of Lemma \ref{lem1}.} The proof is to show a multivariate central limit theorem (CLT) for independent but non-identically distributed random vectors holds for all possible DGP $P \in \mathcal{P}$. The proof is rather standard. I provide it here only for completeness. The reader should feel free to skip this proof if can be convinced it is true.\\
		
		Specifically, for any $P \in \mathcal{P}$, want to show  $\sqrt{n}(\mathbf{\bar{X}_{N{}}} - \boldsymbol{\bar{\mu}_{N}}) \xrightarrow{D} N\left(0, \boldsymbol{\bar{\Sigma}_{N{}}} \right)$, the standard argument is given by combining the Liapounov's CLT for Triangular arrays with the Cramer-Wold device. The following definitions are taken from \cite{WHITE1984107}. \\
		
		\textbf{Liapounov's CLT for Triangular Arrays.} 
		Let $\{ Z_{Ni} \}$ be a sequence of independent random scalars with $\mu_{Ni} \equiv E[Z_{Ni}]$, $\sigma_{Ni}^{2} = var(Z_{Ni})$, and $E|Z_{Ni} - \mu_{Ni}|^{2+\delta} < \Delta < \infty$ for some $\delta > 0$ and all $N$ and $i$. Define $\bar{Z}_{N} \equiv N^{-1}\sum_{i=1}^{N}Z_{Ni}$, $\bar{\mu}_{N} \equiv N^{-1}\sum_{i=1}^{N}\mu_{Ni}$ and $\bar{\sigma}_{N}^{2} \equiv var(\sqrt{N} \bar{Z}_{N}) = N^{-1}\sum_{i=1}^{N} \sigma_{Ni}^{2}$. If $\bar{\sigma}_{N}^{2} > \delta' > 0 $ for all $N$ sufficiently large, then 
		\begin{equation*}
			\sqrt{N}(\bar{Z}_{N} - \bar{\mu}_{N})/\bar{\sigma}_{N} \xrightarrow{D} N(0,1).
		\end{equation*}
		
		\textbf{Cramer-Wold device.} Let $\{ \mathbf{b}_{N} \}$ be a sequence of random $k \times 1$  vectors and suppose that for any real $k \times 1$ vector $\boldsymbol\lambda$ such that $\boldsymbol\lambda'\boldsymbol\lambda = 1$, $\boldsymbol\lambda' \mathbf{b}_{N} \xrightarrow{D}\boldsymbol\lambda' \mathbf{Z}$ where $\mathbf{Z}$ is a $k\times 1$ vector with joint distribution function $F$. Then the limiting distribution of $ \mathbf{b}_{N}$ exists and equals $F$. \\
		
		For any $\boldsymbol\lambda$, consider the sequence of random variables given by $Y_{i} =  \boldsymbol\lambda'\mathbf{X_{i}}$. Then one has $E[Y_{i}] =\boldsymbol\lambda'\boldsymbol{\mu_{\mathbf{i}}}$ and $var(Y_{i}) = \boldsymbol\lambda' \boldsymbol\Sigma_{\mathbf{i}}\boldsymbol\lambda$. 
		
		Then for $\bar{Y}_{N} = N^{-1}\sum_{i=1}^{N} Y_{i}$, one has $E[\bar{Y}_{N}] = \boldsymbol\lambda' \boldsymbol{\bar{\mu}_{N}}$ and
		
		\begin{equation*}
			var(\sqrt{N}\bar{Y}_{N}) =  \boldsymbol\lambda' \boldsymbol{\bar{\Sigma}_{N}}\boldsymbol\lambda = \boldsymbol\lambda'(\boldsymbol{\bar{\Sigma}_{N}}^{1/2})'(\boldsymbol{\bar{\Sigma}_{N}}^{1/2})\boldsymbol\lambda.
		\end{equation*}
		
		The existence of $\boldsymbol{\bar{\Sigma}_{N}}^{1/2}$ is guaranteed by the fact that $\boldsymbol{\bar{\Sigma}_{N}}$ is positive definite. Moreover, let it be the positive square root of $\boldsymbol{\bar{\Sigma}_{N}}$ so that it is unique, symmetric and positive definite (Theorem 7.2.6 in \citet{horn2012matrix}). 
		
		Let $Z_{Ni} =\boldsymbol\lambda'\boldsymbol{\bar{\Sigma}_{N}}^{-1/2}\mathbf{X_{i}}$ . Then one has $E[\bar{Z}_{N}] = \boldsymbol\lambda'\boldsymbol{\bar{\Sigma}_{N}}^{-1/2}\boldsymbol{\bar{\mu}}$ and 
		
		\begin{align*}
			var(\sqrt{N}\bar{Z}_{N}) & =  \boldsymbol\lambda' (\boldsymbol{\bar{\Sigma}_{N}}^{-1/2})' var(\sqrt{N}\mathbf{\bar{X}}_{N})\boldsymbol{\bar{\Sigma}_{N}}^{-1/2}\boldsymbol\lambda  \\
			& =  \boldsymbol\lambda' (\boldsymbol{\bar{\Sigma}_{N}}^{-1/2})' (\boldsymbol{\bar{\Sigma}_{N}}^{1/2})'(\boldsymbol{\bar{\Sigma}_{N}}^{1/2})(\boldsymbol{\bar{\Sigma}_{N}}^{-1/2})\boldsymbol\lambda\\
			&= 1.
		\end{align*}
		
		If the Liapounov's CLT holds, then  $\sqrt{N}(\bar{Z}_{N} - E[\bar{Z}_{N}])\xrightarrow{D} N(0,1)$. Applying the Cramer-Wold device will imply the multivariate CLT. 
		
		Thus, it remains to show that the Liapounov condition holds for any $P \in \mathcal{P}$ and for any $\boldsymbol\lambda$, which is, for some $\delta > 0$, there exists $\Delta < \infty$ such that for any $N$ and $i$,
		\begin{align*}
			E|Z_{Ni}|^{2 + \delta} < \Delta.
		\end{align*}
		
		By Minkowski's inequality, 
		\begin{equation*}
			E[|Z_{Ni}|^{2 + \delta}] \leq \sum_{j}\lambda_{j} E|(\boldsymbol{\bar{\Sigma}_{N}}^{-1/2}\mathbf{X_{i}})_{j}|^{2+\delta}.
		\end{equation*}
		
		Notice that $\mathbf{X}_{i}$ is either $0$ or $e_{k}$. The RHS is strictly less than the following
		\begin{equation*}
			(\max_{k,l} [\boldsymbol{\bar{\Sigma}_{N}}^{-1/2}]_{kl})^{2+\delta}.
		\end{equation*}
		
		Since $[\boldsymbol{\bar{\Sigma}_{N}}^{-1/2}]$ is symmetric and positive semidefinite, the largest entry must be on the diagonal. The trace of the matrix is equal to the sums of the eigenvalues. Both the diagonal elements and eigenvalues are nonnegative due to positive definiteness (Corollay 7.1.5 in \cite{horn2012matrix}). Thus, it suffices to show the eigenvalues of $\boldsymbol{\bar{\Sigma}_{N}}^{-1/2}$ are bounded from above. 
		
		Moreover, as each eigenvalue of $\boldsymbol{\bar{\Sigma}_{N}}^{-1/2}$ is the square root of the eigenvalue of $\boldsymbol{\bar{\Sigma}_{n}}^{-1}$. Thus, one only needs to show the eigenvalues of $\boldsymbol{\bar{\Sigma}_{N}}^{-1}$ are bounded from above, which is equivalent to showing the eigenvalues of $\boldsymbol{\bar{\Sigma}_{N}}$ are bounded away from zero (Theorem 4.1.10 in \cite{horn2012matrix}). 
		
		For every $\boldsymbol\Sigma_{i}$, the smallest eigenvalue is bounded below by $\min_{j}P_{i}(s_{j})$ \citep{watson1996}. Furthermore, as each $\boldsymbol\Sigma_{i}$ is also symmetric and positive definite, the minimum eigenvalue of $\sum_{i=1}^{n} \boldsymbol\Sigma_{i}$ ($n$ times the minimum eigenvalue of $\boldsymbol{\bar{\Sigma}_{N}}$) is greater than the sum of the minimum eigenvalues of every $\boldsymbol\Sigma_{i}$ (Corollary 4.3.15 in \cite{horn2012matrix}). 
		
		Therefore, it suffices to have $P_{i}(s_{j})$ be bounded away from zero for every $i$ and $j$. By assumption, $\mathcal{P}$ is compact thus $\min_{j,i} P_{i}(s_{j})$ exists. By full-support assumption, the minimum is always positive. $\square$\\ 
		
		\textit{Proof of Lemma \ref{lem2}.} 
		For a $(d-1)$-dimensional random vector $\mathbf{X} \sim \mathbf{N}_{d-1}(\boldsymbol{\mu}, \boldsymbol\Sigma)$, the ellipsoidal region given by the set of all vectors $\mathbf{x}$ satisfying the following has a probability of $1-\alpha$: 
		\begin{equation*}
			(\mathbf{x} - \boldsymbol\mu)^{T}\boldsymbol\Sigma^{-1} (\mathbf{x} - \boldsymbol\mu) \leq \chi_{d-1}^{2}(\alpha),
		\end{equation*}
		where $\chi_{d-1}^{2}(\alpha)$ is the upper $100\alpha\%$ quantile for the chi-square distribution with $d-1$ degrees of freedom. By definition, the set of all such vectors is the probability contour of the multivariate Gaussian distribution $\mathbf{N}_{d-1}(\boldsymbol{\mu}, \boldsymbol\Sigma)$ with probability $1-\alpha$. 
		
		For any $P \in \Delta_{indep}(\Omega)$, let $E_{N,\alpha}(P)$ denote the probability contour of the multivariate Gaussian distribution $\mathbf{N}_{d-1}(\boldsymbol{\mu}_{N}, \bar{\boldsymbol\Sigma}_{N})$ with probability $1-\alpha$. 
		
		Let $\boldsymbol{\hat{\Sigma}_{N}}$ denote the covariance matrix of the i.i.d. distribution $(\bar{P}_{N})^{\infty}$. Then let $E_{N,\alpha}((\bar{P}_{N})^{\infty})$ denote the corresponding probability contour for the multivariate Gaussian distribution $\mathbf{N}_{d-1}(\boldsymbol{\mu}_{N}, \hat{\boldsymbol\Sigma}_{N})$. Notice that for the two Gaussian distributions, their mean vectors are the same but their covariance matrices are different. 
		
		To show that $E_{N,\alpha}(P) \subseteq E_{N,\alpha}((\bar{P}_{N})^{\infty})$ for any $N$ and $\alpha \in [0,1]$, it suffices to show that for any $N$, any constant $c$, and for all $\mathbf{x}$, 
		
		\begin{equation*}
			(\mathbf{x} - \boldsymbol\mu)^{T}\boldsymbol{\bar{\Sigma}_{N}}^{-1} (\mathbf{x} - \boldsymbol\mu)\leq c \Rightarrow \boldsymbol (\mathbf{x} - \boldsymbol\mu)^{T}\boldsymbol{\hat{\Sigma}_{N}} ^{-1} \boldsymbol (\mathbf{x} - \boldsymbol\mu) \leq c,
		\end{equation*}
		which is further equivalent to
		\begin{equation*}
			(\mathbf{x} - \boldsymbol\mu)^{T}\boldsymbol{\bar{\Sigma}_{N}}^{-1} (\mathbf{x} - \boldsymbol\mu)-\boldsymbol (\mathbf{x} - \boldsymbol\mu)^{T}\boldsymbol{\hat{\Sigma}_{N}}^{-1} \boldsymbol (\mathbf{x} - \boldsymbol\mu) \geq 0.
		\end{equation*}
		
		In other words, for any $\mathbf{x}$, 
		\begin{equation*}
			\mathbf{x}^{T} (\boldsymbol{\bar{\Sigma}_{N}}^{-1}-\boldsymbol{\hat{\Sigma}_{N}} ^{-1})  \mathbf{x}\geq 0.
		\end{equation*}
		
		That is, the lemma is true if $(\boldsymbol{\bar{\Sigma}_{N}}^{-1}-\boldsymbol{\hat{\Sigma}_{N}}^{-1})$ is a positive semidefinite matrix. It is further equivalent to, according to Corollary 7.7.4 in \cite{horn2012matrix}, $(\boldsymbol{\hat{\Sigma}_{N}} - \boldsymbol{\bar{\Sigma}_{N}})$ being positive semidefinite.
		
		Next show that it is indeed the case. The two covariance matrices are given by
		\begin{equation*}
			[\boldsymbol{\hat{\Sigma}}_{N}]_{kl} = 
			\begin{cases}
				\bar{P}_{N}(s_{k})(1-\bar{P}_{N}(s_{k})) \quad &\text{ if } k=l\\
				-\bar{P}_{N}(s_{k})\bar{P}_{N}(s_{l}) \quad &\text{ if } k \neq l
			\end{cases}
		\end{equation*}
		
		\begin{equation*}
			[\boldsymbol{\bar{\Sigma}}_{N}]_{kl} = 
			\begin{cases}
				N^{-1}\sum\limits_{i=1}^{N} P_{i}(s_{k})(1-P_{i}(s_{k})) \quad &\text{ if } k=l\\
				-N^{-1}\sum\limits_{i=1}^{N} P_{i}(s_{k})P_{i}(s_{l}) \quad &\text{ if } k \neq l
			\end{cases}
		\end{equation*}
		
		By algebra, one can show, 
		\begin{equation*}
			N[\boldsymbol{\hat{\Sigma}}_{N} - \boldsymbol{\bar{\Sigma}}_{N}]_{kl} = 
			\begin{cases}
				\sum\limits_{i=1}^{N} (P_{i}(s_{k}) - \bar{P}_{N}(s_{k}))^{2}\quad &\text{ if } k=l\\
				\sum\limits_{i=1}^{N} (P_{i}(s_{k}) - \bar{P}_{N}(s_{k}))(P_{i}(s_{l}) - \bar{P}_{N}(s_{l})) \quad &\text{ if } k \neq l
			\end{cases}
		\end{equation*}
		
		Notice that $N[\boldsymbol{\hat{\Sigma}}_{N} - \boldsymbol{\bar{\Sigma}}_{N}]_{kl}$ is a Gram Matrix $G_{kl} = <v_{k}, v_{l}>$ where the set of vectors are given by: 
		\begin{equation*}
			[v_{k}]_{i} = P_{i}(s_{k}) - \bar{P}_{N}(s_{k}).
		\end{equation*}
		A Gram Matrix is always positive semidefinite (Theorem 7.2.10 in \cite{horn2012matrix}), therefore $\boldsymbol{\hat{\Sigma}}_{N} - \boldsymbol{\bar{\Sigma}}_{N}$ is positive semidefinite as desired. $\square$
		
	\end{appendices}
	
	\newpage
	\bibliographystyle{econ-econometrica}
	\bibliography{/Users/xiaoyucheng/Dropbox/Research/References/mylibrary.bib}
	
\end{document}